\documentclass[conference]{IEEEtran}
\IEEEoverridecommandlockouts
\usepackage{cite}
\usepackage{amsfonts}
\usepackage{algorithmic}
\usepackage{textcomp}
\usepackage{xcolor}
\usepackage[utf8]{inputenc}
\usepackage[T1]{fontenc}
\usepackage{microtype}

\usepackage{verbatim} 
\usepackage{graphicx} 
\usepackage{url,hyperref}
\usepackage{listings}
\usepackage{array}

\usepackage{amsthm}
\usepackage{amsmath}
\usepackage{amssymb} 
\usepackage{multicol} 
\usepackage{lastpage}
\usepackage{caption}
\usepackage{comment}
\usepackage{xspace}
\usepackage{breakcites}
\usepackage{wrapfig}
\usepackage{relsize} 
\usepackage{bm}  

\usepackage{textgreek}
\usepackage{mathtools} 
\usepackage[inline]{enumitem}

\usepackage{flushend}

\setlength{\textfloatsep}{5pt} 

\newcommand{\textapprox}{\raisebox{0.5ex}{\texttildelow}}

\captionsetup[table]{skip=10pt, justification=justified, labelfont=bf}

\setlength{\tabcolsep}{0.5em} 


\makeatletter
\newcommand{\linebreakand}{%
  \end{@IEEEauthorhalign}
  \hfill\mbox{}\par
  \mbox{}\hfill\begin{@IEEEauthorhalign}
}
\makeatother

\lstset{
  aboveskip=1.5em,
  abovecaptionskip=0.5em,
  commentstyle=\color{gray}\textit,
  captionpos=b,
  language=python,
  numbers=left,
  numberstyle=\ttfamily\tiny,
  frame=tb,
  tabsize=3,	
  columns=flexible,
  mathescape,
  showstringspaces=false,
  keepspaces=true,
  escapeinside={(*}{*)},
  basicstyle=\ttfamily\small,
  xleftmargin=0.5cm,
  literate={~} {\textapprox}{1} 
    {chi}{\textchi}{1}
    {eta}{\texteta}{1}
    {sigma}{\textSigma}{1}
    {pi}{\textpi}{1}
    {alpha}{\textalpha}{1}
    {beta}{\textbeta}{1}
}

\lstdefinestyle{interfaces}{
  float=tp,
  floatplacement=tbp
}

\usepackage{url}

\newcommand{\eg}{\emph{e.g.,}\xspace}

\newcommand{\ie}{{\em i.e.,}\xspace}

\newcommand{\etc}{{\em etc.}\xspace}
\newcommand{\Fig}[1]{Fig.~\ref{fig:#1}} 

\newcommand{\Sec}[1]{\S\ref{sec:#1}\xspace}
\newcommand{\Eqn}[1]{Eqn. \ref{#1}\xspace}

\newcommand{\nop}[1]{}
\newcommand{\Lst}[1]{Listing ~\ref{lst:#1}}


\newcommand{\ithbit}[3]{#1_#2[#3]}
\newcommand{\ithbitx}[2]{#1[#2]}

\newcommand{\Obs}[1]{Observation~\ref{obs:#1}}
\newcommand{\Lemma}[1]{Lemma~\ref{#1}\xspace}
\newcommand{\OurTheorem}[1]{Theorem~\ref{#1}\xspace}
\newenvironment{parafont}{\fontfamily{ptm}\selectfont}{}
\theoremstyle{remark}

\newtheorem{theorem}{\textbf{Theorem}}
\newtheorem{lemma}[theorem]{\textbf{Lemma}}
\newtheorem{definition}[theorem]{\textbf{Definition}}
\newtheorem{observation}[theorem]{\textbf{Observation}}
\newcommand{\Para}[1]{\vspace{4pt}\noindent\begin{parafont}\fontsize{10.0}{12.00}\textbf{\textit{#1}}\end{parafont}}

\newcommand\delete[1]{}

\newcommand{\code}[1]{\texttt{\detokenize{#1}}}

\newcommand{\bitwiseAnd}{{\texttt{\&}}}
\newcommand{\bitwiseOr}{{\texttt{|}}}
\newcommand{\bitwiseXor}{\raisebox{.4ex}{\texttt{$\mathsmaller{\mathsmaller{\bigoplus}}$}}}
\newcommand{\bitwiseNot}{{\texttt{\textapprox}}}
\newcommand{\bitwiseRshift}{{\texttt{>{}>}}}
\newcommand{\bitwiseLshift}{\,\texttt{<{}<}\,}
\newcommand{\arithAdd}{\,\texttt{+}\,}

\newcommand{\arithMul}{\,\texttt{*}\,}

\newcommand{\opC}{\texttt{op}_{\mathbb{C}}}
\newcommand{\opT}{\texttt{op}_{\mathbb{T}}}

\newcommand{\proofheading}[1]{\noindent\textbf{#1}}
\newcommand{\proofsubheading}[1]{\noindent\textit{#1}}

\newcommand{\Tnumin}[2]{#1 \in \gamma(#2)}
\newcommand{\Tnumrshift}[2]{\code{tnum_rshift}(#1, #2)}
\newcommand{\Tnumlshift}[2]{\code{tnum_lshift}(#1, #2)}

\newcommand{\nbit}{$n$-bit\xspace}
\newcommand{\nth}[1]{#1^{th}\xspace}
\newcommand{\ibit}[1]{$#1$-bit\xspace}
\newcommand{\ntrit}{$n$-trit\xspace}

\newcommand{\ith}[1]{$#1^{th}$}

\newcommand{\GaliosConnectionGeneric}{
$(\mathbb{C}, \leqconc) \,
\substack{\xrightarrow[\text{$\;\alpha\;$}]{} \\[-1em] \\[-1em]
\xleftarrow[]{\text{$\;\gamma\;$}} } \,
(\mathbb{A}, \leqabst)$\xspace}

\newcommand{\GaliosConnectionIntegerToBitField}{$
(2^{\mathbb{Z}_n}
\substack{\xrightarrow[\text{$\;\alpha\;$}]{} \\[-1em] \\[-1em]
\xleftarrow[]{\text{$\;\gamma\;$}} } \; \mathbb{Z}_n \times
\mathbb{Z}_n)
$}

\newcommand{\tnumi}[1]{({#1}.{v}, {#1}.{m})}

\newcommand{\tvi}[1]{{#1}.{v}}
\newcommand{\tmi}[1]{{#1}.{m}}
\newcommand{\tnumival}[3]{({#1}.{v}={#2}, {#1}.{m}={#3})}
\newcommand{\wellformedPredicate}{wellformed}
\newcommand{\memberPredicate}{member}
\newcommand{\addPredicate}{add}

\newcommand{\bb}[2]{\mathbb{#1}_{#2}}

\newcommand{\ourmul}{\code{our_mul}\xspace}
\newcommand{\Ourmul}{\code{Our_mul}\xspace}
\newcommand{\ourmulsimplified}{\code{our_mul_simplified}\xspace}

\newcommand{\reghermul}{\code{bitwise_mul}\xspace}

\newcommand{\kernmul}{\code{kern_mul}\xspace}
\newcommand{\opone}{\texttt{op\textsubscript{1}}\xspace}
\newcommand{\optwo}{\texttt{op\textsubscript{2}}\xspace}

\newcommand{\ourmulfasterthankernmulby}{33\%\xspace}
\newcommand{\ourmulfasterthanreghermulby}{32\%\xspace}

\newcommand{\leqconc}{\sqsubseteq_{\mathbb{C}}}
\newcommand{\leqabst}{\sqsubseteq_{\mathbb{A}}}

\newcommand{\Zpowerset}{2^{\mathbb{Z}_n}}

\def\compactify{\itemsep=0pt \topsep=0pt \partopsep=0pt \parsep=0pt \leftmargin=12pt}
\let\latexusecounter=\usecounter
\newenvironment{CompactItemize}
  {\def\usecounter{\compactify\latexusecounter}
   \begin{itemize}}
  {\end{itemize}\let\usecounter=\latexusecounter}
\newenvironment{CompactEnumerate}
  {\def\usecounter{\compactify\latexusecounter}
   \begin{enumerate}}
  {\end{enumerate}\let\usecounter=\latexusecounter}

\newcommand{\percentageofourmulkernmulequalinputsatbitwidtheight}{$99.92$}
\newcommand{\Supplement}{extended technical report}

\begin{document}

\title{
  Sound, Precise, and Fast Abstract Interpretation with Tristate Numbers
}

\author{\IEEEauthorblockN{Harishankar Vishwanathan\IEEEauthorrefmark{1},
Matan Shachnai\IEEEauthorrefmark{2}, Srinivas Narayana\IEEEauthorrefmark{3} and
Santosh Nagarakatte\IEEEauthorrefmark{4}}
\IEEEauthorblockA{\textit{Rutgers University, USA}\\
Email: \IEEEauthorrefmark{1}harishankar.vishwanathan@rutgers.edu,
\IEEEauthorrefmark{2}mys35@cs.rutgers.edu,
\IEEEauthorrefmark{3}srinivas.narayana@rutgers.edu, \\
\IEEEauthorrefmark{4}santosh.nagarakatte@cs.rutgers.edu}}
\maketitle

\begin{abstract}
Extended Berkeley Packet Filter (BPF) is a language and
run-time system that allows non-superusers to extend the
Linux and Windows operating systems by downloading user code
into the kernel. To ensure that user code is safe to run in
kernel context, BPF relies on a static analyzer that proves
properties about the code, such as bounded memory access and
the absence of operations that crash. The BPF static
analyzer checks safety using abstract interpretation with
several abstract domains. Among these, the domain of tnums
(tristate numbers) is a key domain used to reason about
the bitwise uncertainty in program values.
This paper formally
specifies the tnum abstract domain and its arithmetic
operators. We provide the first proofs of soundness and
optimality of the abstract arithmetic operators for tnum
addition and subtraction used in the BPF analyzer.  Further,
we describe a novel sound algorithm for multiplication of
tnums that is more precise and efficient (runs
\ourmulfasterthankernmulby faster on average) than the Linux
kernel's algorithm. Our tnum multiplication is now merged in
the Linux kernel.

\begin{IEEEkeywords}
    Abstract domains, Program verification, Static analysis,
    Kernel extensions, eBPF
  \end{IEEEkeywords}    
\end{abstract}

\section{Introduction}

Static analysis is an integral part of
compilers~\cite{clang-analysis-toolset,
  gcc-and-static-analysis, testing-static-analysis-cgo20,
  gcc-static-analysis-options}, sandboxing
technologies~\cite{vera-pldi20, rocksalt-pldi12,
  untrusted-extensions-pldi19}, and continuous integration
testing~\cite{coverity-cacm10}. For example, static analysis
may be used to prove that the value of a program variable
will always be bounded by a known constant, allowing a
compiler to eliminate dead code~\cite{alive-infer-pldi17} or
a sandbox to remove an expensive run-time
check~\cite{vera-pldi20}.

Our work is motivated by static analysis in the context of
{\em Berkeley Packet Filter (BPF)}, a language and run-time
system~\cite{lwn-intro-to-ebpf, bpf-kernel-documentation}
that enables users to extend the functionality of the Linux
and Windows operating systems without writing kernel
code. BPF is widely deployed in production systems
today~\cite{netflix-netconf-day-1, pixie-labs,
  cloudflare-l4drop, katran-facebook-talk, cilium-lb,
  suricata, cilium}. BPF uses a static analyzer to validate
that user programs are {\em safe} before they are executed
in kernel context~\cite{bpf-kernel-documentation,
  untrusted-extensions-pldi19}: the analyzer must be able to
show that the program does not access unpermitted memory
regions, does not leak privileged kernel data, and does not
crash. If the analyzer is unable to prove these properties,
the user program is {\em rejected} and cannot execute in
kernel context.

BPF static analysis must be {\em sound, precise,} and {\em
  fast}.

\begin{CompactItemize}
\item {\em Soundness:} Unsound analysis that accepts
  malicious code may result in arbitrary read-write
  capabilities for users in the
  kernel~\cite{seriousness-of-range-tracking}. Unfortunately,
  the Linux static analyzer has been a source of numerous
  such bugs in the past~\cite{kernel-pwning-bpf,
    manfred-paul-privilege-escalation, get-rekt-hardened,
    alu64-shifts-bug, bpf-bounds-calculation-bug,
    bpf-bug-32bit-alu-ops-shifts,
    bpf-bug-incorrect-arsh-simulation,
    bpf-bug-precision-tracking-error, zero-shift-bug,
    bpf-integer-overflows, bpf-bug-check-stack-boundary,
    bpf-bug-force-stack-alignment-checks,
    state-type-comparison-pointer-leak-fix,
    bpf-bug-incorrect-argument-check}.
  \item {\em Precision: } To provide a usable system, the
    analyzer must not reject safe programs due to
    imprecision in its analysis. Users often need to rewrite
    their programs to get their code past the
    analyzer~\cite{untrusted-extensions-pldi19,
      doc-on-bpf-verifier-complexity-1,
      doc-on-bpf-verifier-complexity-2}.
  \item {\em Speed:} The analyzer must keep the time and
    overheads to load a BPF program
    minimal~\cite{bpf-kernel-documentation, bpf-design-qa,
      bpf-increase-complexity-limit}. Programs are often
    used to trace systems running heavy workloads.
\end{CompactItemize}

The BPF static analyzer employs abstract
interpretation~\cite{cousot-1977} with multiple abstract
domains to track the types, liveness, and values of program
variables across all executions. One of the key abstract
domains, termed {\em tristate numbers} or {\em tnums} in the
Linux kernel~\cite{tnum-kernel-source}, tracks which bits of
a value are known to be 0, known to be 1, or unknown
(denoted $\mu$) across executions. For example, a 4-bit
variable $x$ abstracted to $01\mu0$ can take on the binary
values $0100$ and $0110$. The analyzer can infer that the
expression $x \leq 8$ will always return $true$, and use
this fact later to show the safety of a memory access.

The kernel provides algorithms to implement bit-wise
operations such as and ($\bitwiseAnd$), or ($\bitwiseOr$),
and shifts ($\bitwiseLshift, \bitwiseRshift$) over
tnums. The kernel also provides efficient algorithms for
arithmetic (addition, subtraction, and multiplication) over
tnums. In particular, addition and subtraction run in $O(1)$
time over \nbit program variables given \nbit machine
arithmetic instructions.

Unfortunately, the kernel provides no formal reasoning or
proofs of soundness or precision of its algorithms. Prior
works that explored abstract domains for bit-level
reasoning~\cite{monniaux-device-driver-verification,
  mine-bitfield, regehr-deriving-abstract-transfer,
  dataflow-pruning-oopsla20, testing-static-analysis-cgo20,
  llvm-known-bits-analysis} provide sound and precise
abstract operators for bit-wise operations ($\bitwiseAnd,
\bitwiseOr, \bitwiseRshift$, \etc). The only arithmetic
algorithms we are aware
of~\cite{regehr-deriving-abstract-transfer} are much slower
than the kernel's algorithms (\Sec{background}). Arithmetic
operations are tricky to reason about as they propagate
uncertainty across bits in non-obvious ways.  For example,
suppose $a$ is known to be the \nbit constant $11\cdots1$
and $b$ is either $0$ or $1$ across all executions. Only one
bit is uncertain among the operands, yet all bits in $a+b$
are unknown, since $a + b$ can be either $11\cdots1$ or
$00\cdots0$.

This paper makes the following contributions
(\Sec{verification}). We provide the first proofs of {\em
soundness} and {\em optimality} (\ie maximal
precision~\cite{testing-static-analysis-cgo20,
mine-tutorial}) of the kernel's algorithms for addition and
subtraction. We believe this result is remarkable for
abstract operators exhibiting $O(1)$ run time and reasoning
about uncertainty across bits. We were unable to prove the
soundness of the kernel's tnum multiplication. Instead, we
present a novel multiplication algorithm that is provably
sound. It is also more precise and
\ourmulfasterthanreghermulby faster than prior
implementations~\cite{regehr-deriving-abstract-transfer,
tnum-kernel-source}. This algorithm is now merged into the
latest Linux kernels. Our reproducible artifact is publicly
available~\cite{vishwanathan-tnums-doi}.

\section{Background}
\label{sec:background}

The BPF static analyzer in the kernel checks the safety of BPF
programs by performing abstract interpretation using the tnum abstract
domain (among others).  In this section, we provide a primer on
abstract interpretation and describe the tnum abstract domain and its
operators.

\subsection{Primer on Abstract Interpretation}
Abstract interpretation \cite{cousot-1977} is a form of static
analysis that captures the values of program variables in all
executions of the program. Abstract interpretation employs {\em
  abstract values} and {\em abstract operators}. Abstract values are
drawn from an {\em abstract domain}, each element of which is a
concise representation of a set of concrete values that a variable may
take across executions.  For example, an abstract value from the
interval abstract domain~\cite{interval-domain} $\{[a, b] \mid a, b
\in \mathbb{Z}, a \leq b\}$ models the set of all concrete integer
values (\ie $x \in \mathbb{Z}$) such that $a \leq x \leq b$.

\Para{Abstraction and Concretization functions.} An {\em
  abstraction function $\alpha$} takes a concrete set and
produces an abstract value, while a {\em concretization
  function $\gamma$} produces a concrete set from an
abstract value. For example, the abstraction of the set
$\{ 2, 4, 5\}$ in the interval domain is $[2, 5]$, which
produces the set $\{2, 3, 4, 5\}$ when concretized.

Formally, the domains of the abstraction and concretization
functions are two partially-ordered sets (posets) that
induce a lattice structure. We denote the concrete poset
$\mathbb{C}$ with the ordering relationship among elements
$\leqconc$. Similarly, we denote the abstract poset
$\mathbb{A}$ with the ordering relationship $\leqabst$. For
example, the interval domain employs the concrete poset
$\mathbb{C} \triangleq 2^\mathbb{Z}$, the power set of
$\bb{Z}{}$, with the subset
relation $\subseteq$ (\eg $\{1, 2\} \subseteq \{1, 2, 3\}$)
as its ordering relation. The abstract poset is $\mathbb{A}
\triangleq \mathbb{Z} \times \mathbb{Z}$ with the ordering
relation $[a,b] \leqabst [c,d] \Leftrightarrow (c \leq a)
\wedge (d \geq b$).

A value $a \in \bb{A}{}$ is a {\em sound} abstraction of a
value $c \in \bb{C}{}$ if and only if $c \leqconc
\gamma(a)$. Moreover, $a$ is an {\em exact} abstraction of
$c$ if $c = \gamma(a)$.  Abstractions are often not exact,
over-approximating the concrete set to permit concise
representation and efficient analysis in the abstract
domain. For example, the interval $[2, 5]$ is a sound but
inexact abstraction of the set $\{2, 4, 5\}$.

\Para{Abstract operators} are functions over abstract values
which return abstract values. An abstract operator
implements an ``abstract version'' of a concrete operation
over concrete sets, hence enabling a static analysis to
construct the abstract results of program execution. For
example, abstract integer addition in the interval domain
(denoted $\mathbb{+}_{\mathbb{A}}$) abstracts concrete
integer addition (denoted $\mathbb{+}_{\mathbb{C}}$) as
follows: $ [a_1, b_1] \mathbb{+}_{\mathbb{A}} [a_2, b_2]
\triangleq [a_1 +_{\mathbb{C}} a_2, b_1 +_{\mathbb{C}}
  b_2]$. Abstract operators typically over-approximate the
resulting concrete set to enable decidable and fast analysis
at the expense of precision. For a concrete set $S \in
\bb{C}{}$, suppose we use the shorthand $f(S)$ to denote the
set $\{f(x)\ |\ x \in S\}$. An abstract operator $g:
\mathbb{A} \rightarrow \mathbb{A}$ is a {\em sound}
abstraction of a concrete operator $f: \mathbb{C}
\rightarrow \mathbb{C}$ if $\forall a \in \mathbb{A}:
f(\gamma(a)) \leqconc \gamma(g(a))$. Further, $g$ is {\em
  exact} if $\forall a \in A: f(\gamma(a)) = \gamma(g(a))$.

\Para{Galois connection.} Pairs of abstraction and
concretization functions $(\alpha, \gamma)$ are said to form
a Galois connection if~\cite{mine-tutorial}:
\begin{CompactEnumerate}
\item $\alpha$ is monotonic, \ie $x \leqconc
y \implies \alpha(x) \leqabst \alpha(y)$
\item $\gamma$ is monotonic, $a \leqconc
b \implies \gamma(a) \leqabst \gamma(b)$
\item $\gamma \circ \alpha$ is extensive, \ie $\forall c \in
\mathbb{C}: c \leqconc \gamma(\alpha(c))$
\item $\alpha \circ \gamma$ is reductive, \ie $\forall
a \in \mathbb{A}: \alpha(\gamma(a)) \leqabst a$
\end{CompactEnumerate}
The Galois connection is denoted as
\GaliosConnectionGeneric. The existence of a Galois
connection enables reasoning about the soundness and the
precision of any abstract operator. 

\Para{Optimality.} Suppose \GaliosConnectionGeneric is a
Galois connection. Given a concrete operator $f: \mathbb{C}
\rightarrow \mathbb{C}$, the abstract operator $\alpha \circ
f \circ \gamma$ is the smallest sound abstraction of $f$:
that is, for any sound abstraction $g : \bb{A}{} \rightarrow
\bb{A}{}$ of $f$, we have $\forall a \in \bb{A}{}:
\alpha(f(\gamma(a))) \leqabst g(a) $. We call $\alpha \circ
f \circ \gamma$ the {\em optimal}, or maximally precise
abstraction, of $f$.

\subsection{The Tnum Abstract Domain}

Tnums enable performing bit-level analysis by abstracting
each bit of a program variable separately.  Across
executions, each bit is either known to be 0, known to be 1,
or uncertain, denoted by $\mu$. For an \nbit program
variable, the abstract value corresponding to the variable
has $n$ ternary digits, or {\em trits}. Each trit has a
value of 0, 1, or $\mu$.

Bit-level abstract interpretation has been addressed in several prior
works using the bitfield abstract
domain~\cite{monniaux-device-driver-verification, mine-bitfield,
  regehr-deriving-abstract-transfer} and the known bits abstract
domain~\cite{dataflow-pruning-oopsla20, testing-static-analysis-cgo20,
  llvm-known-bits-analysis}. Abstraction and concretization functions
forming a Galois connection already exist~\cite{mine-bitfield}, as
well as sound and optimal abstract operators for bit-level operations
like bit-wise-and (\bitwiseAnd), bit-wise-or (\bitwiseOr), and shifts
(\bitwiseLshift, \bitwiseRshift)~\cite{testing-static-analysis-cgo20,
  mine-bitfield}. In contrast to prior work, this paper explores
provably sound, optimal, and computationally-efficient abstract
operators corresponding to {\em arithmetic operations} such as
addition, subtraction, and multiplication. The Linux kernel analyzer,
despite heavily leveraging this domain's abstract operations, formally
lays out neither the soundness nor optimality for the abstract
arithmetic operations.

\Para{Abstract and Concrete Domains.} Tnums track each bit
of variables drawn from the set of \nbit integers
$\mathbb{Z}_n$.
\begin{CompactItemize}
\item The concrete poset is $\mathbb{C} \triangleq
2^{\mathbb{Z}_n}$, the power set of $\mathbb{Z}_n$. The
ordering relation $\leqconc$ is the subset relation:

\begin{small}
\begin{align}
\label{eqn_concrete_ordering_relation}
\begin{split}
a \leqconc b \; \triangleq a \subseteq b
\end{split}
\end{align}
\end{small}
\item The abstract poset $\mathbb{A}$ is the set of \ntrit
tnums $\mathbb{T}_{n}$ (each trit is 0, 1, or $\mu$).
Suppose we represent the trit in the $\nth{i}$ position of
$a$ by $\ithbitx{a}{i}$. The ordering relation $\leqabst$
between abstract elements is defined by:

\begin{small}
\begin{align}
\label{eqn_tnum_ordering_relation}
\begin{split}
P \leqabst Q \; \triangleq & \; \forall i, 0 \leq i 
\leq n-1, \forall k \in \{0, 1\}: \\ 
& (\ithbitx{P}{i} = \mu \Rightarrow \ithbitx{Q}{i} = \mu) 
\wedge (\ithbitx{Q}{i} = k \Rightarrow \ithbitx{P}{i} = k)
\end{split}
\end{align}
\end{small}
\end{CompactItemize}

\Fig{lattices} shows Hasse diagrams of the lattices induced
by these posets for integers with bit width $n=2$. The
concrete domain consists of all elements of the power set of
$\{0, 1, 2, 3\}$ and the abstract domain consists of tnums
of the form $t_1 t_0$ where each $t_i$ is a trit with value
0, 1, or $\mu$. Any of $3^n$ abstract values can be
used to represent concrete sets of \nbit values.

\begin{figure}
\centering
\includegraphics[width=\columnwidth]{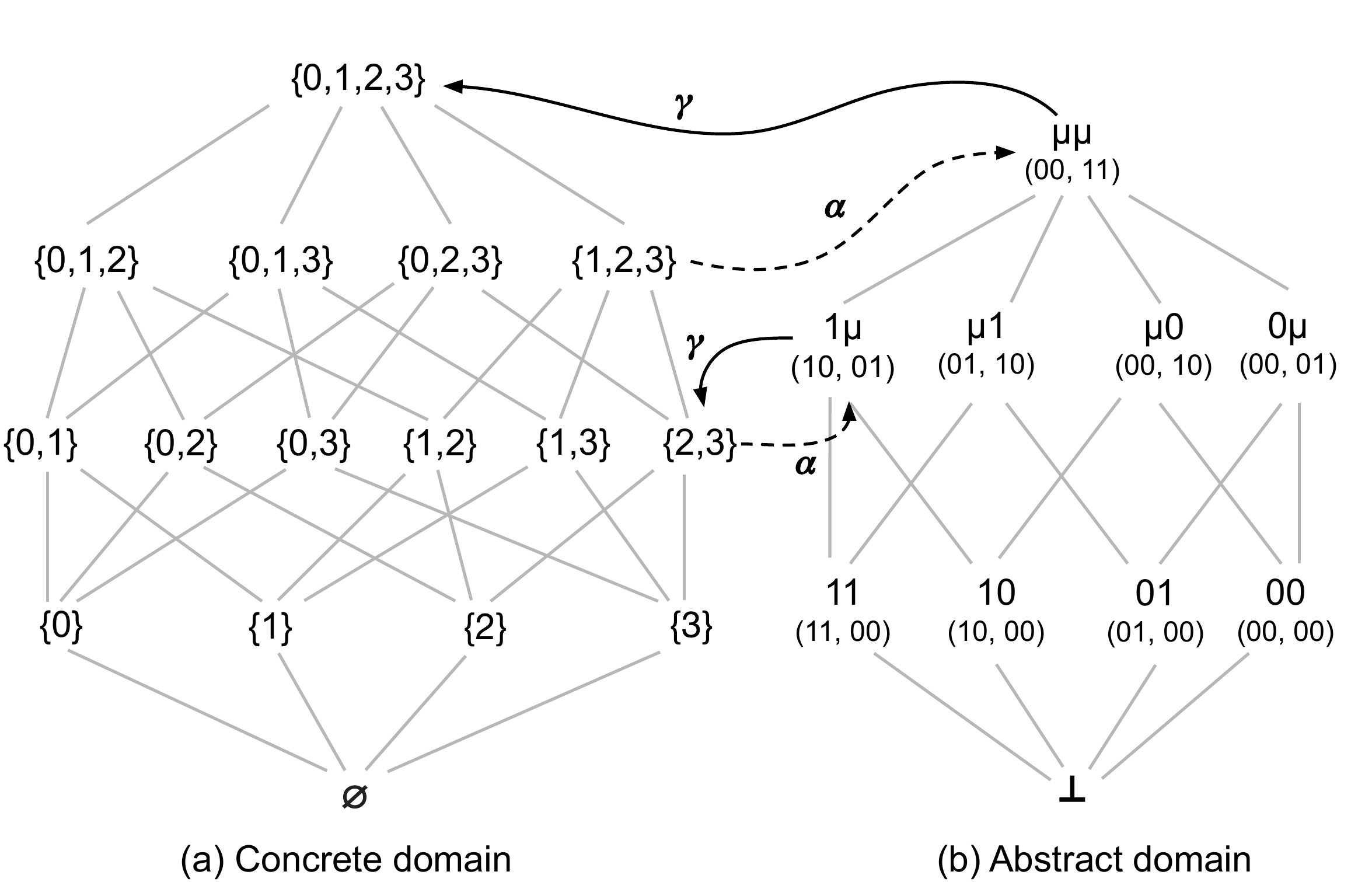}
\caption{\footnotesize Hasse diagrams of the lattices for (a) the
concrete domain $(2^{\bb{Z}{n}}, \subseteq)$ and (b) the
abstract domain $(\bb{T}{n}, \leqabst)$ for $n=2$. Below
every element of the abstract domain, we show its Linux
kernel implementation using two \ibit{2} values $(v, m)$.
Also shown are two examples of abstraction ($\alpha$,
dotted black lines) followed by concretization ($\gamma$,
solid black lines). (i) Starting with $C' = \{1, 2, 3 \}$,
$\alpha(C')$ gives $\mu\mu$, and $\gamma(\alpha(C'))$ gives
$\{0, 1, 2, 3\}$, an overapproximation of $C'$. (ii)
However, starting with $C'' = \{2, 3 \}$, $\alpha(C'')$
gives $1\mu$, and $\gamma(\alpha(C''))$ gives $\{2, 3\}$,
exactly equal to $C''$. In both cases, $C \leqconc
\gamma(\alpha(C))$.}
\label{fig:lattices}
\end{figure}

\Para{Implementation of tnums in the Linux kernel.} The
Linux kernel's implementation of representing one \ntrit
tnum $P \in \mathbb{T}_{n}$ uses two \nbit values
$\tnumi{P}$, where the `$v$' stands for {\em value} and the
`$m$' stands for {\em mask.} The values of the $\nth{k}$
bits of $\tvi{P}$ and $\tmi{P}$ are used to inform the value
of the $\nth{k}$ trit of $P$.

\begin{small}
\begin{align}
\label{eqn_tnumbasic}
\begin{split}
    (\ithbitx{\tvi{P}}{k} = 0 \wedge \ithbitx{\tmi{P}}{k} = 0) \;\; \triangleq \;\;  \ithbitx{P}{k} = 0 \\
    (\ithbitx{\tvi{P}}{k} = 1 \wedge \ithbitx{\tmi{P}}{k} = 0) \;\; \triangleq \;\;  \ithbitx{P}{k} = 1 \\
    (\ithbitx{\tvi{P}}{k} = 0 \wedge \ithbitx{\tmi{P}}{k} = 1) \;\; \triangleq \;\;  \ithbitx{P}{k} = \mu
\end{split}
\end{align}
\end{small}

We define the domain of abstract values $\mathbb{T}_{n}
\triangleq \mathbb{Z}_n \times \mathbb{Z}_n$. If for a tnum
$P$, $\ithbitx{\tvi{P}}{k} = \ithbitx{\tmi{P}}{k} = 1$ at
some position $k$, we say that such a tnum is not {\em
well-formed}. All such tnums represent the abstract value
$\bot$ and the concrete empty set $\varnothing$.

\begin{small}
\begin{align}
\label{eqn_bottom_wellformed}
\begin{split}
  \forall P: (\exists k: \;\; \ithbitx{\tvi{P}}{k} \;\;  \bitwiseAnd
  \;\; \ithbitx{\tmi{P}}{k} = 1) \Leftrightarrow \;\; P = \bot\\
\end{split}
\end{align}
\end{small}

A large fraction of random bit patterns $(v, m)$ aren't well
formed: in particular, only $3^n$ among the $2^{2n}$ \nbit
$(v,m)$ bit patterns correspond to well-formed tnums that
are not $\bot$. 

We are now ready to define the Galois connection for the
tnum abstract domain using the above implementation of
abstract values. These take a form similar to the functions
defined in prior work~\cite{mine-bitfield}. In the
discussion that follows we will use the notation
(\bitwiseAnd, \bitwiseOr, \bitwiseXor, \bitwiseNot,
\bitwiseLshift, \bitwiseRshift) respectively for the bitwise
and, or, exclusive-or, negation, left-shift, and right-shift
operations over \nbit bit vectors.

\Para{Galois connection.} Given a concrete set
$C \in 2^{\mathbb{Z}_n}$. The abstraction function $\alpha:
2^{\mathbb{Z}_n} \rightarrow \mathbb{Z}_n \times \mathbb{Z}_n$
is defined as follows.

\begin{align}
\begin{split}
\label{eqn_alpha}
\alpha_{\bitwiseAnd}(C) \; & \triangleq \; \bitwiseAnd  \big\{ c \mid c \in C \big\} \\
\alpha_{\bitwiseOr}(C) \; & \triangleq \; \bitwiseOr  \big\{ c \mid c \in C\big\} \\
\alpha(C) \; & \triangleq \; ( \; \alpha_{\bitwiseAnd}(C), \;\; \alpha_{\bitwiseAnd}(C) \; \bitwiseXor \; \alpha_{\bitwiseOr}(C) \; )
\end{split}
\end{align}
This abstraction function is sound. However, it is not
exact, as easily seen from the fact that there are $2^{2^n}$
elements in $\mathbb{C}$ but only $3^n$ well-formed tnums in
$\mathbb{T}_{n}$. Many concrete sets will be
over-approximated. However, $\alpha$ is a composition of
functions that abstract the domain exactly {\em when each
bit is considered
separately}~\cite{bitvector-dataflow-analysis-toplas94,
testing-static-analysis-cgo20}. Informally, given a concrete
set $C \in \bb{C}{}$ and $x, y \in C$, $\alpha(C)$ contains
an uncertain trit at position $k$ iff $C$ contains $x$ and
$y$ with bits differing at $k$.

\begin{align}
\begin{split}
\label{eqn_bitwise_exact}
\forall b \in \{0, 1\}: \;\; \ithbitx{\alpha(C)}{k} = b \;\;
\Leftrightarrow \;\;
\forall x \in C : \;\; \ithbitx{x}{k} = b \\
\ithbitx{\alpha(C)}{k} = \mu \;\; \Leftrightarrow \;\;
\exists x, y \in C : \;\; \ithbitx{x}{k} = 0 \wedge
\ithbitx{y}{k} = 1
\end{split}
\end{align}
This abstraction function $\alpha$ is {\em bitwise exact}.

Further, consider a tnum $P \in \bb{T}{n}$ implemented as
$\tnumi{P} \in \mathbb{Z}_n \times \mathbb{Z}_n.$ Then the
concretization function $\gamma: \mathbb{Z}_n \times
\mathbb{Z}_n \rightarrow 2^{\mathbb{Z}_n}$ is defined as:

\begin{align}
\begin{split}
\label{eqn_gamma}
\gamma(P) \; = \; \gamma(\tnumi{P}) \; & \triangleq \; \big\{ c \in \mathbb{Z} \mid c \; \bitwiseAnd \; \bitwiseNot \tmi{P} = \tvi{P} \big\} \\
\gamma(\bot) \; & \triangleq \; \varnothing
\end{split}
\end{align}
Then $\alpha$ and $\gamma$ form a Galois connection.
Informally, the tnum obtained from applying $\alpha$ on a
set of concrete values always soundly over-approximates the
original set if concretized. An illustration of this fact
can be seen in \Fig{lattices}. Please refer to the
\Supplement{}~\cite{vishwanathan-tnums-arxiv} for the
(standard) proof. The existence of the Galois connection
enables, in principle, constructing sound and optimal
abstract operators over tnums. The abstraction of the
concrete set $\{1, 2, 3\}$ soundly overapproximates it:
$\{1, 2, 3\} \leqconc \gamma(\alpha(\{1, 2, 3\}))$.

\Para{Abstract operators on tnums.} The BPF instruction set
supports the following (typical) concrete operations over
64-bit registers: \texttt{add, sub, mul, div, or, and, lsh,
rsh, neg, mod, xor} and \texttt{arsh}. To soundly analyze
general BPF programs, the BPF static analyzer requires
abstract operators corresponding to all the above concrete
operations. For some operators, notably \texttt{div} and
\texttt{mod}, defining a precise abstract operator is
challenging. In such cases, the BPF static analyzer
conservatively and soundly sets all the output trits to
unknown.

\Para{Challenges.}
Despite enjoying a Galois connection, constructing {\em
  efficient} optimal abstractions for arithmetic
operators is non-trivial. Given a concrete operator $f$, the
optimal abstract operator $\alpha \circ f \circ \gamma$ is
infeasible to compute in
practice~\cite{symbolic-best-transformer-vmcai04,
  symbolic-computation-of-abstraction-operations-cav12,
  posthat-and-all-that-tapas13}. For example, if $f$ is a
concrete operator of arity 2, there may be $2^{2n}$
computations of $f$ after the first concretization
$\gamma(.)$ in the worst case (the average case is not much
better).

Prior work on the bitfield domain~\cite{mine-bitfield}, a domain
similar to tnums \GaliosConnectionIntegerToBitField, presents abstract
operators for bitwise or, and, exclusive-or, left and right shift
operations that are \emph{optimal}. However, most prior works on the
bitfield and known bits abstract
domains~\cite{dataflow-pruning-oopsla20,
  testing-static-analysis-cgo20, llvm-known-bits-analysis,
  monniaux-device-driver-verification, mine-bitfield} fail to provide
abstract arithmetic operators for addition,
subtraction, and multiplication.  To our knowledge, Regehr and
Duongsaa~\cite{regehr-deriving-abstract-transfer} provide the only
known abstract operators for arithmetic in this domain, based on
ripple-carry logic and composition of abstract operators. These
operators are sound but not optimal. Further, they have a runtime of
$O(n)$ for \nbit abstract addition and subtraction, and $O(n^2)$ for
abstract multiplication.

In the next section, we present proofs of soundness and
optimality for abstract operators for addition and
subtraction originally developed (without formal proof) in
the Linux kernel. These operators run in $O(1)$ time given
\nbit machine arithmetic instructions ($n=64$ in the
kernel). Such efficiency is remarkable, given that in
general addition and subtraction use ripple-carry operations
creating dependencies between the bits. We also present an
abstract multiplication operator that is provably sound,
empirically more precise, and faster than the abstract
multiplication in \cite{regehr-deriving-abstract-transfer}
and the Linux kernel. Notably, none of the algorithms in
this paper use the composition structure $\alpha \circ f
\circ \gamma$ or ``merely'' compose existing sound abstract
operators. This motivated us to develop dedicated proof
techniques.

\section{Soundness and Optimality of Abstract Arithmetic
  over Tnums}
\label{sec:verification}

We explore the soundness and optimality of tnum arithmetic operators,
specifically addition, subtraction, and multiplication. 
The kernel proposes abstract operators for each of them, but lacks any
proof of soundness.  Hence, we perform an automated (bounded bitwidth)
verification of the soundness of the kernel's tnum abstract operators
(\Sec{automated-verification}) using SMT solvers. We were able to
prove the soundness of the kernel's abstract addition, subtraction,
and all other bitwise operators up to 64-bits, and soundness of the
kernel's multiplication up to 8-bits.  Motivated by these results, we
undertook an analytical study of these algorithms, which led us to
paper-and-pen proofs of both {\em soundness and optimality} of the
kernel's abstract operators for addition and subtraction over {\em
  unbounded bitwidths} (\Sec{unbounded-bitwidth-proofs}). We were
unable to analytically prove the soundness of the kernel's tnum
multiplication for unbounded bitwidths. Hence, we developed a new
algorithm for tnum multiplication that is provably sound for unbounded
bitwidths, and empirically more precise and faster than all prior
implementations (\Sec{alternative-tnum-multiplication}).

\subsection{Automatic Bounded Verification of Kernel Tnum
  Arithmetic} 
\label{sec:automated-verification}
We encode verification conditions corresponding to the soundness of
tnum abstract arithmetic operators in first order logic and discharge
them to a solver. We use the theory of bitvectors. Our verification
conditions are specific to a particular bitwidth ($n$).
We use \ibit{64}
bitvectors to encode the tnum operations wherever feasible ($n=64$ in
the kernel). For a tnum $P$ drawn from the set of \ntrit tnums
$\bb{T}{n}$, we denote its kernel implementation by $\tnumi{P} \in
\bb{Z}{n} \times \bb{Z}{n}$.

\Para{Soundness of 2-ary operators.} Recall from Section
\Sec{background} the notion of soundness of an abstract
operator.  We can generalize this notion to 2-ary operators
$\opT: \bb{T}{n} \times \bb{T}{n} \rightarrow \bb{T}{n}$ and
$\opC: \bb{Z}{n} \times \bb{Z}{n} \rightarrow \bb{Z}{n}$. We
say that $\opT$ is a sound abstraction of $\opC$ iff the
following condition (\Eqn{eqn_soundness}) holds.

\begin{small}
\begin{align}
\begin{split}
\label{eqn_soundness} 
&\forall P, Q \in \bb{T}{n}: \\
& \Big\{  \opC(x, y) \mid x \in \gamma(P), 
y \in \gamma(Q) \Big\}
\;  \leqconc \;
\gamma(\opT(P, Q))
\end{split}
\end{align}
\end{small}

To encode \eqref{eqn_soundness} in first-order logic,
recall that the concrete order $\leqconc$ is just the subset
relationship between the two sets $\subseteq$. At a high
level, the subset relationship $S_1 \subseteq S_2$ in
\eqref{eqn_soundness} can be encoded by universally
quantifying over the members of $S_1$ and writing down the
query $\forall x \in \bb{Z}{n}: x \in S_1 \Rightarrow x \in
S_2$. The formula $x \in S_1$ is easy to encode given the
left-hand side of \eqref{eqn_soundness}.
To encode $x \in S_2$ from the right-hand side of
\eqref{eqn_soundness}, we define a {\em membership
predicate}. This predicate asserts that $x \in \gamma(R)$
where $R \triangleq \opT(P, Q)$.
Finally, we ensure that the universally quantified tnums $P$
and $Q$ are non-empty, and encode the action of the concrete
and abstract operators $\opC$ and $\opT$ in logic. The
details follow.

\Para{Membership predicate $x \in \gamma(P).$} Consider a
concrete value $x$ that is contained in the concretization
of tnum $P$. Using the definition of the concretization
function in \eqref{eqn_gamma}, we write down the predicate
$\memberPredicate{}$: 

\begin{small}
\begin{align}
\label{eqn_mem}  
\memberPredicate(x, P) \; \triangleq \; x \; \bitwiseAnd \; \bitwiseNot \tmi{P} = \tvi{P}
\end{align}
\end{small}
\Para{Quantifying over well-formed tnums.} To ensure that
\eqref{eqn_soundness} only quantifies over non-empty tnums,
we encode one more predicate, $\wellformedPredicate$, based
on \eqref{eqn_bottom_wellformed}:

\begin{small}
\begin{align}
\label{eqn_wff}
\wellformedPredicate(P) \triangleq \, \tvi{P} \, \bitwiseAnd \, \tmi{P} = 0
\end{align}
\end{small}
\Para{Putting it all together.} The soundness
predicate for a given pair of abstract and concrete
operators $\opT, \opC$ is

\begin{small}
\begin{align}
\begin{split}
\label{eqn_soundness_2} 
\forall & P, Q \in \bb{T}{n}, x, y \in \bb{Z}{n} \; : 
\\ & \wellformedPredicate(P) \; \wedge \wellformedPredicate(Q) \; \wedge \memberPredicate(x, P) 
\\ & \; \wedge \memberPredicate(y, Q) \; \wedge z = \opC(x, y) \; \wedge R = \opT(P, Q) \;
\\ & \Rightarrow \memberPredicate(z, R)
\end{split}
\end{align}
\end{small}

An SMT solver can show the validity of this formula by
proving that the negation of this formula is unsatisfiable. 

\Para{Example: encoding abstract tnum addition.} 
We show how to encode the soundness of the abstract addition
operator over tnums.  The kernel uses the algorithm
\code{tnum_add} from \Lst{tnum_add} to perform abstract
addition over two tnums. The predicate $\addPredicate$ below
captures the result of abstract addition of $P$ and $Q$ into
$R$.

\begin{small}
\begin{align}
\begin{split}
\label{eqn_add_predicate}
add&( P, Q, R) \triangleq \; \\
   & (sv = \tvi{P} \, \arithAdd \, \tvi{Q}) 
      \land (sm = \tmi{P} \,\arithAdd \, \tmi{Q})  \land (\Sigma = sv \,\arithAdd \, sm) \\
   &  \land (\chi = \Sigma \, \bitwiseXor \, sv) \land (\eta = \chi \, \bitwiseOr \, \tmi{P} \, \bitwiseOr \,\tmi{Q} ) 
      \land (\tvi{R} = sv  \, \bitwiseAnd \, \bitwiseNot \, \eta) \\
   & \land (\tmi{R} = \eta) 
\end{split} 
\end{align}
\end{small}

We can plug in the $\addPredicate$ predicate in place of
$\opT$ in \Eqn{eqn_soundness_2}. The function $\opC$ is just
\nbit bitvector addition.

\begin{lstlisting}[
caption={ {\footnotesize Linux kernel's implementation of tnum addition 
(\code{tnum_add})}}, label={lst:tnum_add}]
def tnum_add(tnum P, tnum Q):

	u64 sv := P.v + Q.v
	u64 sm := P.m + Q.m
	u64 sigma := sv + sm
	u64 chi := sigma (*$\bitwiseXor$*) sv
	u64 eta := chi | P.m | Q.m	
	tnum R := tnum(s(*$\textsubscript{v}$*) & ~eta, eta)	
	return R
\end{lstlisting}
\begin{lstlisting}[caption={ {\footnotesize Linux 
kernel's implementation of tnum multiplication 
(\kernmul)}}, label={lst:kern_mul}]
def  $\kernmul$(tnum P, tnum Q)

	tnum pi := tnum(P.v * Q.v, 0)
	tnum ACC := hma(pi, P.m, Q.m | Q.v)
	tnum R:= hma(ACC, Q.m, P.v)
	return R
	
def hma(tnum ACC, u64 x, u64 y)

	while (y):
		if (y(*$\textsubscript{[0]}$*) == 1)
			ACC := tnum_add(ACC, tnum(0, x))
		y := y >> 1
		x := x << 1
	return ACC
\end{lstlisting}

\Para{Observations from bounded verification.} We encoded
the first-order logic formulas to perform bounded
verification of the soundness of the following tnum
operators defined in the Linux kernel: addition,
subtraction, multiplication, bitwise or, bitwise and,
bitwise exclusive-or, left-shift, right-shift, and
arithmetic right-shift. We have spot-checked the correctness
of our encodings with respect to the kernel source code
using randomly-drawn tnum inputs; the details of this
testing harness are in our
\Supplement{}~\cite{vishwanathan-tnums-arxiv}.

For all operators except multiplication, verification
  succeeded for bitvectors of width $64$ in just a few
  seconds. In contrast, verification of multiplication
  (\kernmul), shown in \Lst{kern_mul}, succeeds quickly at
  bitwidth $n=8$, but does not complete even after 24 hours
  with bitwidth $n=16$. This is due to the presence of
  non-linear operations and large unrolled loops. This
  observation motivated us to develop a new, provably sound
  algorithm for tnum multiplication
  (\Sec{alternative-tnum-multiplication}).

Further, our bounded verification efforts helped us uncover
non-obvious properties of tnum arithmetic: (1) tnum
addition is not associative, (2) tnum addition and
subtraction are not inverse operations, and (3) tnum
multiplication is not commutative.

\subsection{Soundness and Optimality of Tnum Abstract Addition}
\label{sec:unbounded-bitwidth-proofs}
\label{sec:tnum-add-proof-maintext}

We present an analytical proof of the soundness and
optimality of the kernel's abstract addition operator for
unbounded bitwidths.  The proof for subtraction, which is
very similar in structure, is in our
\Supplement{}~\cite{vishwanathan-tnums-arxiv}.

\begin{figure}[t]
  \centerline{\includegraphics[width=\columnwidth]{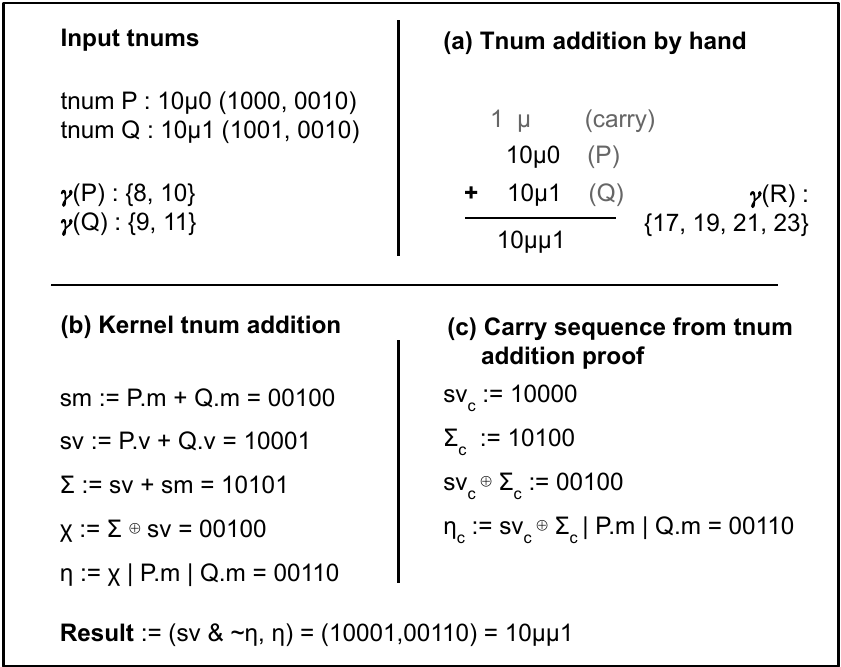}}
  \caption{\footnotesize Illustration of tnum addition. We provide a side by side comparison of (a) tnum addition by hand and (b) the kernel algorithm for tnum addition as well as (c) the carry sequence in the operation as discussed in the proof of tnum addition.}
  \label{fig:our_tnum_add}
\end{figure}

\Para{An example.} The source code for abstract addition
(\code{tnum_add}) is shown in \Lst{tnum_add}.
Figure~\ref{fig:our_tnum_add} illustrates tnum addition with
an example. In particular, adding two tnums ``by hand'', as
shown in \Fig{our_tnum_add}(a), propagates uncertainty
explicitly in the carries, rippling the carry bits through
the tnums one bit position at a time. However, as seen in
\Fig{our_tnum_add}(b), \code{tnum_add} does not use any such
ripple-carry structure in its computations. Yet, as we show
later (and illustrated in \Fig{our_tnum_add}(c)),
\code{tnum_add} implicitly reasons about the unknown bits in
the sequence of carries produced during the addition.

\begin{definition}
\label{addition-of-bits}
\textbf{Full adder equations.}
When adding two concrete binary numbers $p$ and $q$, each
bit of the addition result $r$ is set according to the
following:
$$r[i] = p[i] \,\bitwiseXor\, q[i] \,\bitwiseXor\,
\ithbit{c}{{in}}{i}$$ where $\bitwiseXor$ is the
exclusive-or operation and $\ithbit{c}{{in}}{i} =
\ithbit{c}{{out}}{i-1}$ and $\ithbit{c}{{out}}{i-1}$ is the
carry-out from the addition in bit position $i-1$. The
carry-out bit at the $i^{th}$ position is given by
$$\ithbit{c}{{out}}{i} = (p[i] \,\bitwiseAnd\, q[i])
\,\bitwiseOr\, (\ithbit{c}{{in}}{i} \,\bitwiseAnd\, (p[i]
\,\bitwiseXor\, q[i]))$$
\end{definition}

\Para{Key proof technique.} We show the soundness and
optimality of \code{tnum_add} by reasoning about the set of
all possible {\em concrete outputs}, \ie the results of
executions of concrete additions over elements of the input
tnums $P, Q \in \bb{T}{}$.
If we denote by $+$ the concrete addition operator over
$\bb{Z}{n}$, this is the set $\{p + q \mid p \in \gamma(P)
\wedge q \in \gamma(Q)\}$ or $+(\gamma(P), \gamma(Q))$ in
short.
The proof proceeds by finding bit positions in the concrete
output set that can be shown to be either a 1 or a 0 in {\em
  all members} of that set (respectively lemmas
\ref{lemma:minimum-carries-lemma} and
\ref{lemma:maximum-carries-lemma}). Every other bit position
is such that there are elements in the concrete output set
that differ at that bit position.
\Lemma{lemma:capture-uncertainty-lemma} invokes the bitwise-exactness
(\Eqn{eqn_bitwise_exact}) of the abstraction function $\alpha$, and
along with \Lemma{thm:add-soundness-maximal-precision}, shows that
\code{tnum_add} is a sound and optimal abstraction for $+$
(\ie the same as $\alpha\ \circ\ +\ \circ\ \gamma$).

Consider the addition that occurs ``by hand'' in
\Fig{our_tnum_add}(a). Intuitively, at a given bit position
of the output tnum, the result will be unknown if either of
the operand bits $\ithbitx{p}{i}$ or $\ithbitx{q}{i}$ is
unknown, or if the carry-in bit $\ithbitx{c_{in}}{i}$
(generated from less-significant bit positions) is
unknown. Note that these three bits may be (un)known
independent of each other since they depend on different
parts of the input tnums. The crux of the proof lies in
identifying which carry-in bit positions vary across
different concrete additions.  This is done by
distinguishing the carries generated due to the unknown bits
in the operands from the carries that will be present or
absent in {\em any} concrete addition drawn from the input
tnums. In the example in \Fig{our_tnum_add}(a), the sequence
of carries is $10\mu00$, with the middle carry-in bit being
uncertain and all others known to be 0s or 1s in all
concrete additions from the input tnums.

Suppose $p$ and $q$ are two concrete values in tnum $P$
and tnum $Q$, respectively, \ie $p \in \gamma(P), q \in
\gamma(Q).$

\begin{lemma}
\label{lemma:minimum-carries-lemma}
\textbf{Minimum carries lemma.} The addition $sv = P.v +
Q.v$ will produce a sequence of carry bits that has the
least number of 1s out of all possible additions $p +
q$.
\end{lemma}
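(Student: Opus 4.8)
The plan is to prove a statement slightly stronger than the stated count: that the carry sequence produced by $\tvi{P} + \tvi{Q}$ is \emph{bitwise dominated} by the carry sequence of every concrete addition $p + q$ with $p \in \gamma(P)$ and $q \in \gamma(Q)$ --- that is, at every position where $\tvi{P} + \tvi{Q}$ produces a carry, so does $p + q$. Bitwise domination immediately implies the dominated sequence has no more $1$s than any other, which is exactly the claim. The reason to aim for the stronger statement is that the number of $1$-carries cannot be compared position-by-position in isolation --- carries ripple, so the carry-out of a bit depends on the carry-in from lower positions --- whereas a domination invariant composes cleanly through this dependency via induction.

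First I would record two ingredients. The first is that $\tvi{P}$ lies bitwise below every element of $\gamma(P)$: by \eqref{eqn_tnumbasic} and well-formedness, $\ithbitx{\tvi{P}}{i} = 1$ forces $\ithbitx{\tmi{P}}{i} = 0$, and then the membership condition \eqref{eqn_mem}, $p \bitwiseAnd \bitwiseNot \tmi{P} = \tvi{P}$, forces $\ithbitx{p}{i} = 1$; at the $\mu$-positions we have $\ithbitx{\tvi{P}}{i} = 0$, so $\ithbitx{\tvi{P}}{i} \le \ithbitx{p}{i}$ holds at every $i$, and symmetrically $\ithbitx{\tvi{Q}}{i} \le \ithbitx{q}{i}$. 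The second is that the carry-out of Definition~\ref{addition-of-bits}, being the majority $(\ithbitx{p}{i} \bitwiseAnd \ithbitx{q}{i}) \bitwiseOr (\ithbit{c}{{in}}{i} \bitwiseAnd (\ithbitx{p}{i} \bitwiseXor \ithbitx{q}{i}))$ of its three input bits, is monotone: raising any one of $\ithbitx{p}{i}$, $\ithbitx{q}{i}$, $\ithbit{c}{{in}}{i}$ from $0$ to $1$ can only raise the carry-out or leave it fixed.

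Then I would induct on the bit position $i$ from the least-significant bit upward, writing $c^{v}[i]$ and $c^{pq}[i]$ for the carry-out at bit $i$ in the additions $\tvi{P} + \tvi{Q}$ and $p + q$, and maintaining the invariant $c^{v}[i] \le c^{pq}[i]$. The base case uses $\ithbit{c}{{in}}{0} = 0$ for both additions, so the two carry-outs at bit $0$ are majorities with equal carry-in and dominated operand bits, and monotonicity gives domination. In the inductive step, the hypothesis $c^{v}[i-1] \le c^{pq}[i-1]$ is exactly the domination of the carry-in at bit $i$; together with $\ithbitx{\tvi{P}}{i} \le \ithbitx{p}{i}$, $\ithbitx{\tvi{Q}}{i} \le \ithbitx{q}{i}$, and monotonicity of the carry-out, this yields $c^{v}[i] \le c^{pq}[i]$. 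Counting $1$s over all positions then gives the count comparison, and since $p$ and $q$ were arbitrary members of $\gamma(P)$ and $\gamma(Q)$, the carry sequence of $\tvi{P} + \tvi{Q}$ has the least number of $1$s.

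The main obstacle is simply keeping the domination correctly threaded through the carry-in, since that is where the inter-bit dependency lives; once the invariant $c^{v}[i] \le c^{pq}[i]$ is phrased as a statement about carry-outs (equivalently, carry-ins one position higher), the induction is routine. A secondary subtlety is the indexing of the top/overflow carry position, but the domination argument applies uniformly there and is unaffected by whether that position is counted.
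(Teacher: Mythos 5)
Your proof is correct, and its skeleton --- strengthening the counting claim to bitwise domination of carry sequences and pushing that invariant through the carry chain by induction on bit position --- is exactly the skeleton of the paper's proof, which likewise shows that any concrete $p+q$ produces a carry at every position where $sv = P.v + Q.v$ does. Where you genuinely differ is in how the inductive step is discharged. The paper proceeds by exhaustive case analysis on $(P.v[i], Q.v[i])$: the cases $0/0$ and $1/1$ are handled directly (vacuously, respectively by noting that both concrete bits are then forced to $1$), and the mixed case splits again on the previous carry of the $sv$ addition. You instead isolate two structural facts --- that $P.v$ is bitwise dominated by every $p \in \gamma(P)$ (and symmetrically for $Q$), and that the full-adder carry-out, being the majority of its three input bits, is monotone --- after which the inductive step is a one-line application of monotonicity. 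Your route buys brevity and robustness: there is no case explosion, the argument is uniform across all bit positions including the top/overflow position, and it would survive any replacement of the carry function by another monotone function. What the paper's case analysis buys is self-containedness at the bit level: each case is checked directly against the full-adder equations, with no need to recognize the carry-out expression $(p[i] \,\bitwiseAnd\, q[i]) \,\bitwiseOr\, (c_{in}[i] \,\bitwiseAnd\, (p[i] \,\bitwiseXor\, q[i]))$ as a majority or to argue its monotonicity (the expression as written contains an exclusive-or, which is not itself monotone, so this identification is a real, if small, proof obligation you should state explicitly). One minor simplification: your first ingredient needs only the membership condition $p \,\bitwiseAnd\, \bitwiseNot\, P.m = P.v$, since at any bit where $P.v[i] = 1$ that condition by itself forces $p[i] = 1$ (and forces $P.m[i] = 0$); the separate appeal to well-formedness is harmless but unnecessary.
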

The consequence of this lemma is that {\em any} concrete addition $p$
+ $q$ will produce a sequence of carry bits with 1s in at least those
positions where the $sv$ addition produced carry bits set to 1
(the \Supplement{} provides a proof of this lemma).
\Fig{our_tnum_add}(c) shows
the set of carries produced in $sv$ (\ie $sv_c \triangleq 10000$). Any
addition $p + q$ will produce a 1-bit carry in the same positions as
the 1 bits in $sv_c$.

\begin{lemma}
\label{lemma:maximum-carries-lemma}
\textbf{Maximum carries lemma.} The addition $\Sigma = (P.v + P.m) + (Q.v +
Q.m)$ will produce the sequence of carry bits with the most number of 1s out of
all possible additions $p$ + $q$. 
\end{lemma}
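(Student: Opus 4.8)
The plan is to prove a slightly stronger \emph{domination} statement that directly implies ``the most number of 1s'': the carry sequence produced by the addition $\Sigma = (\tvi{P} + \tmi{P}) + (\tvi{Q} + \tmi{Q})$ has a carry-out set to $1$ in every position where \emph{any} concrete addition $p + q$ (with $p \in \gamma(P)$ and $q \in \gamma(Q)$) has a carry-out of $1$. This is the exact dual of the consequence the excerpt draws from \Lemma{lemma:minimum-carries-lemma}, and it immediately gives the counting claim, since a superset of carry positions has at least as many $1$s.

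First I would pin down the two summands. Well-formedness, $\tvi{P} \bitwiseAnd \tmi{P} = 0$ (\eqref{eqn_wff}), means the value and mask have disjoint support, so $a \triangleq \tvi{P} + \tmi{P} = \tvi{P} \bitwiseOr \tmi{P}$ is exactly the member of $\gamma(P)$ obtained by setting every uncertain (mask) bit to $1$. By the definition of $\gamma$ (\eqref{eqn_gamma}), every $p \in \gamma(P)$ agrees with $\tvi{P}$ on the known bits and is $0$ or $1$ on the mask bits; hence $\ithbitx{p}{i} \le \ithbitx{a}{i}$ at every position $i$, i.e. $a$ \emph{bitwise-dominates} every concrete value of $P$. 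The same reasoning gives $b \triangleq \tvi{Q} + \tmi{Q}$ as the bitwise-maximal element of $\gamma(Q)$, and $\Sigma = a + b$.

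The crux is a monotonicity-of-carries lemma, proved by recognizing the full-adder carry-out of Definition~\ref{addition-of-bits} as a majority function: a one-line case analysis shows $(x[i] \bitwiseAnd y[i]) \bitwiseOr (\ithbit{c}{{in}}{i} \bitwiseAnd (x[i] \bitwiseXor y[i])) = \mathrm{MAJ}(x[i], y[i], \ithbit{c}{{in}}{i})$, the function that returns $1$ iff at least two of its inputs are $1$. Since $\mathrm{MAJ}$ is monotone in each argument, I would induct on the bit position to show that bitwise domination of the summands propagates to the carries. The base case uses the standard initial condition $\ithbit{c}{{in}}{0} = 0$ in both additions, so the carry-out reduces to $x[0] \bitwiseAnd y[0]$, which is monotone; the inductive step uses $\ithbit{c}{{in}}{i} = \ithbit{c}{{out}}{i-1}$ together with the induction hypothesis to feed three dominated inputs through the monotone $\mathrm{MAJ}$, yielding $c^{p+q}_{\mathrm{out}}[i] \le c^{a+b}_{\mathrm{out}}[i]$.

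Chaining these, every $p \in \gamma(P)$ and $q \in \gamma(Q)$ is bitwise-dominated by $a$ and $b$, so the carry sequence of $p + q$ is bitwise-dominated by that of $\Sigma = a + b$, which is exactly the claim. I expect the single load-bearing step to be the majority-function reformulation of the carry-out: once that is in place, monotonicity and the induction are routine, and the very same argument applied to the bitwise-\emph{minimal} summands $\tvi{P}, \tvi{Q}$ reproves \Lemma{lemma:minimum-carries-lemma}, so the two carry lemmas are genuinely dual. The one subtlety worth flagging is that $\Sigma$ must be read as the carry sequence of the specific addition $a + b$: although its integer value equals the kernel's $sv + sm$, a carry sequence is a property of the chosen summands, and it is the $a + b$ decomposition --- the bitwise-maximal operands --- on which the domination argument depends.
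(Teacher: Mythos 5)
Your proof is correct, and it takes a genuinely different route from the paper's. The paper proves the lemma by induction on bit position with a direct case analysis on the tnum bit patterns: it splits on the values of $\tvi{P}[i] + \tmi{P}[i]$ and $\tvi{Q}[i] + \tmi{Q}[i]$, with subcases on the carry-in, and establishes the contrapositive form $c_{\Sigma}[i] = 0 \Rightarrow c_{out}[i] = 0$ for every concrete addition $p + q$. You instead factor the argument into two reusable pieces: (i) by well-formedness, $\tvi{P} + \tmi{P} = \tvi{P} \bitwiseOr \tmi{P}$ is itself a member of $\gamma(P)$ and bitwise-dominates every member, so $\Sigma$ is the addition of the two bitwise-maximal concrete operands; and (ii) the full-adder carry-out is the majority function $\mathrm{MAJ}(x[i], y[i], c_{in}[i])$, hence monotone, so bitwise domination of operands propagates to bitwise domination of the entire carry sequence by a one-line induction. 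Both proofs actually establish the same positionwise-domination statement (stronger than the ``most number of 1s'' phrasing, and what \Lemma{lemma:capture-uncertainty-lemma} really needs). What your approach buys: a single monotonicity lemma proves both this lemma and \Lemma{lemma:minimum-carries-lemma} as exact duals (the paper runs two separate multi-case inductions, and its appendix version of this one contains several typos of the ``$\tvi{Q}[i] + \tvi{Q}[i]$'' variety that a case-free argument cannot commit), and the monotonicity fact is more general, applying to any pair of bitwise-dominating operands. What the paper's approach buys: it stays entirely in terms of the quantities appearing in \code{tnum_add} and needs no auxiliary characterization of $\gamma(P)$'s extremal elements. Your closing caveat --- that the carry sequence is a property of the chosen decomposition $(\tvi{P}+\tmi{P}) + (\tvi{Q}+\tmi{Q})$, not of the integer value of $\Sigma$ --- is well taken and consistent with how both the lemma statement and the paper's proof read $\Sigma$.
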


The consequence of this lemma is that {\em any} concrete addition $p$
+ $q$ will produce a sequence of carry bits with 0s in at least those
positions where the $\Sigma$ addition produced carry bits set to 0
(proof is available in our \Supplement{}).
\Fig{our_tnum_add}(c) shows the set of carries produced in $\Sigma$
(\ie $\Sigma_c \triangleq 10100$). Any addition $p + q$ will produce a
0-bit carry in the same positions as the 0 bits in $\Sigma_c$.

\begin{lemma}
\label{lemma:capture-uncertainty-lemma}
\textbf{Capture uncertainty lemma.}  Let $sv_{c}$ and
$\Sigma_{c}$ be the sequence of carry-in bits from the
additions in $sv$ and $\Sigma$, respectively. Suppose
$\chi_c \triangleq sv_{c} \,\bitwiseXor\, \Sigma_{c}$. The
bit positions $k$ where $\ithbitx{\chi_c}{k} = 0$ have carry
bits fixed in all concrete additions $p+q$ from
$+(\gamma(P), \gamma(Q))$. The bit positions $k$ where
$\ithbitx{\chi_c}{k} = 1$ vary depending on the concrete
addition: \ie $\exists p_1, p_2 \in \gamma(P), q_1, q_2 \in
\gamma(Q)$ such that $p_1 + q_1$ has its carry bit set at
position $k$ but $p_2 + q_2$ has that bit unset.
\end{lemma}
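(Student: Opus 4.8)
The plan is to read off the conclusion from the two preceding carry lemmas, reasoning position by position. The first step is to record the bitwise ordering they establish. By \Lemma{lemma:minimum-carries-lemma}, the carry-in sequence $c$ of \emph{any} concrete addition $p+q$ with $p \in \gamma(P)$ and $q \in \gamma(Q)$ dominates $sv_c$ (that is, $\ithbitx{sv_c}{k} = 1 \Rightarrow \ithbitx{c}{k} = 1$), and by \Lemma{lemma:maximum-carries-lemma} it is dominated by $\Sigma_c$ (that is, $\ithbitx{\Sigma_c}{k} = 0 \Rightarrow \ithbitx{c}{k} = 0$). Since $sv$ is itself a concrete addition, taking $p = \tvi{P}$ and $q = \tvi{Q}$, applying the maximum lemma to it gives $sv_c \le \Sigma_c$ bitwise; thus no position has $\ithbitx{sv_c}{k} = 1$ while $\ithbitx{\Sigma_c}{k} = 0$, which makes the later case split exhaustive.

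Next I would handle the positions where $\ithbitx{\chi_c}{k} = 0$, i.e. $\ithbitx{sv_c}{k} = \ithbitx{\Sigma_c}{k}$, by splitting on their common value. If both equal $0$, the maximum lemma forces $\ithbitx{c}{k} = 0$ in every concrete addition; if both equal $1$, the minimum lemma forces $\ithbitx{c}{k} = 1$ in every concrete addition. In either case the carry-in bit at $k$ is constant across $+(\gamma(P), \gamma(Q))$, which is exactly the first claim.

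For the positions where $\ithbitx{\chi_c}{k} = 1$, the bitwise ordering $sv_c \le \Sigma_c$ forces $\ithbitx{sv_c}{k} = 0$ and $\ithbitx{\Sigma_c}{k} = 1$, and I would exhibit explicit witnesses realizing both carry values. Taking $p_2 = \tvi{P}$, $q_2 = \tvi{Q}$ reproduces the $sv$ addition, whose carry-in sequence is $sv_c$, so $p_2 + q_2$ has carry $0$ at $k$; taking $p_1 = \tvi{P} + \tmi{P}$, $q_1 = \tvi{Q} + \tmi{Q}$ reproduces the $\Sigma$ addition, whose carry-in sequence is $\Sigma_c$, so $p_1 + q_1$ has carry $1$ at $k$. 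It then remains to check these four operands lie in the appropriate concretizations: since $P$ and $Q$ are well-formed ($\tvi{P} \bitwiseAnd \tmi{P} = 0$), both $\tvi{P}$ and $\tvi{P} + \tmi{P} = \tvi{P} \bitwiseOr \tmi{P}$ satisfy $c \bitwiseAnd \bitwiseNot \tmi{P} = \tvi{P}$ and hence lie in $\gamma(P)$, and symmetrically for $Q$. This gives the second claim.

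I expect the main obstacle to be bookkeeping rather than conceptual, since the carry lemmas do the heavy lifting. The real care lies in (i) justifying the bitwise inequality $sv_c \le \Sigma_c$ so that the two-way case analysis on $\ithbitx{\chi_c}{k} = 0$ is genuinely exhaustive, and (ii) verifying that the extremal operands $\tvi{P}$ and $\tvi{P} + \tmi{P}$ (and their $Q$ counterparts) really are members of the concretizations and that their additions reproduce the carry sequences $sv_c$ and $\Sigma_c$ verbatim. Both steps lean on the well-formedness assumption that we quantify only over non-empty tnums, but neither requires a new idea.
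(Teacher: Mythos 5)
Your proposal is correct and follows essentially the same route as the paper's proof: a position-by-position case analysis on $(\ithbitx{sv_c}{k}, \ithbitx{\Sigma_c}{k})$, using the minimum and maximum carries lemmas to fix the carry when the two agree, and exhibiting the extremal additions $\tvi{P}+\tvi{Q}$ and $(\tvi{P}+\tmi{P})+(\tvi{Q}+\tmi{Q})$ as witnesses when they differ. Your derivation of the impossibility of $\ithbitx{sv_c}{k}=1 \wedge \ithbitx{\Sigma_c}{k}=0$ (by applying the maximum carries lemma to $sv$ viewed as a concrete addition) is in fact a slightly crisper justification than the paper's bare assertion of that case, but it is the same argument in substance.
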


Intuitively, from the minimum carries lemma, any carry bit that is set
in $sv_c$ must be set in the sequence of carry bits in any concrete
addition $p + q$.  Similarly, from the maximum carries lemma, any
carry bit that is unset in $\Sigma_c$ must be unset in the sequence of
carry bits in any concrete addition $p + q$. Hence, $sv_{c}
\,\bitwiseXor\, \Sigma_{c}$ represents the carries that may arise
purely from the uncertainty in the concrete operands picked from $P$
and $Q$. Further, these carries do in fact differ in two concrete
additions $sv$ and $\Sigma$. From the bitwise-exactness of the tnum
abstraction function $\alpha$ (\Eqn{eqn_bitwise_exact}), it follows
that these are precisely the bits that must be unknown in the
resulting tnum due to the carries. See the \Supplement{} for a
detailed proof.

Hence, the mask in the resulting tnum must be $(sv_c\
\bitwiseXor\ \Sigma_c) \,\bitwiseOr\, P.m \,\bitwiseOr\,
Q.m$. However, \code{tnum_add} uses the final mask $(sv\
\bitwiseXor\ \Sigma) \,\bitwiseOr\, P.m \,\bitwiseOr\, Q.m$
(see \Lst{tnum_add}).
\Lemma{thm:add-soundness-maximal-precision} shows that these
two quantities are, in fact, always the same.

\begin{lemma}
\label{thm:add-soundness-maximal-precision}
\textbf{Equivalence of mask expressions.}
The expressions $(sv
\,\bitwiseXor\, \Sigma) \,\bitwiseOr\, P.m \,\bitwiseOr\,
Q.m$ and $(sv_{c} \,\bitwiseXor\, \Sigma_{c}) \,\bitwiseOr\,
P.m \,\bitwiseOr\, Q.m$ compute the same result.
\end{lemma}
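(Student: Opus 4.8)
The plan is to reduce both candidate masks to a common form by exploiting a single algebraic identity relating a binary sum to its carry sequence. First I would establish, directly from the full adder equations of Definition~\ref{addition-of-bits}, that the carry-in sequence of \emph{any} concrete addition $a \arithAdd b = s$ is given bitwise by $a \bitwiseXor b \bitwiseXor s$: rearranging $s[i] = a[i] \bitwiseXor b[i] \bitwiseXor c_{in}[i]$ gives $c_{in}[i] = a[i] \bitwiseXor b[i] \bitwiseXor s[i]$ at every position (the base case $c_{in}[0]=0$ being consistent). Applying this to the two additions named in Lemmas~\ref{lemma:minimum-carries-lemma} and~\ref{lemma:maximum-carries-lemma} yields $sv_{c} = P.v \bitwiseXor Q.v \bitwiseXor sv$ and $\Sigma_{c} = (P.v \arithAdd P.m) \bitwiseXor (Q.v \arithAdd Q.m) \bitwiseXor \Sigma$, where $\Sigma$ is the common sum value (equal whether grouped as $sv \arithAdd sm$ in the code or as $(P.v \arithAdd P.m)\arithAdd(Q.v \arithAdd Q.m)$ in the lemma, since integer addition is associative and commutative).

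Next I would invoke well-formedness of the input tnums. Because $P.v \bitwiseAnd P.m = 0$ and $Q.v \bitwiseAnd Q.m = 0$, the operands of each inner sum have disjoint set bits, so no carries are ever generated and addition coincides with exclusive-or: $P.v \arithAdd P.m = P.v \bitwiseXor P.m$, and likewise for $Q$. Substituting these into $\Sigma_{c}$ and cancelling the repeated $P.v$ and $Q.v$ terms (using $x \bitwiseXor x = 0$) leaves $sv_{c} \bitwiseXor \Sigma_{c} = (sv \bitwiseXor \Sigma) \bitwiseXor P.m \bitwiseXor Q.m$. Thus the precise mask and the kernel's mask differ only by an extra exclusive-or with $P.m \bitwiseXor Q.m$ applied to $sv \bitwiseXor \Sigma$.

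The final step is a bitwise absorption argument: for any bitvectors $X, M$ and any $Y$ whose set bits lie within those of $M$ (that is, $Y \bitwiseAnd \bitwiseNot M = 0$), one has $(X \bitwiseXor Y) \bitwiseOr M = X \bitwiseOr M$, since at positions where $M$ is $1$ both sides are $1$, and at positions where $M$ is $0$ we have $Y = 0$, so the exclusive-or is inert. I would instantiate this with $X = sv \bitwiseXor \Sigma$, $Y = P.m \bitwiseXor Q.m$, and $M = P.m \bitwiseOr Q.m$, observing that $(P.m \bitwiseXor Q.m) \bitwiseAnd \bitwiseNot(P.m \bitwiseOr Q.m) = 0$ because every bit set in $P.m \bitwiseXor Q.m$ is already set in $P.m$ or $Q.m$. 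This collapses $(sv_{c} \bitwiseXor \Sigma_{c}) \bitwiseOr P.m \bitwiseOr Q.m$ exactly to $(sv \bitwiseXor \Sigma) \bitwiseOr P.m \bitwiseOr Q.m$, which is the claim.

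I expect the only genuine subtlety to be the carry-to-exclusive-or identity together with keeping straight that $\Sigma$ is the lemma's grouping $(P.v \arithAdd P.m)\arithAdd(Q.v \arithAdd Q.m)$ while the code produces the equal value as $sv \arithAdd sm$; since $c_{in} = a \bitwiseXor b \bitwiseXor s$ depends only on the operands and the final sum, it transfers to either grouping without change. The disjoint-operand substitution $a \arithAdd b = a \bitwiseXor b$ must be justified from well-formedness rather than assumed. Everything after those two observations is routine bitwise reasoning, and it is precisely the absorption identity that explains why the kernel's cheaper mask $(sv \bitwiseXor \Sigma) \bitwiseOr P.m \bitwiseOr Q.m$ is provably equal to the optimal one derived from the true carry difference $sv_{c} \bitwiseXor \Sigma_{c}$.
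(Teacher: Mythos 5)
Your proof is correct and takes essentially the same route as the paper's: both rearrange the full-adder equation to express the carry sequences as operand-sum exclusive-ors (so that $(sv \,\bitwiseXor\, \Sigma) \,\bitwiseXor\, (sv_{c} \,\bitwiseXor\, \Sigma_{c})$ collapses, via well-formedness $P.v \,\bitwiseAnd\, P.m = 0$ and $Q.v \,\bitwiseAnd\, Q.m = 0$, to $P.m \,\bitwiseXor\, Q.m$), and then observe that this difference is absorbed by the disjunction with $P.m \,\bitwiseOr\, Q.m$. Your explicit absorption identity is just a cleaner packaging of the paper's closing remark that wherever $P.m \,\bitwiseXor\, Q.m = 1$ we also have $P.m \,\bitwiseOr\, Q.m = 1$, so the choice between the two mask expressions is immaterial.
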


We prove this lemma using the rules of propositional logic
in our \Supplement{}. Together, these lemmas allow us to
show the soundness and optimality of \code{tnum_add} below.

\begin{theorem}
\label{thm:soundness-optimality-of-tnum-add}
\textbf{Soundness and optimality of tnum\_add} The algorithm
\code{tnum_add} shown in \Lst{tnum_add} is a sound and
optimal abstraction of concrete addition over \nbit
bitvectors for unbounded $n$.
\end{theorem}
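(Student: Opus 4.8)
The plan is to prove the sharper statement that \code{tnum_add} computes \emph{exactly} the optimal abstraction of concrete addition, i.e.\ that the tnum $R$ it returns on well-formed inputs $P, Q$ satisfies $R = \alpha(S)$, where $S \triangleq +(\gamma(P),\gamma(Q))$ is the set of all concrete sums. Soundness is then immediate from the extensive property $c \leqconc \gamma(\alpha(c))$ of the Galois connection, since $S \leqconc \gamma(\alpha(S)) = \gamma(R)$, and optimality holds by the definition of the optimal abstraction. So the entire theorem reduces to establishing the trit-by-trit identity $R = \alpha(S)$.

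First I would fix a bit position $k$ and apply the full-adder equation (Definition~\ref{addition-of-bits}) to write every concrete sum bit as $r[k] = p[k] \bitwiseXor q[k] \bitwiseXor c_{in}[k]$ for $p \in \gamma(P)$, $q \in \gamma(Q)$. By the bitwise exactness of $\alpha$ (\eqref{eqn_bitwise_exact}), $\alpha(S)$ carries $\mu$ at position $k$ exactly when $r[k]$ takes both values over $S$, and otherwise records the single common value. Hence it suffices to determine precisely when $r[k]$ varies. Its three inputs vary independently: $p[k]$ varies iff $P.m[k]=1$; $q[k]$ varies iff $Q.m[k]=1$; and, by the capture-uncertainty lemma (Lemma~\ref{lemma:capture-uncertainty-lemma}), $c_{in}[k]$ varies iff $\chi_c[k]=1$, where $\chi_c \triangleq sv_c \bitwiseXor \Sigma_c$.

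The crux is to show $r[k]$ varies iff at least one of these three bits varies. For the forward direction I would produce explicit witnesses by case analysis. If $P.m[k]=1$, pick $p_1,p_2 \in \gamma(P)$ that agree on all bits below $k$ but differ at $k$, holding $q$ fixed; since the carry into position $k$ depends only on bits $0,\dots,k-1$, it is identical for both inputs, so flipping $p[k]$ flips $r[k]$ (symmetrically when $Q.m[k]=1$). If instead $\chi_c[k]=1$ while $P.m[k]=Q.m[k]=0$, the operand bits at $k$ are forced constants, so the two additions furnished by Lemma~\ref{lemma:capture-uncertainty-lemma} — which differ in the carry at $k$ — flip $r[k]$ directly. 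Conversely, if all three bits are fixed, then $p[k]=P.v[k]$, $q[k]=Q.v[k]$, and (since $\chi_c[k]=0$ forces the carry at $k$ to be constant, and $P.v+Q.v$ is itself a member of $S$) $c_{in}[k]=sv_c[k]$; thus $r[k]=P.v[k] \bitwiseXor Q.v[k] \bitwiseXor sv_c[k]=sv[k]$ is constant. This last computation also pins the common value at every known position to be exactly $sv[k]$.

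Assembling the two directions, the mask of $\alpha(S)$ is $P.m \bitwiseOr Q.m \bitwiseOr \chi_c = P.m \bitwiseOr Q.m \bitwiseOr (sv_c \bitwiseXor \Sigma_c)$, and its value agrees with $sv$ at every known position. I would then invoke Lemma~\ref{thm:add-soundness-maximal-precision} to rewrite this mask as $(sv \bitwiseXor \Sigma) \bitwiseOr P.m \bitwiseOr Q.m$, which is precisely the $\eta$ computed by \code{tnum_add}; and note that \code{tnum_add} sets $R.v = sv \bitwiseAnd \bitwiseNot \eta$, which reproduces $sv$ on the known positions and zeroes the $\mu$ positions as required by the well-formedness convention (\eqref{eqn_tnumbasic}). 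This yields $R = \alpha(S)$ and completes the proof. The main obstacle is the witness construction in the forward direction — in particular, justifying that an operand bit at position $k$ can be toggled while the \emph{entire} carry-in at $k$ is held fixed, which rests on carries propagating only toward more significant positions, and dovetailing the operand-uncertainty and carry-uncertainty cases without gaps or double-counting. Once that dependency structure is nailed down, everything else is the bookkeeping supplied by the equivalence lemma.
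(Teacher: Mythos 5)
Your proposal is correct and takes essentially the same route as the paper: it establishes $R = \alpha\big(+(\gamma(P),\gamma(Q))\big)$ trit-by-trit via the capture-uncertainty lemma, the bitwise exactness of $\alpha$ (\eqref{eqn_bitwise_exact}), and the mask-equivalence lemma, concluding soundness and optimality from the Galois connection. Your explicit witness constructions (toggling an unknown operand bit while holding the carry-in fixed, and invoking the two additions supplied by Lemma~\ref{lemma:capture-uncertainty-lemma} when only the carry varies) are precisely the details the paper's one-sentence appendix proof leaves implicit.
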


\subsection{Sound and Efficient Tnum Abstract Multiplication}
\label{sec:alternative-tnum-multiplication}

This section describes a novel algorithm for tnum
multiplication and a proof that it is a sound abstraction of
multiplication of \nbit concrete values for unbounded
$n$. Our algorithm has $O(n)$ run time. It is not an optimal
abstraction of concrete multiplication. However, as we show
later (\Sec{evaluation}), our algorithm is empirically more
precise and faster than all known prior implementations of
multiplication in this abstract domain. We were able to
contribute our algorithm to the tnum implementation in the
latest Linux kernel. 

\begin{lstlisting}[caption={\footnotesize A simplified implementation of
  our tnum multiplication algorithm
(\ourmulsimplified).}, label={lst:our_mul_simplified}]
def $\ourmulsimplified$(tnum P, tnum Q):

	ACC(*$\textsubscript{v}$*) := tnum(0, 0)
	ACC(*$\textsubscript{m}$*) := tnum(0, 0)

	# loop runs bitwidth times
	for i in range(0, bitwidth):
		# LSB of tnum P is a certain 1
		if (P.v(*$\textsubscript{[0]}$*) == 1) and (P.m(*$\textsubscript{[0]}$*) == 0):
		  ACC(*$\textsubscript{v}$*) := tnum_add(ACC(*$\textsubscript{v}$*), tnum(Q.v, 0))
		  ACC(*$\textsubscript{m}$*) := tnum_add(ACC(*$\textsubscript{m}$*), tnum(0, Q.m))
		# LSB of tnum P is uncertain
		else if  (P.m(*$\textsubscript{[0]}$*) == 1):
		  ACC(*$\textsubscript{m}$*) := tnum_add(ACC(*$\textsubscript{m}$*), tnum(0, Q.v|Q.m))
		# Note: no case for LSB is certain 0
		P := tnum_rshift(P, 1)
		Q := tnum_lshift(Q, 1)

	tnum R := tnum_add(ACC(*$\textsubscript{v}$*), ACC(*$\textsubscript{m}$*))
	return R
\end{lstlisting}

\Para{Our algorithm \ourmul through an example.}
\label{sec:our-mul-algorithm} Our tnum abstract
multiplication algorithm is shown in \Lst{our_mul}.  The
algorithm in \Lst{our_mul_simplified} is semantically
equivalent to it, but easier to understand, so we explain
the algorithm and its proof primarily using the algorithm in
\Lst{our_mul_simplified}.

Similar to the prior multiplication algorithms proposed in
bit-level reasoning
domains~\cite{regehr-deriving-abstract-transfer,
  tnum-kernel-source}, our algorithm is inspired by the long
multiplication method to generate the product of two binary
values.  The algorithm proceeds in a single loop iterating
over the bitwidth of the input tnums. \Ourmul takes two
input tnums $P$ and $Q$, and returns a result $R$.

\Fig{ourmul_ex}(a) shows an example. Suppose we are given
tnums $P = \mu01\ \tnumival{P}{001}{100}$ and $Q =
\mu10\ \tnumival{Q}{010}{100}$ to multiply. Two fully
concrete \nbit binary numbers may be multiplied in two
steps: (i) by computing the products of each bit in the
multiplier ($P$) with the multiplicand ($Q$), to generate
$n$ {\em partial products,} and (ii) adding the $n$ partial
products after appropriately bit-shifting them. To
generalize long multiplication to \nbit tnums which contain
unknown ($\mu$) trits, we add new rules: $0 * \mu = 0; 1 *
\mu = \mu;$ and $\mu * \mu = \mu$. Since the partial
products themselves contain unknown trits, the addition of
the partial products must occur through the abstract
addition operator \code{tnum_add}.

\begin{lstlisting}[caption={ \footnotesize Our final tnum 
multiplication algorithm
(\ourmul).}, label={lst:our_mul}]
def  $\ourmul$(tnum P, tnum Q):

	ACC(*$\textsubscript{v}$*) := tnum(P.v * Q.v, 0)
	ACC(*$\textsubscript{m}$*) := tnum(0, 0)

	while P.value or P.mask:
		# LSB of tnum P is a certain 1
		if (P.v(*$\textsubscript{[0]}$*) == 1) and (P.m(*$\textsubscript{[0]}$*) == 0):
		  ACC(*$\textsubscript{m}$*) := tnum_add(ACC(*$\textsubscript{m}$*), tnum(0, Q.m))
		# LSB of tnum P is uncertain
		else if  (P.m(*$\textsubscript{[0]}$*) == 1):
		  ACC(*$\textsubscript{m}$*) := tnum_add(ACC(*$\textsubscript{m}$*), tnum(0, Q.v|Q.m))
		# Note: no case for LSB is certain 0
		P := tnum_rshift(P, 1)
		Q := tnum_lshift(Q, 1)
  
	tnum R := tnum_add(ACC(*$\textsubscript{v}$*), ACC(*$\textsubscript{m}$*))
	return R
\end{lstlisting}

\begin{figure}[t]
  \centerline{\includegraphics[width=\columnwidth]{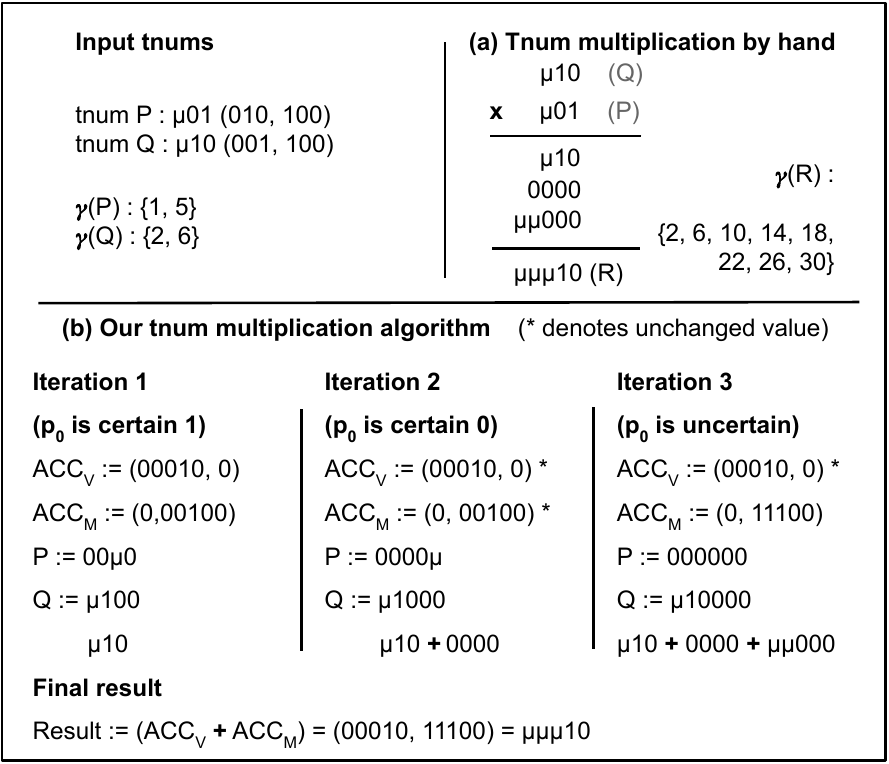}}
  \caption{\footnotesize Illustration of our tnum multiplication. We provide a side by 
  side comparison of (a) tnum multiplication by hand and (b) our improved 
  algorithm for tnum multiplication.}
  \label{fig:ourmul_ex}
 \end{figure}

Our tnum multiplication algorithm for the same pair of
inputs is illustrated in \Fig{ourmul_ex}(b). The algorithm
uses two tnums, $ACC_V$ and $ACC_M$, which are initialized
$(v, m) \triangleq (0,0)$. The tnums $ACC_V$ and $ACC_M$
accumulate abstract partial products generated in each
iteration using tnum abstract additions (\code{tnum_add}).
The algorithm proceeds as follows:
\begin{CompactEnumerate}
\item If the least significant bit of any concrete value $x
  \in \gamma(P)$ is known to be 1, then $ACC_V$
  (resp. $ACC_M$) accumulates the known bits (resp. unknown
  bits) in $Q$ (\eg iteration 1);
\item If the least significant bit of any $x \in \gamma(P)$
  is known to be 0, $ACC_V$ and $ACC_M$ remain unchanged
  (\eg iteration 2);
\item If the least significant bit of P is unknown ($\mu$),
  then $ACC_V$ is unchanged, but $ACC_M$ accumulates a tnum
  with a mask such that all possible bits that may be set in
  any $x \in \gamma(P)$ are also set in the mask (\eg 
  iteration 3).
\end{CompactEnumerate}
At the end of each iteration, $P$ (resp. $Q$) is bit-shifted
to the right (resp. left) by 1 position to ensure that the
next partial product is appropriately shifted before
addition.  The specific methods of updating $ACC_V$ and
$ACC_M$ in each iteration make \ourmul distinct from prior
multiplication
algorithms~\cite{regehr-deriving-abstract-transfer,
  tnum-kernel-source}. In particular, \ourmul decomposes the
accumulation of partial products into two tnums and uses
just a single loop over the bitwidth. These modifications
are crucial to the precision and efficiency of \ourmul
(\Sec{evaluation}).

\Para{Key proof techniques.}  Recall that a (unary) abstract
operator $g$ is a sound abstraction of a concrete operator
$f$ if $\forall a \in \bb{A}{}: f(\gamma(a)) \leqconc
\gamma(g(a))$. We show that our abstract multiplication
algorithm \ourmul is sound by showing that $\{x * y \mid x
\in \gamma(P) \wedge y \in \gamma(Q)\} \leqconc
\gamma(\ourmul(P, Q))$ for any tnums $P, Q \in \bb{T}{n}$.
We denote the former set $*(\gamma(P), \gamma(Q))$ in
short. The $*$ is the concrete multiplication over \nbit
bitvectors.

All the known abstract multiplication algorithms in this
domain are composed of abstract additions and abstract
shifts. A typical approach to prove soundness of such
operators is to invoke the result that when sound abstract
operators are composed soundly, \ie in the same way as the
corresponding concrete operators are composed, the result is
a sound abstraction of the composed concrete
operator~\cite[Theorem 2.6]{mine-tutorial}. The soundness of
the abstract multiplication from Regehr and
Duongsaa~\cite{regehr-deriving-abstract-transfer} may be
proved as a special case of this general result. However,
this approach is not applicable to proving the soundness of
\ourmul, since \ourmul's composition does not mirror any
composition of (concrete) additions and shifts to produce a
product.  Instead, we are forced to develop a proof
specifically for \ourmul by observing, through two
intermediate lemmas
(\Lemma{lemma:value-mask-decoupled-additions} and
\Lemma{lemma:union-tnum-with-zero}) that the concrete
products in $*(\gamma(P), \gamma(Q)) \in \gamma(\ourmul(P,
Q))$.

Below, we show a sketch of the proof of the soundness of
\ourmulsimplified, and argue
(\Lemma{lemma:strength-reduction-equivalence}) that \ourmul
is equivalent to \ourmulsimplified.

\begin{observation}
\label{obs:partial-products}
For two concrete bitvectors $x$ and $y$ of width $n$ bits,
the result of multiplication $y * x$ is just
$$ y * x = \sum_{k=0}^{n-1} \ithbitx{x}{k} * (y
\,\bitwiseLshift\, k)$$
\end{observation}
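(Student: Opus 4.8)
The plan is to reduce this identity to the binary expansion of the bitvector $x$ together with the semantics of the left-shift operator, so that the whole statement becomes the familiar partial-products (long-multiplication) decomposition. First I would write $x$ in terms of its bits, $x = \sum_{k=0}^{n-1} \ithbitx{x}{k} \cdot 2^k$, which is just the base-2 representation of an \nbit value; here each $\ithbitx{x}{k} \in \{0, 1\}$. Substituting this into the left operand of the product and distributing the multiplication over the sum gives
$$ y * x \; = \; y \cdot \sum_{k=0}^{n-1} \ithbitx{x}{k} \cdot 2^k \; = \; \sum_{k=0}^{n-1} \ithbitx{x}{k} \cdot (y \cdot 2^k). $$
The remaining step is to recognize that multiplying $y$ by $2^k$ is exactly a left shift of $y$ by $k$ positions, i.e., $y \cdot 2^k = y \bitwiseLshift k$, which follows directly from the definition of the bitvector left-shift operator. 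Matching this against the right-hand side of the claimed identity then finishes the argument.

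The one point requiring care is the modular nature of \nbit arithmetic: both concrete multiplication $*$ and the left shift $\bitwiseLshift$ over $\bb{Z}{n}$ are computed modulo $2^n$. I would therefore establish the identity first over the integers $\mathbb{Z}$, where it is exact, and then reduce both sides modulo $2^n$. Since reduction modulo $2^n$ is a ring homomorphism from $\mathbb{Z}$ onto $\bb{Z}{n}$, it commutes with the sum and with the products; in particular $(y \cdot 2^k) \bmod 2^n$ is precisely $y \bitwiseLshift k$, and because each $\ithbitx{x}{k}$ is $0$ or $1$ the factor $\ithbitx{x}{k}$ passes through the reduction unchanged. Thus the integer identity descends to the bitvector identity in the statement, with the summation on the right-hand side understood as the \nbit sum that also wraps modulo $2^n$.

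I expect no genuine obstacle here: the observation is simply the standard binary long-multiplication decomposition, and the only real bookkeeping is being explicit that the left-shift semantics and the modular reduction of \nbit arithmetic are mutually consistent, so that $y \bitwiseLshift k$ may be treated as $y \cdot 2^k$ modulo $2^n$. The value of stating it separately is forward-looking rather than technical: this decomposition into one shifted partial product per bit position $k$ is exactly the structure that the multiplication algorithm \ourmul accumulates, and it is what the subsequent soundness lemmas (\Lemma{lemma:value-mask-decoupled-additions} and \Lemma{lemma:union-tnum-with-zero}) build on when reasoning about the abstract product.
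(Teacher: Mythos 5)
Your proof is correct: the paper itself offers no proof of this observation (it is stated without argument both in the main text and in the supplementary material, being treated as the self-evident long-multiplication decomposition), and your write-up spells out exactly the standard reasoning the authors implicitly rely on --- binary expansion of $x$, distributivity, and the identification $y \cdot 2^k = y \bitwiseLshift k$. Your extra care about reduction modulo $2^n$ being a ring homomorphism is a sound and worthwhile piece of rigor that the paper omits, but it does not constitute a different approach.
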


We call each term $\ithbitx{x}{k} \, *\, (y
\,\bitwiseLshift\,k)$ a {\em partial product.}

\begin{lemma}
\label{lemma:union-tnum-with-zero}
\textbf{Tnum set union with zero.} Given a non-empty tnum $P
\in \bb{Z}{n} \times \bb{Z}{n}$, define $Q \triangleq
tnum(0, P.v \mid P.m)$. Then, (i) $x \in \gamma(P)
\Rightarrow x \in \gamma(Q)$, and (ii) $0 \in \gamma(Q)$.
\end{lemma}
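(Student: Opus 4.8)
The plan is to unfold both claims directly through the concretization definition \eqref{eqn_gamma} and its membership form \eqref{eqn_mem}, reducing everything to elementary bit-vector algebra. Recall that for a well-formed tnum $R$, $x \in \gamma(R)$ holds iff $x \bitwiseAnd \bitwiseNot \tmi{R} = \tvi{R}$. Since $\tnumi{Q} = (0, \tvi{P} \bitwiseOr \tmi{P})$, I would first check that $Q$ is well-formed, so that $\gamma(Q)$ is given by the non-$\bot$ branch of \eqref{eqn_gamma}: indeed $\tvi{Q} \bitwiseAnd \tmi{Q} = 0 \bitwiseAnd (\tvi{P} \bitwiseOr \tmi{P}) = 0$, satisfying \eqref{eqn_wff}. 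Part (ii) is then immediate, since $0 \bitwiseAnd \bitwiseNot \tmi{Q} = 0 = \tvi{Q}$ gives $0 \in \gamma(Q)$ with no assumption on $P$.

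For part (i), rather than arguing bit by bit I would reduce the claim to a single Boolean identity. Assuming $x \in \gamma(P)$ gives $x \bitwiseAnd \bitwiseNot \tmi{P} = \tvi{P}$. Applying De Morgan's law, $\bitwiseNot(\tvi{P} \bitwiseOr \tmi{P}) = \bitwiseNot \tvi{P} \bitwiseAnd \bitwiseNot \tmi{P}$, so by commutativity and associativity of $\bitwiseAnd$,
\[
x \bitwiseAnd \bitwiseNot \tmi{Q} = (x \bitwiseAnd \bitwiseNot \tmi{P}) \bitwiseAnd \bitwiseNot \tvi{P} = \tvi{P} \bitwiseAnd \bitwiseNot \tvi{P} = 0 = \tvi{Q},
\]
where the second equality substitutes the hypothesis and the third uses $a \bitwiseAnd \bitwiseNot a = 0$. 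Hence $x \bitwiseAnd \bitwiseNot \tmi{Q} = \tvi{Q}$, i.e. $x \in \gamma(Q)$, which is exactly claim (i).

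I do not expect a substantial obstacle here: the result is a routine consequence of bit-vector algebra, and the whole argument is about three lines. The only points needing care are (a) confirming that $Q$ falls in the well-formed case of $\gamma$ so the concretization formula applies, and (b) spotting the De Morgan rewrite that lets the hypothesis be substituted wholesale, collapsing the expression to $\tvi{P} \bitwiseAnd \bitwiseNot \tvi{P}$. As a sanity check, I note that the argument invokes neither the non-emptiness nor the well-formedness of $P$ — both parts hold for any pair $(\tvi{P}, \tmi{P})$, with (i) becoming vacuous when $\gamma(P) = \varnothing$.
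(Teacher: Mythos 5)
Your proof is correct, but for part (i) it takes a genuinely different route from the paper's. The paper argues trit by trit: it splits into three cases according to the trit of $P$ at position $k$ (known 1, known 0, unknown), checks in each case that $x[k]$ satisfies the concretization condition of $Q$, and then lifts the per-bit conclusion to the whole word using the bitwise nature of $\gamma$. You instead work on whole bitvectors at once: the De Morgan rewrite $\bitwiseNot(\tvi{P} \bitwiseOr \tmi{P}) = \bitwiseNot\tvi{P} \bitwiseAnd \bitwiseNot\tmi{P}$ lets you substitute the hypothesis $x \bitwiseAnd \bitwiseNot\tmi{P} = \tvi{P}$ wholesale and collapse $x \bitwiseAnd \bitwiseNot\tmi{Q}$ to $\tvi{P} \bitwiseAnd \bitwiseNot\tvi{P} = 0 = \tvi{Q}$ in a single chain. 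Your version is shorter, avoids case analysis entirely, and makes explicit two points the paper leaves implicit: that $Q$ is well-formed (so the non-$\bot$ branch of $\gamma$ applies), and that neither the well-formedness nor the non-emptiness of $P$ is actually needed, claim (i) being vacuous when $\gamma(P) = \varnothing$. What the paper's case analysis buys in exchange is the explicit per-trit picture --- known bits of $P$ become unknown trits of $Q$, unknown trits stay unknown --- which mirrors the informal intuition stated alongside the lemma; your algebra reaches the same conclusion with less visibility into how each trit behaves. Part (ii) is handled essentially identically in both proofs.
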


Intuitively, any tnum $(0, m)$ when concretized contains the
value 0.  Further, building a new tnum $Q$ whose mask has
set all the bits corresponding to the set value or mask bits
of $P$ ensures that $\gamma(P) \leqconc \gamma(Q)$. The full
proof is in our
\Supplement{}~\cite{vishwanathan-tnums-arxiv}. For example,
given $P = 0\mu1 = (001, 010)$ and $Q \triangleq (0, 011)$,
we have $\gamma(P) \leqconc \gamma(Q)$ and $0 \in
\gamma(Q)$.

For the next lemma, we define a variable-arity version of
\code{tnum_add} as follows:
$\code{tnum_add}_{j=0}^{n-1}(T_j)$ is evaluated by folding
the \code{tnum_add} operator over the list of tnum operands
$T_0, T_1, \cdots, T_{n-1}$ from left to right.

\begin{lemma}
\label{lemma:value-mask-decoupled-additions}
\textbf{Value-mask-decomposed tnum summations.} Given $n$
non-empty tnums $T_0, T_1, \ldots, T_{n-1} \in
\bb{T}{n}$. Suppose we pick $n$ values $z_0, z_1, z_2,
\ldots, z_{n-1} \in \bb{Z}{n}$ such that $\forall \, 0 \leq
j \leq n-1: \, z_j \in \gamma(T_j)$. Define tnum 

\begin{small}
\begin{align*}
\begin{split}
S \;\; \triangleq \;\;
\code{tnum_add}(\code{tnum_add}_{j=0}^{n-1} \;\;(tnum(T_j.v,
0)),
\\ \;\; \;\; \;\;\code{tnum_add}_{j=0}^{n-1} \;\;(tnum(0,
T_j.m)))
\end{split}
\end{align*}
\end{small}where $\code{tnum_add}_{(\cdot)}^{(\cdot)}$ is a
variable-arity version of \code{tnum_add} defined
above. Then, $\sum_{j=0}^{n-1} z_j \in \gamma(S)$.
\end{lemma}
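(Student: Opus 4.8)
The plan is to reduce the statement to repeated applications of the soundness of \code{tnum_add} (established in \OurTheorem{thm:soundness-optimality-of-tnum-add}), after first decomposing each chosen value $z_j$ into a fully-known part and a mask part. First I would observe that because each $T_j$ is well-formed we have $\tvi{T_j} \bitwiseAnd \tmi{T_j} = 0$, and that $z_j \in \gamma(T_j)$ means $z_j \bitwiseAnd \bitwiseNot \tmi{T_j} = \tvi{T_j}$ by the membership predicate of \eqref{eqn_mem}. Setting $m_j \triangleq z_j \bitwiseAnd \tmi{T_j}$, the set bits of $\tvi{T_j}$ and those of $m_j$ are disjoint, so their sum carries nowhere and $z_j = \tvi{T_j} + m_j$ over $\bb{Z}{n}$. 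Moreover $m_j \bitwiseAnd \bitwiseNot \tmi{T_j} = 0$, which is exactly the membership predicate for the tnum $tnum(0, \tmi{T_j})$, so $m_j \in \gamma(tnum(0, \tmi{T_j}))$. Likewise $\tvi{T_j} \in \gamma(tnum(\tvi{T_j}, 0))$ trivially, since this tnum has an empty mask and concretizes to the singleton $\{\tvi{T_j}\}$.

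Next I would lift the binary soundness of \code{tnum_add} to its variable-arity fold by a straightforward induction on the number of operands. Both the base and the inductive step use the fact that if $a \in \gamma(A)$ and $b \in \gamma(B)$ for well-formed $A, B$, then $a + b \in \gamma(\code{tnum_add}(A, B))$, which is an immediate specialization of \OurTheorem{thm:soundness-optimality-of-tnum-add}. I would also note that \code{tnum_add} always returns a well-formed tnum: its output value and mask are $s_v \bitwiseAnd \bitwiseNot \eta$ and $\eta$, whose bitwise-and is $0$, so every intermediate fold result is a legitimate well-formed input to the next addition. Applying this to the two folds in the statement yields $\sum_{j=0}^{n-1} \tvi{T_j} \in \gamma(V)$ and $\sum_{j=0}^{n-1} m_j \in \gamma(M)$, where $V \triangleq \code{tnum_add}_{j=0}^{n-1}(tnum(\tvi{T_j}, 0))$ and $M \triangleq \code{tnum_add}_{j=0}^{n-1}(tnum(0, \tmi{T_j}))$.

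Finally I would apply binary soundness once more to the outermost addition $S = \code{tnum_add}(V, M)$: since $\sum_j \tvi{T_j} \in \gamma(V)$ and $\sum_j m_j \in \gamma(M)$, their concrete sum lies in $\gamma(S)$. By commutativity and associativity of addition over $\bb{Z}{n}$, together with the decomposition $z_j = \tvi{T_j} + m_j$, that sum equals $(\sum_j \tvi{T_j}) + (\sum_j m_j) = \sum_j (\tvi{T_j} + m_j) = \sum_j z_j$, which closes the argument.

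I expect the only delicate point to be the decomposition step, specifically justifying the carry-free identity $z_j = \tvi{T_j} + m_j$ and verifying that $m_j$ satisfies the membership predicate of $tnum(0, \tmi{T_j})$; both hinge on the well-formedness of $T_j$ and the disjointness of its value and mask bits. Everything after that is bookkeeping: an induction extending soundness across the fold, the observation that \code{tnum_add} preserves well-formedness, and one final invocation of \code{tnum_add} soundness.
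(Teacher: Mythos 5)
Your proof follows essentially the same route as the paper's: decompose each $z_j$ into a fully-known value part plus a mask part, apply soundness of \code{tnum_add} to the value fold and the mask fold separately, combine with one final \code{tnum_add}, and finish by associativity of concrete addition. The only difference is in how the decomposition is justified --- the paper writes $z_j = T_j.v + (z_j - T_j.v)$ and argues the difference can only have mask bits set, whereas you take $m_j = z_j \,\bitwiseAnd\, T_j.m$ and use disjointness of value and mask bits so that bitwise-or coincides with carry-free addition; this is the same decomposition derived slightly more directly, and your explicit handling of well-formedness preservation and the fold induction makes precise a step the paper leaves implicit.
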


Intuitively, suppose we had $n$ tnums $T_i, 0 \leq i \leq
n-1$ and we seek to construct a new tnum $S$ whose
concretization $\gamma(S)$ contains all possible {\em
concrete sums} from the $T_i$, \ie such that
$\{\sum_{j=0}^{n-1} x_j \mid x_i \in \gamma(T_i)\} \leqconc
\gamma(S)$. The most natural method to construct such a tnum
$S$ is to use the sound abstract addition operator
\code{tnum_add} over the $T_i$, \ie
$\code{tnum_add}_{j=0}^{n-1}(T_j)$. This lemma provides
another method of constructing such a tnum $S$: decompose
the tnums $T_i$ each into two tnums, consisting of the
values and the masks separately. Use \code{tnum_add} to
separately add the value tnums, add the mask tnums, and
finally add the two resulting tnums from the value-sum and
mask-sum to produce $S$. Then $S$ contains all concrete
sums. The full proof of this lemma is in the \Supplement{}.
For example, suppose $T_1 = 1\mu0 = (100, 010), T_2 = 01\mu
= (010, 001)$. Then $\forall x_1 \in \gamma(T_1), x_2 \in
\gamma(T_2): x_1 + x_2 \in \gamma(\code{tnum_add}((110, 0),
(0, 011)))$.

\begin{theorem}
\label{thm:our-mul-soundness}
\textbf{Soundness of \ourmul.} $\forall \, x \in \gamma(P)$,
$y \in \gamma(Q)$ the result $R$ returned by
\ourmulsimplified (\Lst{our_mul_simplified}) is such that $x
* y \in \gamma(R)$, assuming that abstract tnum addition
(\code{tnum_add}) and abstract tnum shifts
(\code{tnum_lshift}, \code{tnum_rshift}) are sound.
\end{theorem}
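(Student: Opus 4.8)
The plan is to fix arbitrary concrete operands $x \in \gamma(P)$ and $y \in \gamma(Q)$ and show directly that the concrete product $x * y$ lies in $\gamma(R)$, where $R$ is the tnum returned by \ourmulsimplified. By \Obs{partial-products}, $x * y = \sum_{k=0}^{n-1} \ithbitx{x}{k} \cdot (y \bitwiseLshift k)$, so it suffices to exhibit, for each bit position $k$, a tnum $T_k$ whose concretization contains the $k$-th partial product $z_k \triangleq \ithbitx{x}{k} \cdot (y \bitwiseLshift k)$, and then argue that the final accumulator $R$ soundly captures the sum $\sum_k z_k$. I would track the loop by noting that at iteration $k$ the variable $P$ has been right-shifted $k$ times, so its least-significant trit equals the $k$-th trit of the original $P$ (this uses soundness of \code{tnum_rshift}), while $Q$ has been left-shifted $k$ times; by soundness of \code{tnum_lshift}, $y \bitwiseLshift k \in \gamma\big((\tvi{Q} \bitwiseLshift k,\ \tmi{Q} \bitwiseLshift k)\big)$.

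Next I would define the per-position tnums $T_k$ to mirror the three branches of the loop, and verify $z_k \in \gamma(T_k)$ in each case. When the $k$-th trit of $P$ is a certain $1$, set $T_k \triangleq (\tvi{Q} \bitwiseLshift k,\ \tmi{Q} \bitwiseLshift k)$; here $\ithbitx{x}{k} = 1$, so $z_k = y \bitwiseLshift k \in \gamma(T_k)$ by the paragraph above. When the $k$-th trit is a certain $0$, set $T_k \triangleq (0,0)$; here $z_k = 0 \in \gamma(T_k)$. The delicate case is an unknown trit: then $\ithbitx{x}{k}$ may be $0$ or $1$, so $z_k$ is either $0$ or $y \bitwiseLshift k$, and I must choose one tnum containing both possibilities. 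Setting $T_k \triangleq (0,\ (\tvi{Q} \bitwiseOr \tmi{Q}) \bitwiseLshift k)$ works: \Lemma{lemma:union-tnum-with-zero} applied to the shifted $Q$ guarantees both $0 \in \gamma(T_k)$ and $\gamma\big((\tvi{Q}\bitwiseLshift k,\ \tmi{Q}\bitwiseLshift k)\big) \leqconc \gamma(T_k)$, so $y \bitwiseLshift k \in \gamma(T_k)$ as well. Thus $z_k \in \gamma(T_k)$ in every case.

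Finally I would tie the $T_k$ to the algorithm's two accumulators. Decomposing each $T_k$ into its value-only part $(\tvi{T_k}, 0)$ and mask-only part $(0, \tmi{T_k})$, the updates in \ourmulsimplified add exactly the value-only parts into $ACC_V$ and exactly the mask-only parts into $ACC_M$: for an unknown trit $\tvi{T_k} = 0$, so nothing is added to $ACC_V$, matching the absent value update in that branch, and for a certain $0$ both parts are $(0,0)$, which act as identities for \code{tnum_add}. Hence $ACC_V$ is the left-to-right \code{tnum_add}-fold of the value parts, $ACC_M$ is the fold of the mask parts, and $R$, obtained as \code{tnum_add} of $ACC_V$ and $ACC_M$, is precisely the tnum $S$ constructed in \Lemma{lemma:value-mask-decoupled-additions} over the operands $T_0, \ldots, T_{n-1}$. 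That lemma then yields $\sum_{k=0}^{n-1} z_k \in \gamma(R)$, i.e. $x * y \in \gamma(R)$, completing the argument.

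The main obstacle is the unknown-trit case, where one partial-product position ranges nondeterministically over $\{0,\ y \bitwiseLshift k\}$ and must be over-approximated by a single tnum; \Lemma{lemma:union-tnum-with-zero} is the essential tool, and its application to the shifted operand rests on the soundness of \code{tnum_lshift}. A secondary but load-bearing subtlety is verifying that the split accumulation into $ACC_V$ and $ACC_M$ coincides exactly with the value/mask decomposition demanded by \Lemma{lemma:value-mask-decoupled-additions}, including that the skipped updates amount to adding the identity tnum $(0,0)$; getting this bookkeeping right is what licenses invoking the value-mask-decomposition lemma rather than re-proving summation soundness from scratch.
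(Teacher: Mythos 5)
Your proposal is correct and follows essentially the same route as the paper's own proof: your loop-tracking step is the paper's Property P1, your per-branch construction of the tnums $T_k$ (with \Lemma{lemma:union-tnum-with-zero} handling the unknown-trit case) is Property P2, and your final invocation of \Lemma{lemma:value-mask-decoupled-additions} together with \Obs{partial-products} to conclude $\sum_{k} z_k = x * y \in \gamma(R)$ is Property P3. Your explicit observation that skipped updates amount to adding the identity tnum $(0,0)$ is a detail the paper leaves implicit, but the argument is the same.
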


We prove this theorem by showing three properties, whose
full proofs are in the \Supplement{}. Below, $P_{in}$ and
$Q_{in}$ are the formal parameters to \ourmul.

\proofheading{Property \textnormal{P1}. $P$ and $Q$ are
  bit-shifted versions of $P_{in}$ and $Q_{in}$.} This
property follows naturally from the algorithm, which only
updates the tnums $P$ and $Q$ in the code using tnum
bit-shift operations (\code{tnum_lshift},
\code{tnum_rshift}). 

\proofheading{Property \textnormal{P2}. $ACC_V$ and $ACC_M$
  are value-mask-decomposed summations of partial products.}
There exist tnums $T_0, T_1, \ldots, T_{n-1}$ such that (i)
any concrete $\nth{j}$ partial product, $z_j \triangleq
\ithbitx{x}{j} \arithMul (y \bitwiseLshift j) \in
\gamma(T_j)$, for $0\leq j \leq n-1$; (ii) at the end of
the $\nth{k}$ iteration of the loop, $ACC_V =
\code{tnum_add}_{j=0}^{k-1} (tnum(T_j.v, 0))$, and (iii) at
the end of the $\nth{k}$ iteration of the loop, $ACC_M =
\code{tnum_add}_{j=0}^{k-1}(tnum(0, T_j.m))$.

At a high level, this property states that there is a set of
tnums $T_j$, where $\gamma(T_j)$ contains all possible
concrete values of the $\nth{j}$ partial product term $z_j
\triangleq \ithbitx{x}{j} \arithMul (y \bitwiseLshift j)$
(\Obs{partial-products}). In the example in
\Fig{ourmul_ex}(b), $T_0 = \mu10, T_1 = 0000, T_2 =
\mu\mu000$. In the case where $\ithbitx{P}{0}$ is $\mu$, we
use \Lemma{lemma:union-tnum-with-zero} to show that the
$T_j$ constructed by \ourmulsimplified is such that
$\gamma(Q) \leqconc \gamma(T_j)$ and $0 \in \gamma(T_j)$.
We also show that $ACC_V$ is the value-sum of the $T_j$ (see
\Lemma{lemma:value-mask-decoupled-additions}) while $ACC_M$
is the mask-sum. In \Fig{ourmul_ex}(b), $ACC_V \triangleq
\code{tnum_add}(010, 0000, 00000)$ and $ACC_M \triangleq
\code{tnum_add}(\mu00, 0000, \mu\mu000)$.

\proofheading{Property \textnormal{P3}. (Product
  containment) $\sum_{j=0}^{n-1}z_j \in
  \gamma(\code{tnum_add}(ACC_V, ACC_M)).$} 
That is, $\forall x \in \gamma(P), y \in \gamma(Q): x * y
\in \gamma(\code{tnum_add}(ACC_V, ACC_M))$.

This result follows from property P2 and
\Lemma{lemma:value-mask-decoupled-additions}. Property P3
concludes a proof of soundness of \ourmulsimplified:
$\forall x \in \gamma(P), y \in \gamma(Q): x * y \in
\gamma(R)$.

\begin{lemma}
\label{lemma:strength-reduction-equivalence}
\textbf{Correctness of strength reductions.}  \Ourmul
(\Lst{our_mul}) is equivalent to \ourmulsimplified
(\Lst{our_mul_simplified}) in terms of its input-output
behavior.
\end{lemma}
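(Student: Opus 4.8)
Prove that `our_mul` (Listing `our_mul`) is equivalent to `our_mul_simplified` (Listing `our_mul_simplified`) in input-output behavior.

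Let me carefully compare the two algorithms to understand what needs to be proven.

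The plan is to show that the two procedures compute identical accumulator tnums $ACC_V$ and $ACC_M$ at loop exit, from which equality of the returned $R = \code{tnum_add}(ACC_V, ACC_M)$ is immediate. The starting observation is that the algorithms differ in exactly two places: (i) \ourmul initializes $ACC_V$ to $tnum(P.v \arithMul Q.v, 0)$ and never touches it inside the loop, whereas \ourmulsimplified initializes $ACC_V$ to $tnum(0,0)$ and, in the certain-1 branch, accumulates $tnum(Q.v, 0)$ into it; and (ii) \ourmulsimplified runs a counted loop of exactly $n$ iterations, while \ourmul runs a \code{while} loop until $P$ has no set value or mask bits. Crucially, the updates to $ACC_M$ (adding $tnum(0, Q.m)$ in the certain-1 branch and $tnum(0, Q.v \bitwiseOr Q.m)$ in the uncertain branch), as well as the shifts $P := \code{tnum_rshift}(P,1)$ and $Q := \code{tnum_lshift}(Q,1)$, are syntactically identical in both. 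Hence I only need to reconcile the $ACC_V$ computation and the loop bound.

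For the $ACC_V$ equivalence I would first verify from \Lst{tnum_add} that \code{tnum_add} of two zero-mask tnums reduces to concrete addition: when $P.m = Q.m = 0$ we get $sm = 0$, $\Sigma = sv$, $\chi = sv \bitwiseXor sv = 0$, $\eta = 0$, so the result is $tnum(P.v \arithAdd Q.v, 0)$. A straightforward induction on the iteration count then establishes that in \ourmulsimplified, after $k$ iterations $ACC_V = tnum\big(\sum_{j<k,\, P.v[j]=1}(Q.v \bitwiseLshift j),\,0\big)$, where $P, Q$ denote the original inputs, using that at iteration $i$ the loop variables hold the $i$-fold right/left shifts of $P, Q$, and that well-formedness ($P.v \bitwiseAnd P.m = 0$) makes the certain-1 branch fire precisely at positions $j$ with $P.v[j]=1$. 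Running the full $n$ iterations therefore yields $ACC_V = tnum\big(\sum_{j=0}^{n-1} P.v[j] \arithMul (Q.v \bitwiseLshift j),\,0\big)$, which by \Obs{partial-products} applied to the concrete values $P.v$ and $Q.v$ equals $tnum(P.v \arithMul Q.v, 0)$, exactly \ourmul's initial value. Since all arithmetic and the long-multiplication identity of \Obs{partial-products} hold modulo $2^n$, truncation introduces no discrepancy.

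For the loop bound, I would argue that the extra iterations in \ourmulsimplified are no-ops: once $P$ has been right-shifted to the all-zero tnum, its least significant value and mask bits are both $0$, so neither branch fires and neither $ACC_V$ nor $ACC_M$ changes (only the now-unused shifted $Q$ is modified). Because each iteration shifts $P$ right by one, $P$ becomes zero within $n$ steps, so the \code{while} loop in \ourmul terminates and stops precisely at that point, whereas the counted loop of \ourmulsimplified performs the same effective updates and then idles. Consequently both algorithms apply the identical sequence of $ACC_M$ updates over positions $0$ through the highest set bit of $P$ and produce the same final $ACC_V$ and $ACC_M$; equality of $R = \code{tnum_add}(ACC_V, ACC_M)$ follows.

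The main obstacle is the $ACC_V$ step: one must recognize that \ourmul's single hardware multiply $P.v \arithMul Q.v$ is exactly the collapsed form of \ourmulsimplified's repeated \code{tnum_add} accumulation of the certain partial products, which requires both the reduction of \code{tnum_add} to concrete addition on zero-mask operands and the partial-product expansion of \Obs{partial-products}. The loop-bound reconciliation is comparatively routine bookkeeping, needing only that shifting $P$ to zero renders every subsequent branch inert.
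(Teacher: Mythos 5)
Your proof is correct and takes essentially the same approach as the paper: the paper likewise observes that \code{tnum_add} over zero-mask tnums reduces to concrete addition, so that the iterative $ACC_V$ accumulation in \ourmulsimplified collapses via \Obs{partial-products} to the value $tnum(P.v \arithMul Q.v, 0)$ with which \ourmul initializes $ACC_V$, and likewise treats \ourmul's early \code{while}-loop exit as a sound elision of no-op iterations once $P = (0,0)$. Your write-up merely makes explicit the induction and the modulo-$2^n$ bookkeeping that the paper leaves informal.
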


The existence of two accumulating tnums $ACC_V$ and $ACC_M$
in \ourmulsimplified allows us to use
\Lemma{lemma:value-mask-decoupled-additions} to prove
soundness. However, it is unnecessary to construct $ACC_V$
iteration by iteration. We observe that $ACC_V$ is merely
accumulating $(Q.v, 0)$ whenever $P[0]$ is known to be
1. All bits in each tnum accumulated into $ACC_V$ are known.
When \code{tnum_add} is used to add $n$ tnums $(v_i, 0), 0
\leq i \leq n-1$ it is easy to see that the result is
$(\sum_{i=0}^{n-1} v_i, 0)$. Using \Obs{partial-products},
we see that at end of the loop, $ACC_V = tnum(P.v * Q.v,
0)$. As an added optimization, \ourmul soundly terminates
the loop early if $P = (0, 0)$ at the beginning of any
iteration. 
Since \ourmul and \ourmulsimplified are equivalent, \ourmul
is also a sound abstraction of $*$.

While \ourmul is sound, it is not optimal. Key questions
remain in designing a sound and optimal algorithm with
$O(n)$ or faster run time.
(1) How can we incorporate correlation in unknown bits
across partial products? For example, multiplying $P = 11, Q
= \mu1$ produces the partial products $T_1 = 11, T_2 =
\mu\mu0$. However, the two $\mu$ trits in $T_2$ are
concretely either both 0 or both 1, resulting from the same
$\mu$ trit in $Q$. Failing to consider this in the addition
makes the result imprecise.
(2) Can we design a sound, precise, and fast
tnum addition operator of arity $n$?
(3) Eschewing long
multiplication, is it possible to use concrete
multiplication $(*)$ over tnum masks to determine the
unknown bits in the result?

\section{Experimental Evaluation}
\label{sec:evaluation}
Tnum operations are only one
component of the Linux kernel's BPF analyzer. 
To keep our measurement and contributions focused, our
evaluation focuses on the precision and speed
of our tnum multiplication operation relative to prior
algorithms.

\Para{Prior algorithms for abstract multiplication.}
Apart from Linux kernel's
multiplication~\cite{tnum-kernel-source}, Regehr and
Duongsaa \cite{regehr-deriving-abstract-transfer} also
provide an algorithm for abstract multiplication in a domain
similar to tnums, which is called the {\em bitwise} domain.
\Lst{regher_mul} presents their multiplication algorithm,
which we call \reghermul. We experimentally compare the
precision and performance of our tnum multiplication with
both \reghermul and the Linux kernel version (\kernmul). We
have performed bounded verification of the soundness of both
\kernmul and \reghermul at bitwidth $n=8$.

Similar to \ourmul, \reghermul is also inspired from long
multiplication. It generates partial products that are
subsequently added after appropriately bit-shifting them.
We observed that \reghermul needs to be carefully implemented with
machine arithmetic operations to have reasonable performance. In
\reghermul, the function \code{multiply_bit} modifies the second
operand $Q$ based on the $i^{th}$ trit of the first operand $P$.  If
the $i^{th}$ trit of $P$ is $\mu$, this modification is done in a
trit-by-trit fashion (\ie by iterating over the $n$ trits of $Q$ and
setting them to $\mu$).  To improve \reghermul's performance with
tnums, we substituted this loop with a single tnum operation
in our implementation: when $P$ is $\mu$, we construct a new
tnum \code{(0,
  Q.mask | Q.value)}, which has the same effect as individually setting
the trits of $Q$ to $\mu$.

\begin{lstlisting}[caption={
	{\footnotesize Bitwise multiplication (\ie \reghermul) by Regehr
	  \etal~\cite{regehr-deriving-abstract-transfer}}},
		label={lst:regher_mul}]
def $\reghermul$(tnum P, tnum Q):

	tnum sum = tnum(0, 0)
	# loop runs bitwidth times
	for i in range(0, bitwidth):
		tnum product = multiply_bit(P, Q, i)
		sum = tnum_add(sum, tnum_lshift(product, i))
	return sum

def multiply_bit(tnum P, tnum Q, u8 i):

	# Bit position i of tnum P is a certain 0
	if (P.v(*$_{[i]}$*) == 0 and P.m(*$_{[i]}$*) == 0):
		return tnum(0, 0)
	# Bit position i of tnum P is a certain 1
	elif (P.v(*$_{[i]}$*) == 1 and P.m(*$_{[i]}$*) == 0):
		return Q
	# Bit position i of tnum P is uncertain
	else: 
		# "kill" all bits of Q that are a certain 1,
		# i.e. set them to uncertain
		for j in range(0, bitwidth):
			if (Q.v(*$_{[j]}$*) == 1 and Q.m(*$_{[j]}$*) == 0) :
				Q.v(*$_{[j]}$*) = 0
				Q.m(*$_{[j]}$*) = 1
		return Q

\end{lstlisting}

\Para{Setup.} 
We performed all our experiments on the Cloudlab
\cite{cloudlab} testbed. We used two 10-core Intel Skylake
CPUs running at 2.20 GHz for a total of 20 cores, with 192GB
of memory.

\subsection{Evaluation of  Precision of \ourmul}
\label{sec:eval_precision}
We evaluate the precision of \code{our_mul} compared to
\reghermul and \kernmul by exhaustively evaluating all pairs
of tnum inputs at a given bitwidth $n$. We set $n=8$ to keep
the running times tractable.

Consider two abstract tnum multiplication operations \opone
and \optwo. Given two tnums $P$ and $Q$, suppose $R_1
\triangleq\opone(P, Q)$ and $R_2 \triangleq\optwo(P,
Q)$. For fixed $P$ and $Q$, \opone is more precise than
\optwo if $R_{1} \neq R_{2} \wedge R_{1} \leqabst R_{2}$, or
equivalently, $R_{1} \neq R_{2} \wedge \gamma(R_{1})
\leqconc \gamma(R_{2})$, where $\leqconc$ is the subset
relation $\subseteq$.
In general, two such output tnums $R_1$ and $R_2$ need not
be comparable using the abstract order $\leqabst$. For
example, at bitwidth $n=9$, with input tnums $P =
000000011$, $Q = 011\mu011\mu\mu$, the \kernmul algorithm
produces $R_1 = \mu\mu\mu\mu0\mu\mu\mu\mu$ whereas \ourmul
produces $R_2 = 0\mu\mu\mu\mu\mu\mu\mu\mu$. However,
empirically, for tnums of width $n=8$, outputs $R_1$ and
$R_2$ turn out to be always comparable using
$\leqabst$. That is, at $n=8$, $R_1 \neq R_2 \Rightarrow R_1
\leqabst R_2 \vee R_2 \leqabst R_1$. Hence, we can simply
compare the cardinalities of $\gamma({R_1})$ and
$\gamma({R_2})$ as a measure of the relative precision of
\opone and \optwo, for given input tnum pair $(P, Q)$.

Figure~\ref{fig-rel-pres} compares the cardinalities of
$\gamma(R_1)$ and $\gamma(R_2)$ when enumerating every
possible input tnum pair $(P, Q)$ of width $8$, and only
considering cases where $R_1 \neq R_2$ . Note that $R_1 \neq
R_2$ only when $R_1$ (similarly $R_2$) is a tnum with a
larger number of unknown trits ($\mu$) than $R_2$ (similarly
$R_1$).  We use a $log_{2}$ scale for the x-axis: each tick
on the x-axis to the right of $0$ is a point where \ourmul
produces a tnum that is more precise in exactly one trit
position when compared to the multiplication algorithm it is
pitted against. We observe that for around 80\% of the
cases, \ourmul produces a more precise tnum than both
\kernmul and \reghermul (the data to the right of 0 in
Figure \ref{fig-rel-pres}). In summary, our multiplication
is more precise than \kernmul and
\reghermul.

Note that \ourmul and \kernmul produce the same result $(R_1
= R_2)$ for
\percentageofourmulkernmulequalinputsatbitwidtheight\% of
all possible $8$-trit input tnum pairs. We evaluate how the
relative precision between these algorithms varies with
increasing bitwidth by enumerating the trends from bitwidth
$n=5$ to $n=10$. We observe that (1) the fraction of inputs
where $R_1 = R_2$ decreases, and (2) \ourmul produces more
precise results than \kernmul for a higher fraction of those
inputs where the outputs differ ($R_1 \neq R_2$) but are
comparable ($R_1 \leqabst R_2 \vee R_2 \leqabst R_1$). The
full results are in the
\Supplement{}~\cite{vishwanathan-tnums-arxiv}. 

Due to the non-associativity of tnum addition
(\Sec{automated-verification}), some orders of adding tnums
are more precise than others, while increasing the number of
tnums added makes the result less precise. Hence, the order
and number of tnums added is significant to the precision of
each multiplication algorithm.  In general, \ourmul is more
precise than both \kernmul and \reghermul due to its
value-mask decomposition. Both \kernmul and \reghermul add
tnums each of which contains both certain and uncertain
trits. Due to the value-mask decomposition, \ourmul
postpones such an addition until the very last step of the
algorithm. Further, \ourmul is more precise than \kernmul
with increasing bitwidth ($n$), since \ourmul has fewer
additions ($n+1$) than \kernmul ($2n$).

\begin{figure}
\centering
\includegraphics[width=0.8\columnwidth]{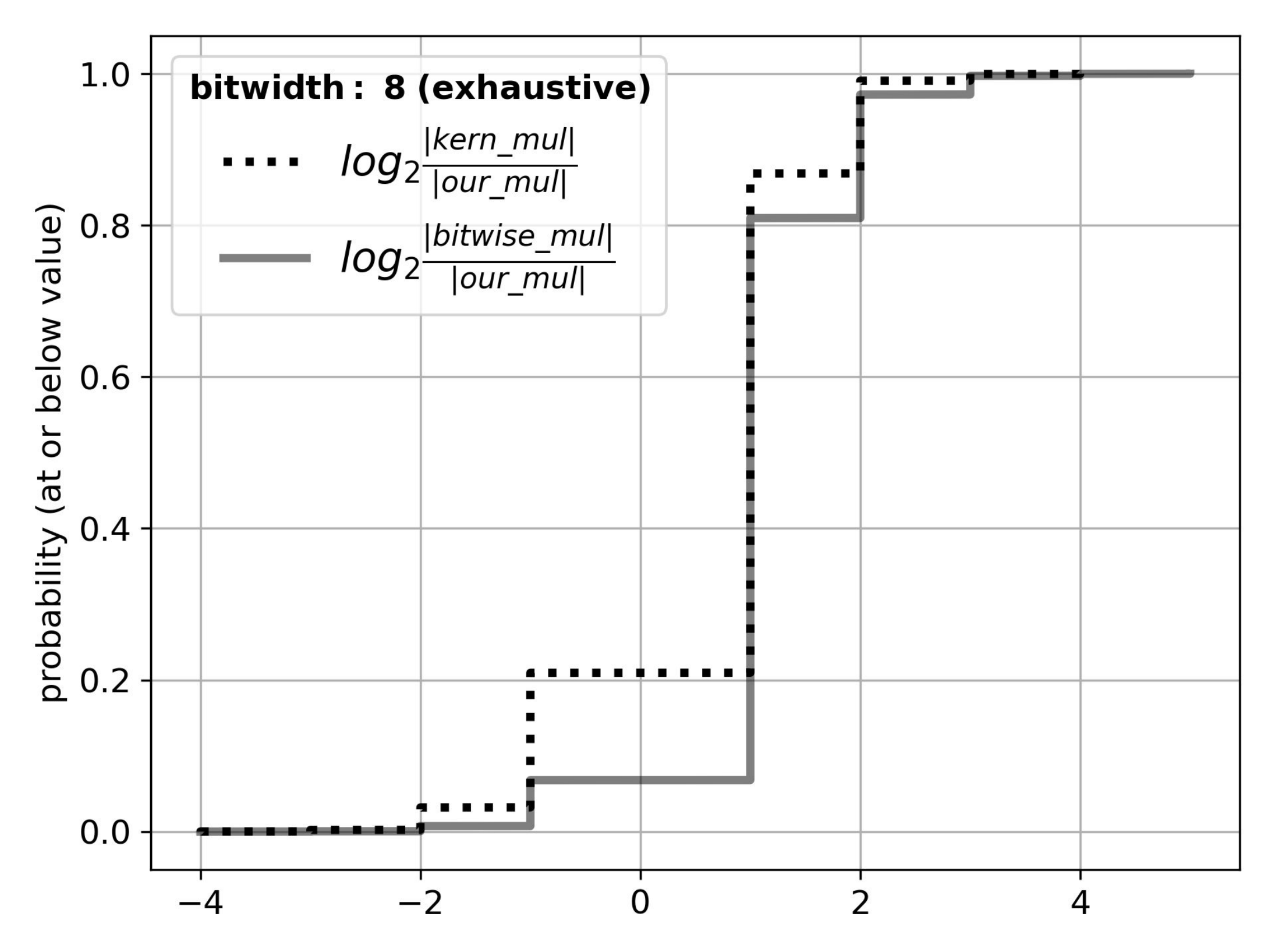}
\caption{\footnotesize Cumulative distribution of the ratio of set
sizes of the tnums (after concretization) produced by (a)
\code{kern_mul} to \code{our_mul}, and (b) \reghermul to
\code{our_mul}, for cases where the output from
\code{our_mul} is different. Results  presented in log scale
($log_{2}$). The input tnums pairs are drawn from the set of
all possible tnum pairs of bitwidth \textbf{8}.}
\label{fig-rel-pres}
\end{figure}

\subsection{Performance evaluation of \ourmul}
\label{sec:eval_performance}

\begin{figure}
\centering
\includegraphics[width=0.8\columnwidth]{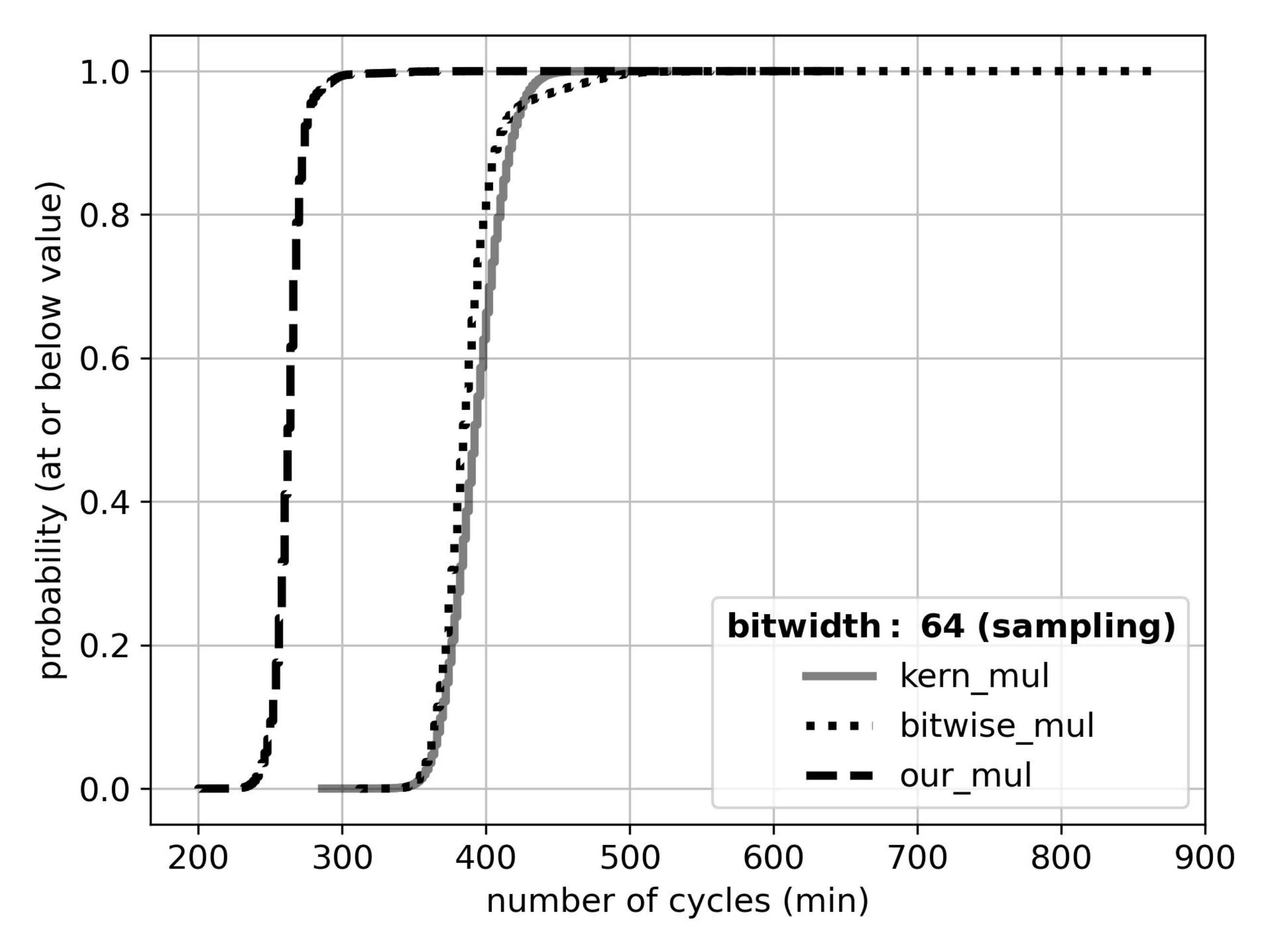}
\caption{\footnotesize Cumulative distribution of the minimum number of CPU cycles
taken by \reghermul, \kernmul, and \ourmul  for 40
million randomly sampled 64-bit input tnum pairs.}
\label{fig-speed}
\end{figure}

We compare the performance (in CPU cycles measured using the
\code{RDTSC} time stamp counter) of all the tnum
multiplication algorithms discussed in this paper:
\kernmul~\cite{tnum-kernel-source},
\reghermul~\cite{regehr-deriving-abstract-transfer}, and our
new algorithm \ourmul.
We perform the experiment using 40 million randomly sampled tnum
pairs (64-bit), repeating each input pair 10 times and measuring the
minimum number of cycles across these trials. Figure~\ref{fig-speed}
reports the cumulative distribution of this cycle count across all the
sampled inputs. All multiplication algorithms have a loop,
and for some algorithms, the number of iterations of the loop depend on
number of unknown bits in the input operands. Hence, the number of
cycles varies across inputs.

We observe that \ourmul is faster (\ie{} fewer CPU cycles
taken) than all the other versions of tnum
multiplication. On average, \kernmul takes around 393
cycles, our optimized version of \reghermul takes 387
cycles, and \ourmul takes 262 cycles (when we take the
average of the minimum of 10 trials for each input tnum
pair).  Our optimizations to \reghermul to use machine
arithmetic were important as it improved the performance
significantly (\ie from 4921 cycles to 387 cycles).  In
summary, efficient use of machine arithmetic and the novel
computation and summation of partial products makes \ourmul
\ourmulfasterthankernmulby
(resp. \ourmulfasterthanreghermulby) faster on average than
\kernmul (resp.  our optimized version of \reghermul).

\section{Related Work}

\textbf{BPF safety}. Given the widespread use of BPF, recent
efforts have explored building safe JIT compilers and
interpreters~\cite{jitterbug-osdi20,jitsynth-cav20,serval-sosp19,jitk-osdi14}.
These works assume the correctness of the in-kernel static
checker and the JIT translation happens after the BPF code
passes the static checker.
Prevail~\cite{untrusted-extensions-pldi19} proposes an
offline BPF analyzer  using abstract interpretation with the
zone abstract domain and supports programs with loops. In
contrast to this paper, these prior efforts have not looked
at verifying the tnum operations in the Linux kernel's
static analyzer or explored the tnum domain specifically.

\textbf{Abstract interpretation.} Many static
analyses use an abstract domain for abstract
interpretation~\cite{cousot-1977,cousot-lecture-notes,cousot-2001}.
Abstract domains like intervals, octagons~\cite{mine-tutorial}, and
polyhedra-based domains~\cite{polyhedra-popl17} enhance the precision
and efficiency of the underlying operations.  Unlike the Linux
kernel's tnums, their intended use is in offline settings. Of
particular relevance to our work is the known-bits domain from LLVM
\cite{llvm-known-bits-analysis, testing-static-analysis-cgo20,
  dataflow-pruning-oopsla20}, which, like tnums, is used to reason
about the certainty of individual bits in a value. Our work on
verifying tnums will be likely useful to LLVM's known-bits analysis,
as prior work does not provide proofs of precision and
soundness for
arithmetic operations such as addition and multiplication.

\textbf{Safety of static analyzers.} 
One way to check for soundness and precision bugs in static
analyzers is to use automated random
testing~\cite{klinger-2019,cuoq-2012}. Recently, Taneja et
al.~\cite{testing-static-analysis-cgo20} test dataflow analyses in
LLVM to find soundness and precision bugs by using an
SMT-based test oracle~\cite{souper}. Bugariu et 
al. \cite{testing-numerical-abstract-domains} test the soundness and
precision of widely-used numerical abstract domains
\cite{mine-HOSC06,polyhedra-popl17}. They use mathematical properties
of such domains as test oracles while comparing against a set of
representative domain elements. They assume that the oracle
specification of operations is correct and precise. This paper differs
from these approaches in that we formalize and construct analytical
proofs for the abstract operations.

\section{Conclusion}
Abstract domains like tnums are widely used to track register values
in the Linux kernel and in various compilers. This paper performs
verification of tnum arithmetic operations, and develops a new
implementation for tnum multiplication. Our algorithm for tnum
multiplication is sound, precise, and faster than Linux's kernel
multiplication. Our new multiplication algorithm is now part of the
Linux kernel.

\section*{Acknowledgments}
This paper is based upon work supported in part by the
National Science Foundation under FMITF-Track I Grant
No.~2019302 and the Facebook's Networking Systems Research
Award. We thank Edward Cree for his feedback on strength
reductions and precision improvements for tnum operations.

\bibliographystyle{IEEEtran}
\bibliography{refs}
\section{Supplementary Material}
\label{sec:supplement}

\subsection{Proofs for Auxiliary Lemmas for Tnum Addition}

\label{appendix:tnum-add}
We provide proofs for the additional lemma required to prove
the soundness and maximal precision of tnum addition. Let
$x[i]$ denote the \ith{i} bit of an integer $x \in
\bb{Z}{n}$. 

Note that by our definition of $\alpha$ and $\gamma$, given
a non-empty tnum $T \in \bb{Z}{n} \times \bb{Z}{n}$, the
following holds. 

\begin{align}
\begin{split}
\label{eqn:tnum_value_corresp}
T.v[i] = 1 \; \Leftrightarrow \;
& \forall c \ \in \gamma(T): c[i] = 1 \\
T.v[i] = 0 \wedge T.m[i] = 0 \; \Leftrightarrow \;
& \forall c \in \gamma(T): c[i] =  0 \\
T.v[i] = 0 \wedge T.m[i] = 1 \; \Leftrightarrow \;
& \forall c \in \gamma(T):  c[i] =  0 \vee c[i] =  1 \\
\end{split}
\end{align}

Suppose $p$ and $q$ are two concrete values in tnum $P$ and
tnum $Q$, respectively, i.e., $p \in \gamma(P ), q \in
\gamma(Q)$. 

\begin{lemma}
\label{lemma:minimum-carries-lemma-appendix}
\emph{(\Lemma{lemma:minimum-carries-lemma} in the main
text).} 
\textbf{Minimum carries lemma.} 
The addition $sv = P.v + Q.v$ will produce a
sequence of carry bits that has the least number of 1s out
of all possible additions $p + q$.
\end{lemma}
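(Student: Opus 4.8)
The plan is to reduce the claim to a bitwise monotonicity property of the carry chain, exploiting the fact that $\tvi{P}$ is the bitwise-minimal member of $\gamma(P)$. First I would observe that $\tvi{P}$ is itself a concrete element of $\gamma(P)$ (and likewise $\tvi{Q} \in \gamma(Q)$): checking the membership predicate, $\tvi{P} \bitwiseAnd \bitwiseNot \tmi{P} = \tvi{P}$ holds because the value and mask bits never overlap in a well-formed tnum. More importantly, $\tvi{P}$ is dominated bitwise by every member, i.e. $\ithbitx{\tvi{P}}{i} \leq \ithbitx{p}{i}$ for all $p \in \gamma(P)$ and all $i$. This follows directly from the tnum semantics in \eqref{eqn_tnumbasic} (equivalently \eqref{eqn:tnum_value_corresp}): at certain positions $\tvi{P}$ agrees with every member, and the only positions where they can differ are the uncertain ones, where $\tvi{P}$ carries a $0$. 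The same holds for $\tvi{Q}$.

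Next I would recast the carry-out recurrence from Definition~\ref{addition-of-bits} into its majority form. A short Boolean simplification shows
\begin{align*}
\ithbit{c}{{out}}{i} = (\ithbitx{p}{i} \bitwiseAnd \ithbitx{q}{i}) \bitwiseOr (\ithbitx{p}{i} \bitwiseAnd \ithbit{c}{{in}}{i}) \bitwiseOr (\ithbitx{q}{i} \bitwiseAnd \ithbit{c}{{in}}{i}),
\end{align*}
that is, $\ithbit{c}{{out}}{i}$ is the majority of the three bits $\ithbitx{p}{i}$, $\ithbitx{q}{i}$, and $\ithbit{c}{{in}}{i}$. The crucial feature of the majority function is that it is monotone in each argument: raising any one input from $0$ to $1$ can never lower the output.

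With these two facts in hand, I would prove bitwise domination of the two carry chains by induction on the bit position $i$. Let $c$ denote the carry sequence of an arbitrary concrete addition $p + q$ (with $p \in \gamma(P)$, $q \in \gamma(Q)$), and let $c^{\star}$ denote the carry sequence of $\tvi{P} + \tvi{Q}$. The base case is immediate, since both additions start with carry-in $0$ at the least significant position. For the inductive step, assume $\ithbit{c^{\star}}{{in}}{i} \leq \ithbit{c}{{in}}{i}$; combining this with $\ithbitx{\tvi{P}}{i} \leq \ithbitx{p}{i}$ and $\ithbitx{\tvi{Q}}{i} \leq \ithbitx{q}{i}$ and applying monotonicity of the majority function yields $\ithbit{c^{\star}}{{out}}{i} \leq \ithbit{c}{{out}}{i}$, and since the carry-out at position $i$ is the carry-in at position $i+1$, the hypothesis advances. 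Hence $c^{\star}$ is bitwise dominated by $c$ everywhere, so the carry sequence of $\tvi{P} + \tvi{Q}$ has $1$s only where every concrete $p + q$ does, and therefore has the fewest $1$s overall.

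I expect the main obstacle to be handling the ripple dependency cleanly: the carry at position $i$ is not a function of the operand bits alone but also of the carry arriving from below, so a purely pointwise comparison of operands does not immediately transfer to carries. The majority-form rewriting is precisely what makes the induction go through, because it exposes monotonicity in all three inputs simultaneously --- including the carry-in --- which is what lets the inductive hypothesis about the lower-order carry propagate upward.
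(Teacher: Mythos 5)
Your proof is correct, and it establishes exactly the statement the paper's proof establishes: bitwise domination of carry chains, i.e.\ the carry sequence of $P.v + Q.v$ has a $1$ only where \emph{every} concrete $p+q$ does, proved by induction on bit position. The difference lies inside the inductive step. The paper works directly with the carry-out formula $c_{out}[i] = (p[i] \bitwiseAnd q[i]) \bitwiseOr (c_{in}[i] \bitwiseAnd (p[i] \bitwiseXor q[i]))$ and grinds through a four-way case analysis on $(P.v[i], Q.v[i])$, with sub-cases on the preceding carry, invoking the tnum value-correspondence separately in each branch. You instead factor the argument into two reusable facts: (i) well-formedness ($P.v \bitwiseAnd P.m = 0$) makes $P.v$ the bitwise-minimal member of $\gamma(P)$, and (ii) the carry-out is the \emph{majority} of $(p[i], q[i], c_{in}[i])$, a monotone Boolean function; the induction then closes in one uniform step with no case split. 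This buys genuine generality: what you actually prove is that bitwise domination of operands implies bitwise domination of carry sequences, for \emph{any} two additions. In particular the companion maximum-carries lemma (Lemma~\ref{lemma:maximum-carries-lemma-appendix}) is the same statement applied with the domination reversed, since $P.v + P.m = P.v \bitwiseOr P.m$ is the bitwise-\emph{maximal} member of $\gamma(P)$; the paper instead redoes a parallel case analysis from scratch for that lemma. The paper's version is more elementary and self-contained, but its case analysis is essentially a hand-unrolled verification of the majority function's monotonicity, so nothing is lost by your route and a second proof is saved.
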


\begin{proof}
We prove any concrete addition $p + q$ will produce a
sequence of carry bits with 1s in at least those positions
where $sv$ produced carry bits set to 1.

Let $\ithbit{c}{{sv}}{i}$ denote the carry out bit at the
\ith{i} position produced by $P.v + Q.v$ and
$\ithbit{c}{{out}}{i}$  denote the carry out bit at the
\ith{i} position produced by $p + q$. We will prove by
induction that if $P.v[i] + Q.v[i]$ produces a carry bit
then so must $p[i] + q[i]$. 

\textbf{Base case}: At bit position $i=0$, $P.v[i] + Q.v[i]$
produces a carry bit only when $P.v[i]=1$ and $Q.v[i]=1$.
Since $P.v[i]$ is $1$, and $Q.v[i]$ is $1$, by
\Eqn{eqn:tnum_value_corresp}, this implies that $\forall p:
p[i] = 1$ and that $\forall q: q[i] = 1$. Hence, $p[i] +
q[i]$ must also produce a carry bit. Hence,
$\ithbit{c}{{sv}}{0} = 1 \rightarrow \ithbit{c}{{out}}{0} =
1$.

\textbf{Induction step}: Assume that for all $0 \leq j \leq
i-1$, $\ithbit{c}{{sv}}{j} = 1 \rightarrow
\ithbit{c}{{out}}{j} = 1$. Now, we show that
$\ithbit{c}{{sv}}{i} = 1 \rightarrow \ithbit{c}{{out}}{i}
= 1$ by considering all possible cases:
  
\begin{enumerate}
\item $P.v[i] = 0$ and $Q.v[i] = 0$. Irrespective of any
carry-in $c_{in}[i]$, $0 + 0 + c_{in}[i]$ can never produce
a carry-out. $\ithbit{c}{{sv}}{i} = 1 \rightarrow
\ithbit{c}{{out}}{i} = 1$ holds vacuously. 

\item $P.v[i] = 1$ and $Q.v[i] = 1$. Irrespective of any
carry-in $c_{in}[i]$, $1 + 1 + c_{in}[i]$ will always
produce a carry-out. Now, since is $P.v[i]$ is $1$, and
$Q.v[i]$ is $0$, by \Eqn{eqn:tnum_value_corresp}, implies
that $\forall p: p[i] = 1$ and $\forall q: q[i] = 1$. Hence
all concrete additions $p[i] + q[i]$ must produce a
carry-out. Hence, $\ithbit{c}{{sv}}{i} = 1 \rightarrow
\ithbit{c}{{out}}{i} = 1$ holds. 

\item $(P.v[i] = 1 \wedge Q.v[i] = 0) \vee (P.v[i] = 0
\wedge Q.v[i] = 1)$. From \Eqn{eqn:tnum_value_corresp}, this
implies that $\forall p, q. \, (p[i] = 1 \wedge (q[i] = 1
\vee q[i] = 0)) \vee ((p[i] = 1 \vee q[i] = 0) \wedge q[i] =
1)$. That is, either $p[i] = 1$ and $q[i]$ is unknown, or
$q[i] = 1$ and $p[i]$ is unknown. This case entails two
possibilites:

\begin{enumerate}
\item $\ithbit{c}{{sv}}{i-1} = 0$. This implies that
$\ithbit{c}{{sv}}{i} = 0$, since there is no carry-out
produced by the additions $1 + 0 + 0$ or $0 + 1 + 0$.
$\ithbit{c}{{sv}}{i} = 1 \rightarrow \ithbit{c}{{out}}{i}
= 1$ holds vacuously. 
  
\item $\ithbit{c}{{sv}}{i-1} = 1$. . Now,
$\ithbit{c}{{sv}}{i} = P.v[i] + Q.v[i] + c_{sv}[i-1]$.
Hence, $\ithbit{c}{{sv}}{i}$ is the carry-out produced by $1
+ 0 + 1$ or $0 + 1 + 1$, which always evalues to $1$. Now,
by the induction hypothesis $\ithbit{c}{{out}}{i-1} = 1$. We
know that $\ithbit{c}{{out}}{i} = p[i] + q[i] +
c_{out}[i-1]$. Hence, $\ithbit{c}{{out}}{i}$  is the
carry-out produced by either $1 + q[i] + 1$ or $p[i] + 1 +
1$, which always evaluates to $1$. Hence
$\ithbit{c}{{sv}}{i} = 1 \rightarrow \ithbit{c}{{out}}{i} =
1$ holds.
\end{enumerate}
\end{enumerate}
  
Hence, it follows that if $P.v[i] + Q.v[i]$ produces a carry
bit then so must $p[i] + q[i]$, for all $p \in \gamma(P)$
and  $q \in \gamma(Q)$.
\end{proof}

\begin{lemma}
\label{lemma:maximum-carries-lemma-appendix}
\emph{(\Lemma{lemma:maximum-carries-lemma} in the main
text).} 
\textbf{Maximum carries lemma.} 
The addition $\Sigma = (P.v + P.m) + (Q.v + Q.m)$ will
produce the sequence of carry bits with the most number of
1s out of all possible additions $p$ + $q$. 
\end{lemma}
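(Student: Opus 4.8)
The plan is to prove the exact dual of the Minimum carries lemma just established: instead of a lower bound I establish an upper bound on the carries. Writing $\ithbit{c}{{\Sigma}}{i}$ for the carry-out at position $i$ in the $\Sigma$ addition and $\ithbit{c}{{out}}{i}$ for the carry-out at position $i$ in an arbitrary concrete addition $p+q$ with $p \in \gamma(P)$, $q \in \gamma(Q)$, I aim to show $\ithbit{c}{{out}}{i} \le \ithbit{c}{{\Sigma}}{i}$ for every position $i$. This bit-for-bit domination is exactly equivalent to the stated consequence that wherever $\Sigma$ produces a $0$ carry, every concrete addition also produces a $0$ carry, and hence that $\Sigma$ carries the maximum number of $1$s.

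First I would identify $\Sigma$ as the sum of the two \emph{largest} concrete values in $\gamma(P)$ and $\gamma(Q)$. By well-formedness (\Eqn{eqn_wff}) we have $P.v \bitwiseAnd P.m = 0$, so $P.v$ and $P.m$ have disjoint set bits; their addition therefore produces no carries and $P.v + P.m = P.v \bitwiseOr P.m$, which is the element of $\gamma(P)$ obtained by setting every uncertain trit to $1$. The same holds for $Q$. Thus $\Sigma = (P.v \bitwiseOr P.m) + (Q.v \bitwiseOr Q.m)$, and from the tnum semantics (\Eqn{eqn:tnum_value_corresp}) every $p \in \gamma(P)$ satisfies $p[i] \le (P.v \bitwiseOr P.m)[i]$ at each bit position, and likewise $q[i] \le (Q.v \bitwiseOr Q.m)[i]$.

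The crux is the observation that the full-adder carry-out of Definition~\ref{addition-of-bits} is a \emph{monotone} function of its three inputs: $\ithbit{c}{{out}}{i}$ equals $1$ exactly when at least two of $p[i]$, $q[i]$, and $\ithbit{c}{{in}}{i}$ are $1$, i.e. it is the majority of the three bits. I would then induct on the bit position $i$. In the base case $i=0$ both additions have carry-in $0$, so $\ithbit{c}{{out}}{0} = p[0] \bitwiseAnd q[0] \le (P.v \bitwiseOr P.m)[0] \bitwiseAnd (Q.v \bitwiseOr Q.m)[0] = \ithbit{c}{{\Sigma}}{0}$. In the inductive step I assume $\ithbit{c}{{out}}{i-1} \le \ithbit{c}{{\Sigma}}{i-1}$, so the concrete carry-in at position $i$ is dominated by the $\Sigma$ carry-in; combining this with $p[i] \le (P.v \bitwiseOr P.m)[i]$ and $q[i] \le (Q.v \bitwiseOr Q.m)[i]$ and applying monotonicity of the majority function yields $\ithbit{c}{{out}}{i} \le \ithbit{c}{{\Sigma}}{i}$, closing the induction.

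I expect the main obstacle to be precisely this monotonicity step. Read literally, the carry-out formula in Definition~\ref{addition-of-bits} contains an exclusive-or ($\bitwiseXor$) term and is not obviously monotone; the key is to resist reading it literally and instead reframe the carry-out as the majority (threshold-$2$) function, after which the inductive domination argument is routine. A secondary point to handle is the behavior of the most significant position, but since $P.v \bitwiseOr P.m$ and $Q.v \bitwiseOr Q.m$ already fit in $n$ bits, no special treatment of overflow is needed. As an alternative that parallels the proof of the Minimum carries lemma more closely, I could replace the monotonicity appeal with an explicit case analysis over the operand-bit combinations at position $i$; this avoids invoking monotonicity abstractly but is considerably more tedious.
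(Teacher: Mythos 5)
Your proof is correct, but it takes a genuinely different route from the paper's. The paper proves the contrapositive form ($\ithbit{c}{{\Sigma}}{i} = 0 \rightarrow \ithbit{c}{{out}}{i} = 0$) by induction with an exhaustive case analysis over the values of $P.v[i] + P.m[i]$ and $Q.v[i] + Q.m[i]$ (both zero, both one, and the two mixed cases, each mixed case splitting further on the previous carry), mirroring bit-for-bit the structure of its Minimum carries lemma proof; monotonicity is never named and is instead re-derived implicitly inside each case. You instead factor the argument into two clean reusable facts: (i) by well-formedness, $P.v + P.m = P.v \bitwiseOr P.m$ is the bitwise-largest element of $\gamma(P)$, so every $p \in \gamma(P)$ is dominated bit-by-bit (and similarly for $Q$); and (ii) the full-adder carry-out is the majority function of its three inputs and hence monotone, so domination of operand bits and carry-in propagates to domination of carry-out by induction on position. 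Your identification of the carry-out as the majority function is sound (when $p[i] = q[i]$ the formula in Definition~\ref{addition-of-bits} reduces to $p[i] \bitwiseAnd q[i]$, and when they differ it reduces to the carry-in, which is exactly majority), and your observation that the literal $\bitwiseXor$ form obscures this is exactly the right thing to flag. What your approach buys: it is shorter, it states the result in the stronger and more natural form $\ithbit{c}{{out}}{i} \leq \ithbit{c}{{\Sigma}}{i}$, and the identical argument with domination from below ($P.v[i] \leq p[i]$) yields the Minimum carries lemma as well, unifying the two proofs. What the paper's approach buys: it is entirely elementary, requires recognizing no identity about majority functions, and keeps the two lemma proofs structurally parallel, which makes the subsequent Capture uncertainty lemma easy to cross-reference case by case.
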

  
\begin{proof}
We prove that any $p$ + $q$ will produce a sequence of carry
bits with 1s in at most those positions where $\Sigma$
produced carry bits set to 1.

Let $\ithbit{c}{{\Sigma}}{i}$ denote the carry out bit at
the \ith{i} position produced by $P.v + Q.v$ and
$\ithbit{c}{{out}}{i}$  denote the carry out bit at the
\ith{i} position produced by $p + q$. We will prove by
induction that if $\ithbit{c}{{\Sigma}}{i}$ does not produce
carry bit then neither will $\ithbit{c}{{out}}{i}$.

\textbf{Base case}: At bit position $i=0$,
$\ithbit{c}{{\Sigma}}{0} = (P.v[0] + P.m[0]) + (Q.v[0] +
Q.m[0])$ does not produces produce a carry bit only when
$(\tvi{P}[0] = 0 \wedge \tmi{P}[0] = 0) \vee (\tvi{Q}[0] =0
\wedge \tmi{Q}[0] = 0)$. This implies that either $p[i] = 0
\vee q[i] = 0$. Hence $p[0] + q[0]$ will also not produce a
carry. Hence $\ithbit{c}{{\Sigma}}{0} = 0 \rightarrow
\ithbit{c}{{out}}{0} = 0$.

\textbf{Induction step}: Assume that for all $0 \leq j \leq
i-1$, $\ithbit{c}{{\Sigma}}{j} = 0 \rightarrow
\ithbit{c}{{out}}{j} = 0$. Now, we show that
$\ithbit{c}{{\Sigma}}{i} = 0 \rightarrow \ithbit{c}{{out}}{i}
= 0$ by considering all possible cases:
    
\begin{enumerate}
\item $P.v[i] + P.m[i]= 0$ and $Q.v[i] + Q.v[i] = 0$.
Irrespective of any carry-in $c_{in}[i]$, $0 + 0 +
c_{in}[i]$ can never produce a carry-out. Now, $P.v[i] +
P.m[i]= 0$ holds only when $P.v[i] = 0 \wedge \tmi{Q}[i] =
0$. By \Eqn{eqn:tnum_value_corresp} $p[i] = 0 \wedge q[i] =
0$. Hence, irrespective of any carry-in $p[i] + q[i]$ will
not produce a carry-out. Hence, $\ithbit{c}{{\Sigma}}{i} = 0
\rightarrow \ithbit{c}{{out}}{i} = 0$ holds. 
  
\item $P.v[i] + P.m[i]= 1$ and $Q.v[i] + Q.v[i] = 1$.
Irrespective of any carry-in $c_{in}[i]$, this will always
produce a carry-out. can never produce a carry-out. Hence,
$\ithbit{c}{{\Sigma}}{i} = 0 \rightarrow \ithbit{c}{{out}}{i} =
0$ holds vacuously.

\item $P.v[i] + P.m[i]= 0$ and $Q.v[i] + Q.v[i] = 1$.
This case entails two possibilites:
\begin{enumerate}
\item $\ithbit{c}{{\Sigma}}{i-1} = 1$. This implies that
$\ithbit{c}{{\Sigma}}{i} = 1$. Hence $\ithbit{c}{{\Sigma}}{i} =
0 \rightarrow \ithbit{c}{{out}}{i} = 0$ holds vacuously.
\item $\ithbit{c}{{\Sigma}}{i-1} = 0$.  This implies that
$\ithbit{c}{{\Sigma}}{i} = 0$. Note because $P.v[i] +
P.m[i]= 0$ we have $P.v[i] = 0$ and $P.m[i]= 0$, and by
\Eqn{eqn:tnum_value_corresp}, this implies  $p[i] = 0$. Now,
by the induction hypothesis $\ithbit{c}{{out}}{i-1} = 0$.
Hence, irrespective of the value of $q[i]$,
$\ithbit{c}{{out}}{i}$ does not produce a carry. Hence
$\ithbit{c}{{\Sigma}}{i} = 0 \rightarrow
\ithbit{c}{{out}}{i} = 0$ holds. 
\end{enumerate}

\item $P.v[i] + P.m[i]= 1$ and $Q.v[i] + Q.v[i] = 0$.
This case is completely symmetrical to case 3. 
\end{enumerate}

Hence, it follows that if $(P.v + P.m) + (Q.v + Q.m)$ does
not produce a carry, then neither will $p[i] + q[i]$, for
all $p \in \gamma(P)$ and  $q \in \gamma(Q)$.
\end{proof}

\begin{lemma}
\label{lemma:capture-uncertainty-lemma-appendix}
\emph{(\Lemma{lemma:capture-uncertainty-lemma} in the main
text).} 
\textbf{Capture uncertainty lemma.} 
Let $sv_{c}$ and $\Sigma_{c}$ be the sequence of carry-in
bits from the additions in $sv$ and $\Sigma$, respectively.
Suppose $\chi_c \triangleq sv_{c} \,\bitwiseXor\,
\Sigma_{c}$. The bit positions $k$ where
$\ithbitx{\chi_c}{k} = 0$ have carry bits fixed in all
concrete additions $p+q$ from $+(\gamma(P), \gamma(Q))$. The
bit positions $k$ where $\ithbitx{\chi_c}{k} = 1$ vary
depending on the concrete addition: \ie $\exists p_1, p_2
\in \gamma(P), q_1, q_2 \in \gamma(Q)$ such that $p_1 + q_1$
has its carry bit set at position $k$ but $p_2 + q_2$ has
that bit unset.
\end{lemma}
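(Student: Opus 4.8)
The plan is to reduce the entire statement to the two consequences already established for the minimum and maximum carries lemmas (\Lemma{lemma:minimum-carries-lemma} and \Lemma{lemma:maximum-carries-lemma}), together with the single structural observation that $\chi_c$ is the XOR of a minimal and a maximal carry sequence. First I would observe that $sv_c$ and $\Sigma_c$ are themselves the carry sequences of genuine concrete additions in $+(\gamma(P), \gamma(Q))$. Indeed $P.v$ (the member of $\gamma(P)$ with every uncertain trit set to $0$) and $P.v \bitwiseOr P.m$ (every uncertain trit set to $1$) both lie in $\gamma(P)$, and by well-formedness $P.v \bitwiseAnd P.m = 0$ so that $P.v + P.m = P.v \bitwiseOr P.m$; the same holds for $Q$. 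Hence $sv = P.v + Q.v$ and $\Sigma = (P.v + P.m) + (Q.v + Q.m)$ are both instances of $p + q$. I would also record the carry-in/carry-out bookkeeping: since $c_{in}[i] = c_{out}[i-1]$, the carry-in sequences are just shifts of the carry-out sequences analyzed in the two lemmas, so the ``fixed'' and ``varying'' conclusions transfer to the indexing used here.

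Next I would establish the bitwise domination $sv_c \leq \Sigma_c$, i.e. $sv_c[k] = 1 \Rightarrow \Sigma_c[k] = 1$. This is immediate from the minimum carries lemma: if $sv_c[k] = 1$, then every concrete addition $p + q$ carries $1$ at position $k$, and in particular $\Sigma$ does, so $\Sigma_c[k] = 1$. With domination in hand, $\chi_c = sv_c \bitwiseXor \Sigma_c$ satisfies $\chi_c[k] = 0 \Leftrightarrow sv_c[k] = \Sigma_c[k]$, and $\chi_c[k] = 1 \Leftrightarrow (sv_c[k] = 0 \wedge \Sigma_c[k] = 1)$.

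Then I would dispatch the two halves of the statement by case analysis on $\chi_c[k]$. When $\chi_c[k] = 0$ there are two subcases: if $sv_c[k] = \Sigma_c[k] = 1$, the minimum carries lemma forces every $p + q$ to carry $1$ at $k$; if $sv_c[k] = \Sigma_c[k] = 0$, the maximum carries lemma forces every $p + q$ to carry $0$ at $k$. Either way the carry at $k$ is fixed across all of $+(\gamma(P), \gamma(Q))$, proving the first claim. When $\chi_c[k] = 1$ I would exhibit the two required witnesses directly: taking $p_1 = P.v + P.m$, $q_1 = Q.v + Q.m$ gives the addition $\Sigma$, whose carry at $k$ is $\Sigma_c[k] = 1$ (set), while $p_2 = P.v$, $q_2 = Q.v$ gives $sv$, whose carry at $k$ is $sv_c[k] = 0$ (unset), establishing that the carry genuinely varies.

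I expect the only real care to lie in the bookkeeping rather than in any deep argument: confirming that the two extreme additions $P.v$ and $P.v \bitwiseOr P.m$ are legitimately in $\gamma(P)$, and keeping the carry-in versus carry-out indexing consistent throughout. The mathematical content is entirely front-loaded into the two carries lemmas; this lemma is essentially a clean combination of them, turning on the fact that XORing the minimal carry sequence $sv_c$ with the maximal carry sequence $\Sigma_c$ isolates exactly the positions where the two extremes disagree, which are precisely the positions that are free to vary over intermediate concrete additions.
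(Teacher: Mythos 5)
Your proposal is correct and follows essentially the same route as the paper's proof: it uses the minimum and maximum carries lemmas to show the carry is fixed wherever $sv_c$ and $\Sigma_c$ agree, exhibits $sv = P.v + Q.v$ and $\Sigma = (P.v+P.m)+(Q.v+Q.m)$ themselves as the concrete-addition witnesses wherever they disagree, and rules out $sv_c[k]=1 \wedge \Sigma_c[k]=0$ (the paper does this as an explicit fourth impossible case; you package it as the domination $sv_c[k]=1 \Rightarrow \Sigma_c[k]=1$, which is the same argument). The only cosmetic difference is organizational --- your domination-first structure versus the paper's flat four-way case split on $(\Sigma_c[i], sv_c[i])$ --- with identical mathematical content.
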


\begin{proof}
Consider the carry sequences $sv_{c}$ and $\Sigma_{c}$. At
any given bit position $i$, we consider the following cases:

\begin{enumerate}
\item $\Sigma_{c}[i] = 0$, and $sv_{c}[i] = 0$. From the
maximum carries lemma (Lemma
\ref{lemma:maximum-carries-lemma-appendix}), $\Sigma_{c}[i]
= 0$ implies that \emph{no} addition $p[i-1] + q[i-1] +
c_{in}[i-1]$ (where $\Tnumin{p}{P}$ and $\Tnumin{q}{Q}$, and
$c_{in}[i]$ is the carry-out from bit position $i-2$)
produced a carry-out from position $i-1$. Equivalently, the
carry-in at bit position $i$ is \emph{always} unset. That
is, \emph{all} concrete additions $p + q$ have a fixed
carry-in bit of $0$ at position $i$. 
\item $\Sigma_{c}[i] = 1$, and $sv_{c}[i] = 1$. From the
minimum carries lemma, (Lemma
\ref{lemma:minimum-carries-lemma-appendix}), $sv_{c}[i] = 1$
implies that \emph{all} additions $p[i-1] + q[i-1] +
c_{in}[i-1]$ (where $\Tnumin{p}{P}$ and $\Tnumin{q}{Q}$, and
$c_{in}[i-1]$ is the carry-out from bit position $i-2$)
produced a carry-out from position $i-1$. Equivalently, the
carry-in at bit position $i$ is \emph{always} set. That is,
\emph{all} concrete additions $p + q$ have a fixed carry-in
bit of $1$ at position $i$. 
\item $\Sigma_{c}[i] = 1$, and $sv_{c}[i] = 0$. 
Note that it always holds that $\Tnumin{P.v}{P}$ and
$\Tnumin{Q.v}{Q}$. Since $sv_{c}[i] = 0$, we know one
concrete addition  $p + q$ that definitely does not produce
a carry-out from bit position $i-1$: setting $p = P.v$ and
$q = Q.v$. Now note also that it always holds that
$\Tnumin{P.v + P.m}{P}$ and $\Tnumin{Q.v + Q.m}{Q}$. Since
$\Sigma_{c}[i] = 1$, we know one concrete addition  $p + q$
that definitely produces a carry-out from bit position
$i-1$: setting $p = P.v + P.m$ and $q = Q.v + Q.m$. 
\item $\Sigma_{c}[i] = 0$, and $sv_{c}[i] = 1$. This case is
never possible, because if the additions in $sv[i-1]$
produce a carry out (it is given that $sv_{c}[i] = 1$), then
so must the additions in $\Sigma[i-1]$ (but it is given that
$\Sigma_{c}[i] = 0$). 
\end{enumerate}

From cases 1) and 2) above, it is clear that if
$\Sigma_{c}[i]$ and $sv_{c}[i]$ are the same \ie if
$\chi_{c}[i] = 0$, then all the additions at position $i-1$
exclusively (i) produce a carry-out, or (ii) do not produce
a carry-out. From case 3) it is clear that if
$\Sigma_{c}[i]$ and $sv_{c}[i]$ differ \ie if $\chi_{c}[i] =
1$, then some additions at position $i-1$ produce a
carry-out and some additions do not produce a carry-out.
This proves our result.
\end{proof}

For the next lemma, recall the definitions of the full 
adder equations. 

\begin{definition}
\label{addition-of-bits-appendix}
\emph{(Definition~\ref{addition-of-bits} in the main text).}
\textbf{Full adder equations.}
When adding two concrete binary numbers $p$ and $q$, each
bit of the addition result $r$ is set according to the
following:
$$r[i] = p[i] \,\bitwiseXor\, q[i] \,\bitwiseXor\,
\ithbit{c}{{in}}{i}$$ where $\bitwiseXor$ is the
exclusive-or operation and $\ithbit{c}{{in}}{i} =
\ithbit{c}{{out}}{i-1}$ and $\ithbit{c}{{out}}{i-1}$ is the
carry-out from the addition in bit position $i-1$. The
carry-out bit at the $i^{th}$ position is given by
$$\ithbit{c}{{out}}{i} = (p[i] \,\bitwiseAnd\, q[i])
\,\bitwiseOr\, (\ithbit{c}{{in}}{i} \,\bitwiseAnd\, (p[i]
\,\bitwiseXor\, q[i]))$$
\end{definition}

\begin{lemma}
\label{thm:add-soundness-maximal-precision-appendix}
\emph{(\Lemma{thm:add-soundness-maximal-precision} in the
main text).} 
\textbf{Equivalence of mask expressions.} 
Tnum addition (\code{tnum_add}) uses $(sv \,\bitwiseXor\,
\Sigma) \,\bitwiseOr\, P.m \,\bitwiseOr\, Q.m$ to compute
the mask of the result. It is in fact the case that $(sv
\,\bitwiseXor\, \Sigma) \,\bitwiseOr\, P.m \,\bitwiseOr\,
Q.m$ and $(sv_{c} \,\bitwiseXor\, \Sigma_{c}) \,\bitwiseOr\,
P.m \,\bitwiseOr\, Q.m$ compute the same result.
\end{lemma}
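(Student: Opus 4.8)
The plan is to prove this identity purely at the bit level, using the full adder sum equation (Definition~\ref{addition-of-bits-appendix}) together with the well-formedness of $P$ and $Q$; unlike the minimum/maximum carries lemmas it needs no induction, only direct algebra on the per-bit sum equations. First I would set $P_{\max} \triangleq P.v + P.m$ and $Q_{\max} \triangleq Q.v + Q.m$, so that by the maximum carries lemma (\Lemma{lemma:maximum-carries-lemma-appendix}) the value $\Sigma = P_{\max} + Q_{\max}$ and $\Sigma_c$ is the carry-in sequence of \emph{this} addition. The one genuinely subtle point — and the step I expect to be the main obstacle — is exactly which carry sequence $\Sigma_c$ denotes: although \code{tnum_add} computes $\Sigma$ internally as $sv + sm$, the sequence $\Sigma_c$ named in the capture uncertainty lemma is the one arising from the grouping $(P.v + P.m) + (Q.v + Q.m)$, not from $sv + sm$. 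These two groupings produce the same integer $\Sigma$ but genuinely different carry sequences, so I must apply the full adder equation to the $P_{\max} + Q_{\max}$ addition, where $\Sigma_c$ actually lives, and rely only on the (grouping-independent) value of $\Sigma$ when relating it back to the algorithm.

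With that fixed, the full adder sum equation gives, at every bit position $i$, $sv[i] = P.v[i] \,\bitwiseXor\, Q.v[i] \,\bitwiseXor\, sv_c[i]$ and $\Sigma[i] = P_{\max}[i] \,\bitwiseXor\, Q_{\max}[i] \,\bitwiseXor\, \Sigma_c[i]$. Well-formedness ($P.v \,\bitwiseAnd\, P.m = 0$) means $P.v[i]$ and $P.m[i]$ are never both $1$, so the single-bit sum $P.v[i] + P.m[i]$ carries nothing and equals $P.v[i] \,\bitwiseXor\, P.m[i]$; hence $P.v[i] \,\bitwiseXor\, P_{\max}[i] = P.m[i]$, and symmetrically $Q.v[i] \,\bitwiseXor\, Q_{\max}[i] = Q.m[i]$. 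Taking the bitwise XOR of the two sum equations and substituting these two facts yields the key relation, as bitvectors, $sv \,\bitwiseXor\, \Sigma = (P.m \,\bitwiseXor\, Q.m) \,\bitwiseXor\, (sv_c \,\bitwiseXor\, \Sigma_c)$.

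Finally I would OR both sides of the target equation with $M \triangleq P.m \,\bitwiseOr\, Q.m$. From the key relation, the two candidate masks differ exactly by $(sv \,\bitwiseXor\, \Sigma) \,\bitwiseXor\, (sv_c \,\bitwiseXor\, \Sigma_c) = P.m \,\bitwiseXor\, Q.m$, whose support is contained in that of $M$. Therefore $sv \,\bitwiseXor\, \Sigma$ and $sv_c \,\bitwiseXor\, \Sigma_c$ agree on every bit \emph{outside} $M$, while on bits \emph{inside} $M$ the $\bitwiseOr\, M$ forces both results to $1$; hence $(sv \,\bitwiseXor\, \Sigma) \,\bitwiseOr\, M = (sv_c \,\bitwiseXor\, \Sigma_c) \,\bitwiseOr\, M$, which is precisely the claim. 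The remaining work is routine bit bookkeeping; the only places that need care are the carry-grouping issue flagged above and the single-bit collapse $P_{\max}[i] \,\bitwiseXor\, P.v[i] = P.m[i]$ that makes the whole identity go through.
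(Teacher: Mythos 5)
Your proof is correct and takes essentially the same route as the paper's: both apply the full-adder sum equation bitwise to the $sv$ and $\Sigma$ additions, use well-formedness ($P.v \,\bitwiseAnd\, P.m = 0$) to collapse $P.v[i] \,\bitwiseXor\, (P.v[i] + P.m[i])$ to $P.m[i]$ (and similarly for $Q$), deduce that the two candidate masks XOR to exactly $P.m \,\bitwiseXor\, Q.m$, and absorb that difference via the observation that $P.m[i] \,\bitwiseXor\, Q.m[i] = 1$ implies $P.m[i] \,\bitwiseOr\, Q.m[i] = 1$. Your explicit treatment of which grouping defines $\Sigma_c$ (namely $(P.v + P.m) + (Q.v + Q.m)$, not $sv + sm$) and of the carry-free collapse $(P.v + P.m)[i] = P.v[i] \,\bitwiseXor\, P.m[i]$ spells out two points the paper's proof leaves implicit.
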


\begin{proof}
We have to prove that given $sv_{c}[i+1] \, \bitwiseXor
\Sigma_{c}[i+1] = 1$, this implies that there exist $p_{1},
p_{2}, q_{1}, q_{2}$ such that $p_{1}[i] + q_{1}[i] +
c_{in}[i]$ produces a carry out, and  $p_{2}[i] + q_{2}[i] +
c_{in}[i]$ does not produce a carry out.

Recall that $sv = P.v + Q.v$ and $\Sigma = (P.v + P.m) +
(Q.v + Q.m)$ Let $sv_{c}$ and $\Sigma_{c}$ be the sequence
of carry-in bits from the additions in $sv$ and $\Sigma$.
respectively. We prove that the terms $(sv_c\ \bitwiseXor\
\Sigma_c) \,\bitwiseOr\, P.m \,\bitwiseOr\, Q.m$ and $(sv\
\bitwiseXor\ \Sigma) \,\bitwiseOr\, P.m \,\bitwiseOr\, Q.m$
are exactly the same.

From Definition~ \ref{addition-of-bits-appendix} of the
full-adder equation, we have the addition result $r[i] =
p[i] \,\bitwiseXor\, q[i] \,\bitwiseXor\,
\ithbit{c}{{in}}{i}$. Hence, the \ith{i} bit of $sv$,
$sv[i]$ is $P.v[i] \,\bitwiseXor\, Q.v[i] \,\bitwiseXor\,
sv_c[i]$. Similarly, \ith{i} bit of $\Sigma[i]$ can be
written as $(P.v[i] + P.m[i]) \,\bitwiseXor\, (Q.v[i] +
Q.m[i]) \,\bitwiseXor\, \Sigma_c[i]$. Consider $sv[i]
\,\bitwiseXor\, \Sigma[i]$.

\begin{small}
\begin{equation*}
\begin{aligned}
sv[i] \,\bitwiseXor\, \Sigma[i] = & P.v[i] \,\bitwiseXor\, Q.v[i] \,\bitwiseXor\, 
sv_c[i] \\ & \,\bitwiseXor\, (P.v[i] + P.m[i]) 
\,\bitwiseXor\, (Q.v[i] + Q.m[i]) \,\bitwiseXor\,
\Sigma_c[i]
\end{aligned}
\end{equation*}
\end{small}

Now $sv[i] \,\bitwiseXor\, \Sigma[i]$ can differ from
$sv_{c}[i] \,\bitwiseXor\, \Sigma_{c}[i]$, if their
exclusive-or is $1$, \ie $(sv[i] \,\bitwiseXor\, \Sigma[i])
\,\bitwiseXor\, (sv_{c}[i] \,\bitwiseXor\, \Sigma_{c}[i])$
is $1$.

\begin{small}
\begin{equation*}
\begin{aligned}
& (sv[i] \,\bitwiseXor\, \Sigma[i]) \,\bitwiseXor\, 
  (sv_c[i] \,\bitwiseXor\, \Sigma_c[i]) \\
& = P.v[i] \,\bitwiseXor\, Q.v[i] \,\bitwiseXor\, sv_c[i] \\ 
& \; \; \;  \; \; \; \,\bitwiseXor\, (P.v[i] + P.m[i]) 
 \,\bitwiseXor\, (Q.v[i] + Q.m[i]) \,\bitwiseXor\, 
\Sigma_c[i] \\
& \; \; \;  \; \; \; \,\bitwiseXor\, sv_c[i] \,\bitwiseXor\, \Sigma_c[i] \\
& = P.v[i] \,\bitwiseXor\, Q.v[i] \,\bitwiseXor\, 
  (P.v[i] + P.m[i]) \,\bitwiseXor\, (Q.v[i] + Q.m[i]) \\
& = P.v[i] \,\bitwiseXor\, (P.v[i] + P.m[i]) 
  \,\bitwiseXor\, Q.v[i] \,\bitwiseXor\, (Q.v[i] + Q.m[i]) 
\end{aligned}
\end{equation*}
\end{small}

Now consider the sub-term in the above formula $P.v[i]
\,\bitwiseXor\, (P.v[i] + P.m[i])$. For any non-empty tnum
$P.v[i] \, \bitwiseAnd \, P.m[i] = 0$. Hence the term
$P.v[i] \,\bitwiseXor\, (P.v[i] + P.m[i])$ reduces to
$P.m[i]$. Similarly, term $Q.v[i] \,\bitwiseXor\, (Q.v[i] +
Q.m[i])$ reduces to $Q.m[i]$. 

Thus, $sv[i] \,\bitwiseXor\, \Sigma[i]$ differs from
$sv_c[i] \,\bitwiseXor\, \Sigma_c[i]$ when $P.m[i]
\,\bitwiseXor\, Q.m[i] = 1$. Now, if $P.m[i] \,\bitwiseXor\,
Q.m[i] = 1$ then $P.m[i] \,\bitwiseOr \,Q.m[i] = 1$ as well.
Thus, setting $x \triangleq sv \,\bitwiseXor\, \Sigma$ or $x
\triangleq sv_{c} \,\bitwiseXor\, \Sigma_{c}$ is immaterial
to the value $x \,\bitwiseOr\, P.m \,\bitwiseOr\, Q.m.$.
\end{proof}

\begin{lemma}
\label{thm:soundness-optimality-of-tnum-add-appendix}
\emph{(\Lemma{thm:soundness-optimality-of-tnum-add} in the
main text).} 
\textbf{Soundness and optimality of tnum\_add.} 
The algorithm \code{tnum_add} shown in \Lst{tnum_add} is a
sound and optimal abstraction of concrete addition over
\nbit bitvectors.
\end{lemma}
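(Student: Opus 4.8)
The plan is to prove soundness and optimality in one stroke by showing that the tnum $R$ returned by \code{tnum_add} equals the optimal abstraction $\alpha(+(\gamma(P), \gamma(Q)))$ of the concrete output set $S \triangleq +(\gamma(P), \gamma(Q))$. Recall from the background that $\alpha \circ + \circ \gamma$ is by construction the maximally precise sound abstraction of $+$; hence establishing $R = \alpha(S)$ delivers both soundness and optimality at once. Since $\alpha$ is bitwise exact (\Eqn{eqn_bitwise_exact}), it suffices to match $R$ and $\alpha(S)$ position by position: I must show that $R$ marks bit position $k$ as uncertain ($\mu$) precisely when $S$ contains two elements differing at bit $k$, and that otherwise $R.v[k]$ equals the value shared by bit $k$ across all of $S$.

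First I would write each concrete result bit using the full-adder equation (Definition~\ref{addition-of-bits}): for $p \in \gamma(P)$, $q \in \gamma(Q)$, the sum bit is $r[k] = p[k] \bitwiseXor q[k] \bitwiseXor c_{in}[k]$. The three contributing bits are drawn from independent parts of the inputs: $p[k]$ and $q[k]$ depend only on the $k$-th trits of $P$ and $Q$, while the carry-in $c_{in}[k]$ depends only on trits below position $k$. Hence $r[k]$ takes both values across $S$ if and only if at least one of the three bits varies across $S$. By the definition of the masks, $p[k]$ varies iff $P.m[k] = 1$ and $q[k]$ varies iff $Q.m[k] = 1$; by \Lemma{lemma:capture-uncertainty-lemma}, $c_{in}[k]$ varies iff $\chi_c[k] = 1$, where $\chi_c \triangleq sv_c \bitwiseXor \Sigma_c$. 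Thus the set of uncertain positions of $\alpha(S)$ is exactly the support of $\chi_c \bitwiseOr P.m \bitwiseOr Q.m$.

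Next I would reconcile this mask with the one \code{tnum_add} actually computes, namely $\eta = \chi \bitwiseOr P.m \bitwiseOr Q.m$ with $\chi = \Sigma \bitwiseXor sv$. This is precisely the content of \Lemma{thm:add-soundness-maximal-precision}, which shows $(sv \bitwiseXor \Sigma) \bitwiseOr P.m \bitwiseOr Q.m = (sv_c \bitwiseXor \Sigma_c) \bitwiseOr P.m \bitwiseOr Q.m$, so $R.m = \eta$ equals the optimal mask. For the certain positions, where $P.m[k] = Q.m[k] = \chi_c[k] = 0$, the operand bits are pinned to $P.v[k], Q.v[k]$ and, since $\chi_c[k] = 0$ forces $sv_c[k] = \Sigma_c[k]$, the carry-in is pinned as well; by \Lemma{lemma:minimum-carries-lemma} this fixed carry is $sv_c[k]$. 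The common result bit is therefore $P.v[k] \bitwiseXor Q.v[k] \bitwiseXor sv_c[k] = sv[k]$, which matches $R.v = sv \bitwiseAnd \bitwiseNot\,\eta$ exactly on the positions where $\eta[k] = 0$. Assembling the uncertain and certain positions yields $R = \alpha(S)$, establishing both soundness and optimality.

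The main obstacle I anticipate is the two directions of the uncertainty characterization, and in particular the independence argument underlying it. For each uncertain position I must exhibit two concrete sums in $S$ that disagree at bit $k$ while holding the other contributors fixed: when I flip $p[k]$ to witness the effect of $P.m[k]=1$, I must keep the lower-order bits of $p$ and $q$ unchanged so that $c_{in}[k]$ is unaffected, and symmetrically the carry-flip witnesses supplied by \Lemma{lemma:capture-uncertainty-lemma} (the $p_1,q_1,p_2,q_2$) must be chosen without disturbing $p[k]$ and $q[k]$. Conversely, I must confirm that the "certain" positions genuinely have a fixed carry-in, which relies on combining the minimum and maximum carry lemmas precisely where $\chi_c[k]=0$. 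Threading these witnesses through the full-adder recurrence, rather than reasoning about any single bit in isolation, is the delicate part of the argument.
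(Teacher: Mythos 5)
Your proposal is correct and follows essentially the same route as the paper: it assembles the minimum/maximum-carries lemmas, the capture-uncertainty lemma, the mask-equivalence lemma, and the bitwise exactness of $\alpha$ to conclude that \code{tnum_add} computes exactly $\alpha\ \circ\ +\ \circ\ \gamma$ on the concrete output set, which yields soundness and optimality simultaneously. Your write-up is in fact more explicit than the paper's own one-sentence proof of this lemma, especially in spelling out the witness constructions (flipping an operand bit while freezing lower-order bits, and choosing carry witnesses that do not disturb the operand bits at position $k$) needed to make the ``result bit varies iff some contributor varies'' step rigorous.
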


\begin{proof}
Since \code{tnum_add} captures all, and only, the
uncertainty in the concrete results of tnum addition, it
is sound and optimal.    
\end{proof}

\subsection{Proof of our new algorithm for Tnum multiplication}
\label{sec:tnum-mul-full-proof-appendix}

In this section, we present the complete proof of our tnum
multiplication algorithm in \Lst{our_mul_simplified}. Recall
first, the definition of our concretization function
$\gamma$. 

\begin{align}
\begin{split}
\label{eqn_gamma_appendix}
\gamma(t) \; = \; \gamma(\tnumi{t}) \; & \triangleq \; \big\{ c \in \mathbb{Z} \mid c \; \bitwiseAnd \; \bitwiseNot \tmi{t} = \tvi{t} \big\} \\
\end{split}
\end{align}

\begin{observation}
\label{obs:partial-products-appendix} 
\emph{(\Obs{partial-products} in the main text).}
For two concrete bitvectors $x$ and $y$ of width $n$ bits,
the result of multiplication $y * x$ is just
$$ y * x = \sum_{k=0}^{n-1} \ithbitx{x}{k} * (y
\,\bitwiseLshift\, k)$$
\end{observation}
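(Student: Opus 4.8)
The plan is to reduce this identity to the standard binary expansion of the multiplier $x$ together with distributivity of multiplication over addition, while being careful that every operation takes place in the ring $\bb{Z}{n}$ of \nbit bitvectors (i.e., modulo $2^n$) rather than over $\mathbb{Z}$. First I would write $x$ in terms of its bits as $x = \sum_{k=0}^{n-1} \ithbitx{x}{k} \cdot 2^k$, where each $\ithbitx{x}{k} \in \{0, 1\}$; this is simply the definition of the binary representation of an \nbit value. Next I would record the key identity relating the shift operator to multiplication by a power of two: over $\bb{Z}{n}$, we have $y \bitwiseLshift k = y \cdot 2^k$, since left-shifting by $k$ positions inserts $k$ zero bits at the low end and discards exactly the bits that overflow past position $n-1$, which is precisely the reduction modulo $2^n$.

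With these two facts in hand, the derivation is a direct application of distributivity: $y \arithMul x = y \cdot \sum_{k=0}^{n-1} \ithbitx{x}{k} \cdot 2^k = \sum_{k=0}^{n-1} \ithbitx{x}{k} \cdot (y \cdot 2^k) = \sum_{k=0}^{n-1} \ithbitx{x}{k} \arithMul (y \bitwiseLshift k)$. Each term $\ithbitx{x}{k} \arithMul (y \bitwiseLshift k)$ is exactly one partial product, contributing $y \bitwiseLshift k$ when $\ithbitx{x}{k} = 1$ and $0$ when $\ithbitx{x}{k} = 0$, so the sum recovers $y \arithMul x$ as claimed.

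The one place I would take care — and the only real obstacle — is the modular reduction. Because the concrete operators act on $\bb{Z}{n}$, both the product $y \arithMul x$ and each shifted term $y \bitwiseLshift k$ are implicitly reduced modulo $2^n$, and the summation on the right is likewise an \nbit sum. The identity nonetheless survives because reduction modulo $2^n$ is a ring homomorphism: the identity $y \cdot x = \sum_{k=0}^{n-1} \ithbitx{x}{k} \cdot y \cdot 2^k$ holds exactly over $\mathbb{Z}$, and applying the homomorphism to both sides commutes with the sum and with each product, so no information is lost by computing each partial product and the accumulating sum modulo $2^n$. An uncareful argument that treated $\bitwiseLshift$ as genuine multiplication by $2^k$ in $\mathbb{Z}$ rather than in $\bb{Z}{n}$ is the only way this could go wrong, so I would state the homomorphism step explicitly rather than leave it implicit.
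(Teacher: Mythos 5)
Your proof is correct. Note that the paper itself offers no proof of this statement: it is presented purely as an \emph{observation} (the defining decomposition of long multiplication into partial products) in both the main text and the appendix, so there is no paper proof to compare against. Your argument --- binary expansion $x = \sum_{k=0}^{n-1} \ithbitx{x}{k} \cdot 2^k$, the identity $y \bitwiseLshift k = y \cdot 2^k$ in $\bb{Z}{n}$, distributivity over $\mathbb{Z}$, and then reduction modulo $2^n$ as a ring homomorphism commuting with the sum and products --- is the natural and complete justification of exactly what the paper takes for granted, and your explicit handling of the modular reduction is the right place to be careful, since it is the only point where an unwary argument could conflate arithmetic in $\mathbb{Z}$ with arithmetic in $\bb{Z}{n}$.
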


We call each term in the summation $\ithbitx{x}{k} * (y \ll
k)$ a {\em partial product.}

\begin{lemma}
\label{lemma:union-tnum-with-zero-appendix}
\emph{(\Lemma{lemma:union-tnum-with-zero} in the main
text).} \textbf{Tnum set union with zero.} Given a non-empty
tnum $P \in \bb{Z}{n} \times \bb{Z}{n}$, define $Q
\triangleq tnum(0, P.v \mid P.m)$. Then, (i) $x \in
\gamma(P) \Rightarrow x \in \gamma(Q)$, and (ii) $0 \in
\gamma(Q)$.
\end{lemma}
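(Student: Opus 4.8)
The plan is to reason directly from the algebraic definition of the concretization function in \eqref{eqn_gamma_appendix}, under which membership $c \in \gamma(T)$ is the single bit-vector equation $c \bitwiseAnd \bitwiseNot \tmi{T} = \tvi{T}$. By construction of $Q$ we have $\tvi{Q} = 0$ and $\tmi{Q} = \tvi{P} \bitwiseOr \tmi{P}$, so both claims reduce to verifying one such equation for $Q$.

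Part (ii) is immediate: substituting $c = 0$ into the membership condition for $Q$ gives $0 \bitwiseAnd \bitwiseNot \tmi{Q} = 0 = \tvi{Q}$, since ANDing with $0$ always yields $0$ and $\tvi{Q} = 0$ by definition. Hence $0 \in \gamma(Q)$.

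For part (i), I would take any $x \in \gamma(P)$, so that $x \bitwiseAnd \bitwiseNot \tmi{P} = \tvi{P}$ holds, and show $x \bitwiseAnd \bitwiseNot \tmi{Q} = \tvi{Q} = 0$. The key step is the De Morgan rewrite $\bitwiseNot \tmi{Q} = \bitwiseNot(\tvi{P} \bitwiseOr \tmi{P}) = \bitwiseNot \tvi{P} \bitwiseAnd \bitwiseNot \tmi{P}$. Substituting and reassociating the AND gives $x \bitwiseAnd \bitwiseNot \tmi{Q} = (x \bitwiseAnd \bitwiseNot \tmi{P}) \bitwiseAnd \bitwiseNot \tvi{P} = \tvi{P} \bitwiseAnd \bitwiseNot \tvi{P} = 0$, using the membership equation for $P$ in the middle step and the fact that any value ANDed with its own complement is $0$. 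An equivalent route is a bit-by-bit case analysis via \eqref{eqn:tnum_value_corresp}: the only positions $k$ where $\tmi{Q}[k] = 0$ are those with $\tvi{P}[k] = \tmi{P}[k] = 0$, and at each such position $x \in \gamma(P)$ forces $x[k] = \tvi{P}[k] = 0 = \tvi{Q}[k]$.

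Because both parts collapse to elementary bit-vector identities, there is no substantive obstacle here; the only points requiring a little care are performing the De Morgan step correctly and observing that the argument never actually invokes well-formedness of $P$ (the hypothesis that $P$ is non-empty is used only to guarantee that $\gamma(P)$ is non-vacuous and that $Q$, with $\tvi{Q} = 0$, is itself trivially well-formed). The bitwise reformulation serves mainly as a sanity check on the algebraic manipulation.
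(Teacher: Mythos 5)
Your proof is correct, but for part (i) it takes a different route from the paper. The paper proves membership bit-by-bit: invoking the bitwise exactness of $\gamma$, it splits into the three cases $(\tvi{P}[k],\tmi{P}[k]) \in \{(1,0),(0,0),(0,1)\}$ and checks in each that $x[k]$ satisfies the membership condition for $Q[k]$, with the ill-formed case $(1,1)$ implicitly excluded by the non-emptiness hypothesis. You instead give a single equational argument: from $x \bitwiseAnd \bitwiseNot\tmi{P} = \tvi{P}$ and De Morgan, $x \bitwiseAnd \bitwiseNot\tmi{Q} = (x \bitwiseAnd \bitwiseNot\tmi{P}) \bitwiseAnd \bitwiseNot\tvi{P} = \tvi{P} \bitwiseAnd \bitwiseNot\tvi{P} = 0 = \tvi{Q}$, which is valid and shorter, avoids case analysis entirely, and makes explicit a point the paper leaves implicit: well-formedness of $P$ is never needed, since for an ill-formed $P$ the set $\gamma(P)$ is empty and (i) holds vacuously, while (ii) holds for any $Q$ with $\tvi{Q}=0$. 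What the paper's case analysis buys in exchange is uniformity with the other trit-level proofs in the appendix and a direct reading of what each trit of $P$ forces about $x[k]$; your algebraic version buys brevity and a cleaner accounting of hypotheses. One small correction to your closing remark: $Q$ is well-formed because $\tvi{Q} \bitwiseAnd \tmi{Q} = 0 \bitwiseAnd (\tvi{P} \bitwiseOr \tmi{P}) = 0$ unconditionally, not because $P$ is non-empty. Part (ii) is the same in both proofs.
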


\begin{proof} By the definition of $\gamma$ in
\Eqn{eqn_gamma_appendix}, for a tnum $T$ such that $\tvi{T}
= 0$, $\gamma(0, \tmi{T}) = \{ c \mid c \bitwiseAnd
\bitwiseNot \tmi{T}  = 0 \}$. Since $c = 0$ satisfies the
condition, it is true that $0 \in \gamma(Q)$. Additionally,
we can see intuitively that building a new tnum $Q$ whose
mask has set all the bits corresponding to the set value or
mask bits of $P$ ensures that $\gamma(P) \subseteq
\gamma(Q)$. 

More formally, consider any concrete value $\Tnumin{x}{P}$.
we observe that it is sufficient to prove the membership of
$x$ in tnum $\gamma(Q)$ bit-by-bit, since $\gamma$ is
bitwise exact. Note that $Q.v = 0$. For each bit position
$k$, there are three possible cases.

\begin{enumerate}
\item $\tvi{P}[k] = 1, \tmi{P}[k] = 0$. This implies that
$x[k] = 1$, due to the bitwise exactness of $\gamma$. By the
way of constructing $Q$, we have $\tvi{Q}[k] = 0$, and
$\tmi{Q}[k] = 1$. $x[k] = 1$ satisfies $\gamma(\tvi{Q}[k],
\tmi{Q}[k])$. Hence $x[k] \in \gamma(Q[k])$. 
\item $\tvi{P}[k] = 0, \tmi{P}[k] = 0$. This implies that
$x[k] = 0$, due to the bitwise exactness of $\gamma$. By the
way of constructing $Q$, we have $\tvi{Q}[k] = 0$, and
$\tmi{Q}[k] = 0$. $x[k] = 0$ satisfies $\gamma(\tvi{Q}[k],
\tmi{Q}[k])$. Hence $x[k] \in \gamma(Q[k])$. 
\item $\tvi{P}[k] = 0, \tmi{P}[k] = 1$.  By the way of
constructing $Q$, we have $\tvi{Q}[k] = 0$, and $\tmi{Q}[k]
= 1$. Whatever the value of $x[k]$, it satisfies
$\gamma(\tvi{Q}[k], \tmi{Q}[k])$. Hence $x[k] \in
\gamma(Q[k])$. 
\end{enumerate}

Hence, we can say that without loss of generality, for all
bit positions $k$, $x[k] \in \gamma(P[k]) \Rightarrow x[k]
\in \gamma(Q[k])$. Hence $x \in \gamma(P) \Rightarrow x \in
\gamma(Q)$.

\end{proof}

\begin{lemma}
\label{lemma:value-mask-decoupled-additions-appendix}
\emph{(\Lemma{lemma:value-mask-decoupled-additions} in
the main text).} \textbf{Value-mask-decomposed tnum
summations.} Given $n$ non-empty tnums $T_0, T_1, \ldots,
T_{n-1} \in \bb{T}{n}$. Suppose we pick $n$ values $z_0,
z_1, z_2, \ldots, z_{n-1} \in \bb{Z}{n}$ such that $\forall
\, 0 \leq j \leq n-1: \, z_j \in \gamma(T_j)$. Define tnum
\begin{align}
\begin{split}
S \;\; \triangleq \;\;
\code{tnum_add}(\code{tnum_add}_{j=0}^{n-1} \;\;(tnum(T_j.v,
0)),
\\ \;\; \;\; \;\;\code{tnum_add}_{j=0}^{n-1} \;\;(tnum(0,
T_j.m)))
\end{split}
\end{align}
where $\code{tnum_add}_{(\cdot)}^{(\cdot)}$ is a
variable-arity version of \code{tnum_add} defined
above. Then, $\sum_{j=0}^{n-1} z_j \in \gamma(S)$.
\end{lemma}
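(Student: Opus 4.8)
The plan is to reduce the claim to the soundness of \code{tnum_add} (established in \OurTheorem{thm:soundness-optimality-of-tnum-add}) by decomposing each concrete operand $z_j$ into a ``value part'' and a ``mask part'' that add without interacting, and then tracking these two parts through the two separate summations. First I would exploit well-formedness together with the definition of $\gamma$. Since $z_j \in \gamma(T_j)$, \Eqn{eqn_gamma} gives $z_j \bitwiseAnd \bitwiseNot T_j.m = T_j.v$, so $z_j$ agrees with $T_j.v$ on every bit outside the mask. Setting $d_j \triangleq z_j \bitwiseAnd T_j.m$, I would write $z_j = (z_j \bitwiseAnd \bitwiseNot T_j.m) \bitwiseOr (z_j \bitwiseAnd T_j.m) = T_j.v \bitwiseOr d_j$. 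Well-formedness ($T_j.v \bitwiseAnd T_j.m = 0$) guarantees that $T_j.v$ and $d_j$ have disjoint set bits, so the bitwise-or carries nothing across positions and coincides with ordinary addition: $z_j = T_j.v + d_j$, an identity that holds even over $\bb{Z}{n}$. Moreover, $d_j$ has set bits only where $T_j.m$ is set, so by \Eqn{eqn_gamma} we have $d_j \in \gamma(tnum(0, T_j.m))$.

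With this decomposition in hand, commutativity and associativity of modular addition yield $\sum_{j=0}^{n-1} z_j = \left(\sum_{j=0}^{n-1} T_j.v\right) + \left(\sum_{j=0}^{n-1} d_j\right)$. I would then invoke soundness of the variable-arity \code{tnum_add}, which follows from the binary soundness result by a routine induction on the left-to-right fold (composing sound abstract operators soundly is sound). Because $\gamma(tnum(T_j.v, 0)) = \{T_j.v\}$ is a singleton, $\sum_j T_j.v$ is the unique concrete sum drawn from these operands, so $\sum_j T_j.v \in \gamma(\code{tnum_add}_{j=0}^{n-1}(tnum(T_j.v, 0)))$. Likewise, since each $d_j \in \gamma(tnum(0, T_j.m))$, soundness gives $\sum_j d_j \in \gamma(\code{tnum_add}_{j=0}^{n-1}(tnum(0, T_j.m)))$. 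A final application of binary \code{tnum_add} soundness to these two accumulated tnums places their sum in $\gamma(S)$, yielding $\sum_{j=0}^{n-1} z_j \in \gamma(S)$, as required.

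I expect the only delicate step to be the integer identity $z_j = T_j.v + d_j$ (and hence the clean regrouping of the full sum), which hinges entirely on the disjointness of the value and mask bits forcing the bitwise-or to agree with carry-free addition; everything downstream is bookkeeping against the soundness of \code{tnum_add}. A minor point worth stating explicitly in the write-up is the inductive extension of soundness from the binary to the variable-arity \code{tnum_add}, since the lemma is phrased in terms of the folded operator.
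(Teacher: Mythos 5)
Your proof is correct and takes essentially the same route as the paper's: decompose each $z_j$ into its value part $T_j.v$ plus a mask part lying in $\gamma(tnum(0, T_j.m))$, push the value-sums and mask-sums separately through the (folded, variable-arity) soundness of \code{tnum_add}, and regroup the total by associativity of concrete addition. The only cosmetic difference is that the paper obtains the mask part as the difference $z_j - T_j.v$ (arguing its set bits lie only at mask positions), whereas you obtain it as $z_j \bitwiseAnd T_j.m$ and invoke carry-free addition of bitwise-disjoint operands --- both hinge on exactly the same disjointness of value and mask bits guaranteed by well-formedness.
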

    
\begin{proof}
Intuitively, suppose we had $n$ tnums $T_i, 0 \leq i \leq
n-1$ and we seek to construct a new tnum $S$ whose
concretization $\gamma(S)$ contains all possible {\em
concrete sums} from the $T_i$, \ie such that
$\{\sum_{j=0}^{n-1} x_j \mid x_i \in \gamma(T_i)\} \leqconc
\gamma(S)$. The most natural method to construct such a tnum
$S$ is to use the sound abstract addition operator
\code{tnum_add} over the $T_i$, \ie
$\code{tnum_add}_{j=0}^{n-1}(T_j)$. This lemma provides
another method of constructing such a tnum $S$: decompose
the tnums $T_i$ each into two tnums, consisting of the
values and the masks separately. Use \code{tnum_add} to
separately add the value tnums, add the mask tnums, and
finally add the two resulting tnums from the value-sum and
mask-sum to produce $S$. 

Note that given a non-empty tnum $T$, its concretization
$\gamma(T)$ contains $T.v$ and also, $T.v$ is the smallest
element in $\gamma(T)$.

\begin{align}
\label{eqn:tv_is_smallest}
\Tnumin{x}{T.v} \wedge \forall \Tnumin{x}{T}, x \geq \tvi{T}
\end{align}

For proving
\Lemma{lemma:value-mask-decoupled-additions-appendix}, we
start by proving the simpler property below.

\proofheading{Property P0 (value-mask decomposition of a
single tnum).} Given a non-empty tnum $T$. For any
$\Tnumin{x}{T}$, we can decompose the concrete value $x$ as
the sum of two concrete values, \ie $x = x' + x''$ where $
\Tnumin{x'}{T.v, 0}$ and $\Tnumin{x''}{0, T.m}$ .

\begin{proof}
For any $\Tnumin{x}{T}$, from \Eqn{eqn:tv_is_smallest} it
must be that $x \geq T.v$. Additionally, consider. $(x -
\tvi{T})$. The only bits that may be set in $(x - \tvi{T})$
are those that correspond to unknown trits in  $T$, which
implies that the $\tmi{T} = 1$, \ie $\forall k
\in \{ 0, 1 \}, (x - \tvi{T}) = 1 \rightarrow T[i] = \mu
\rightarrow \tmi{T}[i] = 1$. Hence, it is true that $(x -
\tvi{T}) \bitwiseAnd \bitwiseNot \tmi{T} = 0$. 

In particular, setting $y = x - T.v$, and rewriting the
equation above, $(y - 0) \bitwiseAnd \bitwiseNot \tmi{T} =
\tvi{T}$. By definition of $\gamma$
(\Eqn{eqn_gamma_appendix}), this implies that
$\Tnumin{y}{tnum(0, T.m)} $. Since $x = y + T.v$, we can set
$x' = T.v$ and $x'' = y$ to prove the result.
\end{proof}

Using the value-mask decomposition property above, we can
rewrite all the concrete $z_j = z_j' + z_j''$, such that
$\Tnumin{z_j'}{tnum(T_j.v, 0)}$ and $\Tnumin{z_j''}{tnum(0,
T_j.m)}$. Assuming that tnum addition is sound, it is the
case that

\begin{enumerate}
\setlength\itemsep{1em}
\item $\Tnumin{\sum_{j=0}^{n-1}
z_j'}{\code{tnum_add}_{j=0}^{n-1} (tnum(T_j.v,0)}$,
\item $\Tnumin{\sum_{j=0}^{n-1}
z_j''}{\code{tnum_add}_{j=0}^{n-1}(tnum(0, T_j.m))}$, 

\item Hence, it follows that
\begin{align*}
{\sum_{j=0}^{n-1} z_j' + \sum_{j=0}^{n-1} z_j''} \in \\
\gamma \Big(\code{tnum_add} \big(
\code{tnum_add}_{j=0}^{n-1} (tnum(T_j.v,0)), \\
\code{tnum_add}_{j=0}^{n-1} (tnum(0, T_j.m))
\big)
\Big)
\end{align*}
\end{enumerate}

Moreover, by the associativity of arithmetic addition,
$\sum_{j=0}^{n-1} z_j' + \sum_{j=0}^{n-1} z_j''$ is just
$\sum_{j=0}^{n-1} z_j$, since each $z_j = z_j' + z_j''$.
This proves our result. 

\end{proof}

\begin{theorem}
\label{thm:our-mul-soundness-appendix}
\emph{(\OurTheorem{thm:our-mul-soundness} in the main
text).} \textbf{Soundness of \ourmul.} $\forall \, x \in
\gamma(P)$, $y \in \gamma(Q)$ the result $R$ returned by
\ourmulsimplified (\Lst{our_mul_simplified}) is such that $x
* y \in \gamma(R)$, assuming that abstract tnum addition
(\code{tnum_add}) and abstract tnum shifts
(\code{tnum_lshift}, \code{tnum_rshift}) are sound.
\end{theorem}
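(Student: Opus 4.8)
The plan is to reduce the soundness of \ourmulsimplified to the two structural lemmas already in hand, \Lemma{lemma:union-tnum-with-zero-appendix} and \Lemma{lemma:value-mask-decoupled-additions-appendix}. The idea is to exhibit, for each loop iteration $j$, a tnum $T_j$ that over-approximates the set of \emph{all} concrete $\nth{j}$ partial products $z_j \triangleq \ithbitx{x}{j} \arithMul (y \bitwiseLshift j)$ as $x$ ranges over $\gamma(P)$ and $y$ over $\gamma(Q)$, and to show that the two accumulators end up computing exactly the value-sum and the mask-sum of these $T_j$. Once that is in place, \Lemma{lemma:value-mask-decoupled-additions-appendix} yields $\sum_{j=0}^{n-1} z_j \in \gamma(R)$, and \Obs{partial-products-appendix} rewrites $\sum_{j} z_j$ as the concrete product $x \arithMul y$, which is what we want.

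First I would prove Property P1 by a routine induction on $j$: since the loop body updates $P$ and $Q$ only through \code{tnum_rshift} and \code{tnum_lshift}, at the start of iteration $j$ we have $P = \Tnumrshift{P_{in}}{j}$ and $Q = \Tnumlshift{Q_{in}}{j}$. The two consequences I actually need are that the least-significant trit tested in iteration $j$ is precisely trit $j$ of $P_{in}$, and that, by soundness of the abstract left shift, $\{\, y \bitwiseLshift j \mid y \in \gamma(Q_{in}) \,\} \leqconc \gamma(Q)$ for the current shifted $Q$.

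Next I would establish Property P2 by induction on the iteration count $k$, carrying the invariant $ACC_V = \code{tnum_add}_{j=0}^{k-1}(tnum(\tvi{T_j}, 0))$ and $ACC_M = \code{tnum_add}_{j=0}^{k-1}(tnum(0, \tmi{T_j}))$, where $T_j$ is picked by a three-way case split on trit $j$ of $P_{in}$. If the trit is a certain $1$ then $z_j = y \bitwiseLshift j$, and taking $T_j \triangleq Q$ (the current shifted multiplicand) gives $z_j \in \gamma(T_j)$ by shift soundness, while the algorithm folds $\tvi{T_j}$ into $ACC_V$ and $\tmi{T_j}$ into $ACC_M$, matching the invariant. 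If the trit is a certain $0$ then $z_j = 0$, so $T_j \triangleq tnum(0,0)$ works and the algorithm correctly leaves both accumulators unchanged, since adding $tnum(0,0)$ is the identity. The delicate case is the uncertain trit, where $z_j$ may be \emph{either} $0$ or $y \bitwiseLshift j$: here I invoke \Lemma{lemma:union-tnum-with-zero-appendix} on the current $Q$ to get $T_j \triangleq tnum(0,\; \tvi{Q} \bitwiseOr \tmi{Q})$, which the lemma guarantees contains both $0$ and all of $\gamma(Q)$, and this is exactly the tnum the algorithm folds into $ACC_M$ while leaving $ACC_V$ untouched because $\tvi{T_j} = 0$. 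Proving part (i) of P2 --- that $z_j \in \gamma(T_j)$ in every case --- is the crux, since it is where the nondeterminism of an unknown multiplier bit is absorbed by the union-with-zero construction rather than by shift soundness alone.

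Finally, Property P3 falls out of instantiating \Lemma{lemma:value-mask-decoupled-additions-appendix} with the tnums $T_0, \ldots, T_{n-1}$ from P2: the lemma's tnum $S$ is syntactically the result $R = \code{tnum_add}(ACC_V, ACC_M)$ computed after the loop, so $\sum_{j=0}^{n-1} z_j \in \gamma(R)$. As $x \in \gamma(P)$ and $y \in \gamma(Q)$ were arbitrary and \Obs{partial-products-appendix} gives $x \arithMul y = \sum_{j=0}^{n-1} z_j$, we conclude $x \arithMul y \in \gamma(R)$, which is soundness. The hard part will be the bookkeeping in P2: simultaneously discharging the containment $z_j \in \gamma(T_j)$ and the two accumulator invariants forces one to keep the implicit per-iteration shift from P1 aligned with the bit index $j$ in the partial-product formula, and it is the uncertain case that must lean on \Lemma{lemma:union-tnum-with-zero-appendix}.
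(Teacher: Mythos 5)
Your proof is correct and takes essentially the same approach as the paper's own proof: the identical three-property structure (shift bookkeeping P1, the accumulator invariants P2 established by the same three-way case split on trit $j$ with \Lemma{lemma:union-tnum-with-zero-appendix} absorbing the uncertain case, and containment P3 via \Lemma{lemma:value-mask-decoupled-additions-appendix} together with \Obs{partial-products-appendix}). The only difference is presentational: you state explicitly that folding $tnum(0,0)$ into an accumulator under \code{tnum_add} is the identity, a fact the paper leaves implicit when it allows ``no updates to $ACC_V$ when $T_j.v = 0$.''
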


\begin{proof}
At a high level, the algorithm constructs two tnums at each
iteration of the loop. Intuitively, the first tnum $ACC_V$
sums the concrete bits of each partial product, while the
second tnum $ACC_M$ sums the uncertain bits of each partial
product using tnum addition. Then, $\code{tnum_add}(ACC_V,
ACC_M)$ is a tnum which contains all possible sums of the
partial products, \ie all possible products $x * y$ where
$\Tnumin{x}{P}$ and $\Tnumin{y}{Q}$.

More formally, our proof proceeds by establishing three
properties described below. Suppose the formal parameters to
\ourmul are denoted $P_{in}$ and $Q_{in}$ respectively. We
assume that the loop counter $i$ is initialized to 0 and is
incremented by 1 at the end of each iteration up to the bit
length $n$ of the input tnums. When we say ``the $k^{th}$
iteration'' for some fixed $k$, we mean the loop iteration
where the value of the loop counter $i$ {\em at the end of
the loop body} is $k$,  $1 \leq k \leq n$.

\proofheading{Property P1. $P$ and $Q$ are bit-shifted
versions of $P_{in}$ and $Q_{in}$.} At the end of the
$k^{th}$ iteration, we have that $P =
\Tnumrshift{P_{in}}{k}$ and $Q = \Tnumlshift{Q_{in}}{k}$.
Further, {\em within} the body of iteration $k$ of the loop,
$P = \Tnumrshift{P_{in}}{k-1}$ and $Q =
\Tnumlshift{Q_{in}}{k-1}$. In particular, within the body of
the $k^{th}$ iteration, $\ithbitx{P.v}{0} =
\ithbitx{P_{in}.v}{k-1}$ and $\ithbitx{P.m}{0} =
\ithbitx{P_{in}.m}{k-1}$, $Q.v = Q_{in}.v \,\bitwiseLshift\,
(k-1)$ and $Q.m = Q_{in}.m \,\bitwiseLshift\, (k-1)$. 

\begin{proof}  The algorithm only modifies $P$ and $Q$ via
tnum bitshifting at the end of each iteration. It is
straightforward to show the properties above by induction on
the loop counter $i$ and the semantics of the tnum bit
shifting algorithms.
\end{proof}

\proofheading{Property P2. $ACC_V$ and $ACC_M$ are
value-mask-decomposed summations of partial products.} There
exist tnums $T_0, T_1, \ldots, T_{n-1}$ such that (i) any
concrete $j^{th}$ partial product, $z_j \triangleq
\ithbitx{x}{j} * {y} \,\bitwiseLshift\, {j}$ is a member of
$\gamma(T_j)$, \ie $\Tnumin{z_j}{T_j}$ for $0 \leq j \leq
n-1$; (ii) at the end of the $\nth{k}$ iteration of the
loop, $ACC_V = \code{tnum_add}_{j=0}^{k-1} tnum(T_j.v, 0)$,
and (iii) at the end of the $\nth{k}$ iteration of the loop,
$ACC_M = \code{tnum_add}_{j=0}^{k-1} tnum(0, T_j.m)$.

\begin{proof}
Consider the $j^{th}$ concrete partial product $z_j
\triangleq \ithbitx{x}{j} * {y}\,\bitwiseLshift\,{j}$, where
$ \Tnumin{P_{in}}{y}$,  $\Tnumin{Q_{in}}{0}$, and  $\leq j
\leq n-1$. In the algorithm, the consideration of the
concrete $j^{th}$ partial product $z_j$ occurs within the
code body of the $j^{th}$ iteration, where $k = j + 1$. By
Property P1 and the soundness of tnum bitshifting, at the
beginning of the $k^{th}$ iteration, $\Tnumin{y
\,\bitwiseLshift\, j}{Q}$. Further, by Property P1, the
soundness of tnum bitshifting, and the definition of
$\gamma$, it is also the case that $\ithbitx{x}{j}
\;\bitwiseAnd\; \bitwiseNot{\ithbitx{P.m}{0}} =
\ithbitx{P.v}{0}$.

\proofsubheading{(i) Construction of $T_j$.} Given
$\Tnumin{x}{P}$, there are three possibilities for the bit
$\ithbitx{x}{j}$. 

\begin{enumerate}
\item $\ithbitx{x}{j}$ is a known 1: Suppose
$\ithbitx{x}{j}$ is known to be 1 (\ie $\ithbitx{x}{j} = 1$,
$\ithbitx{P_{in}.v}{j} = 1$, and $\ithbitx{P_{in}.m}{j} =
0$). Then, the concrete partial product $z_j =
y\,\bitwiseLshift\,j$. Since
$\Tnumin{y\,\bitwiseLshift\,j}{Q}$, we set $T_{j} \triangleq
Q$. Hence, $\Tnumin{z_j}{T_j}$.
\item $\ithbitx{x}{j}$ is unknown: Suppose $\ithbitx{x}{j}$
is uncertain (\ie $\ithbitx{P_{in}.m}{j} = 1$ and
$\ithbitx{P_{in}.v}{j} = 0$). The partial product $z_j$ can
take on a value $0$ if $\ithbitx{x}{j} = 0$, else the value
of the partial product is $y \,\bitwiseLshift\, j$. Since
$\Tnumin{y\,\bitwiseLshift\,j}{Q}$, we can construct $T_j$
so that it contains the concrete value $0$ as well as all
concrete values in $Q$. Invoking
\Lemma{lemma:union-tnum-with-zero-appendix}, setting $T_j
\triangleq tnum(0, Q.v \mid Q.m)$ will ensure that
$\Tnumin{z_j}{T_j}$.
\item $\ithbitx{x}{j}$ is a known 0:  Suppose
$\ithbitx{x}{j}$ is known to be 0 (\ie $\ithbitx{x}{j} = 0$,
$\ithbitx{P_{in}.v}{j} = 0$, and $\ithbitx{P_{in}.m}{j} =
0$). Then, the partial product $z_j = 0$. We set $T_j
\triangleq 0$. Hence, $\Tnumin{z_j}{T_j}$.
\end{enumerate}

\proofsubheading{(ii) $ACC_V = \code{tnum_add}_{j=0}^{i-1}
tnum(T_j.v, 0)$.} Within the body of the loop, $ACC_V$ is
only updated using tnum addition operations, and in at most
one line of code per loop iteration. Further, it can be seen
easily that this update $ACC_V$ occurs using the value
component of the corresponding partial product tnum $T_j$
(including the possibility of making no updates to $ACC_V$
when $T_j.v = 0)$. It is straightforward to construct an
inductive proof that $ACC_V = \code{tnum_add}_{j=0}^{k-1}
tnum(T_j.v, 0)$ after the $k^{th}$ iteration.

\proofsubheading{(iii) $ACC_M = \code{tnum_add}_{j=0}^{i-1}
tnum(0, T_j.m)$.} The proof is very similar to Property
P2 (ii) above.

\end{proof}

\vspace{10pt}
\proofheading{Property P3 (product containment).} At the end
of the $n^{th}$ iteration, $\code{tnum_add}(ACC_V, ACC_M)$
contains the sum of all concrete partial products $z_j$ for
$j = 0, \ldots, n-1$. That is, $\code{tnum_add}(ACC_V,
ACC_M)$ contains all concrete values of $x * y$.

\begin{proof}
We apply
\Lemma{lemma:value-mask-decoupled-additions-appendix}, where
the constituent tnums are $T_0, T_1, \ldots, T_{n-1}$ and
the concrete values considered are the possible partial
products $z_j$. Combined with Property P2, we have that for
all possible partial products $z_j$,
$\Tnumin{\sum_{j=0}^{n-1} z_j}{\code{tnum_add}(ACC_V,
ACC_M)}$. Further, from \Obs{partial-products-appendix}, we
know that $\sum_{j=0}^{n-1} z_j = \sum_{j=0}^{n-1}
\ithbitx{x}{j} * (y \,\bitwiseLshift\, j) = x * y$. Hence,
$\code{tnum_add}(ACC_V, ACC_M)$ contains all concrete
products $x * y$.
\end{proof}

Property P3 concludes the proof of soundness of \ourmul. The
returned tnum $R$ contains all products of concrete values
$x * y$ where $\Tnumin{x}{P}$ and $\Tnumin{y}{Q}$.

\end{proof}

\subsection{Proof for Tnum Subtraction}
\begin{lstlisting}[
caption={ {\footnotesize Linux kernel's implementation of tnum subtraction
(\code{tnum_sub})}}, label={lst:tnum_sub}]
def tnum_sub(tnum P, tnum Q):

	u64 dv := P.v - Q.v
	u64 alpha := dv + P.m
	u64 beta := dv - Q.m
	u64 chi := alpha (*$\bitwiseXor$*) beta
	u64 eta := chi | P.m | Q.m	
	tnum R := tnum(dv & ~eta, eta)	
	return R
\end{lstlisting}

\begin{theorem}
Tnum subtraction as defined by the linux kernel \Lst{tnum_sub} is
sound and optimal. 
\end{theorem}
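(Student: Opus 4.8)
The plan is to mirror the proof of \OurTheorem{thm:soundness-optimality-of-tnum-add} for \code{tnum_add}, replacing carry propagation with borrow propagation throughout. First I would fix $P, Q \in \bb{T}{n}$ and study the set of concrete outputs $\{p - q \mid p \in \gamma(P),\, q \in \gamma(Q)\}$, analyzing the borrow sequence produced by each concrete subtraction $p - q$ via the standard full-subtractor recurrence (a borrow-out at position $i$ occurs exactly when $q[i] + b_{in}[i] > p[i]$). The crucial observation is that the two intermediate quantities named in \Lst{tnum_sub} are themselves concrete subtractions at the extremes of this set: $\alpha = dv + \tmi{P} = (\tvi{P} + \tmi{P}) - \tvi{Q}$ realizes the \emph{largest} value of $p - q$ (it maximizes $p$ and minimizes $q$), while $\beta = dv - \tmi{Q} = \tvi{P} - (\tvi{Q} + \tmi{Q})$ realizes the \emph{smallest} value. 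Since $dv = \tvi{P} - \tvi{Q}$ is also a concrete member of the output set, its bits already agree with every concrete result at the positions that are certain.

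Next I would establish the borrow analogs of the minimum- and maximum-carries lemmas (\Lemma{lemma:minimum-carries-lemma} and \Lemma{lemma:maximum-carries-lemma}). By induction on the bit position, the maximal-value subtraction $\alpha$ produces a borrow sequence whose 1s lie in \emph{at most} the positions where any concrete $p - q$ borrows, and the minimal-value subtraction $\beta$ produces 1s in \emph{at least} those positions; the inductive case split on whether $\tvi{P}[i] + \tmi{P}[i]$ and $\tvi{Q}[i] + \tmi{Q}[i]$ are forced or uncertain transcribes the addition case analysis, tracking borrows-out in place of carries-out. These two bounds give the borrow-uncertainty lemma, the analog of \Lemma{lemma:capture-uncertainty-lemma}: writing $\alpha_c, \beta_c$ for the borrow-in sequences of $\alpha$ and $\beta$ and $\chi_c \triangleq \alpha_c \bitwiseXor \beta_c$, the positions with $\chi_c[k] = 0$ have a borrow fixed across all concrete subtractions, while the positions with $\chi_c[k] = 1$ have a borrow that genuinely varies, witnessed by the $\alpha$ and $\beta$ subtractions themselves.

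With the borrow-uncertainty lemma, bitwise-exactness (\Eqn{eqn_bitwise_exact}) shows that a result bit is uncertain precisely when $\tmi{P}$, $\tmi{Q}$, or the borrow-in at that position is uncertain, so the optimal mask is $(\alpha_c \bitwiseXor \beta_c) \bitwiseOr \tmi{P} \bitwiseOr \tmi{Q}$. The algorithm instead computes $\eta = (\alpha \bitwiseXor \beta) \bitwiseOr \tmi{P} \bitwiseOr \tmi{Q}$, so the last step is a mask-equivalence lemma in the spirit of \Lemma{thm:add-soundness-maximal-precision}: expanding $\alpha[i]$ and $\beta[i]$ with the full-subtractor equation, cancelling the shared borrow terms, and reducing $\tvi{P}[i] \bitwiseXor (\tvi{P}[i] + \tmi{P}[i])$ to $\tmi{P}[i]$ (and symmetrically for $Q$) shows that $\alpha \bitwiseXor \beta$ and $\alpha_c \bitwiseXor \beta_c$ can differ only where $\tmi{P}[i] \bitwiseXor \tmi{Q}[i] = 1$, which is absorbed by the trailing $\bitwiseOr \tmi{P} \bitwiseOr \tmi{Q}$. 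Pairing the optimal mask $\eta$ with the value $dv \bitwiseAnd \bitwiseNot \eta$ then yields both soundness and optimality exactly as for \code{tnum_add}.

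I expect the main obstacle to be the modular wraparound of fixed-width subtraction: unlike addition, $p - q$ can underflow in $\bb{Z}{n}$, and I must confirm that the borrow recurrence and, above all, the identification of $\alpha$ and $\beta$ as the value extremes remain valid under two's-complement arithmetic rather than over the integers. Closely related is verifying the \emph{direction} of the borrow bounds --- establishing that it is $\beta$ (minimal value) that maximizes borrows and $\alpha$ (maximal value) that minimizes them --- since the asymmetry of subtraction means the two operands are not interchangeable the way they are in addition, and getting this backwards would invert $\chi_c$. A tempting shortcut via the identity $p - q = p + (\bitwiseNot q + 1)$ does not simplify matters, because the tnum abstracting $-q$ is not cleanly derived from that of $q$; this is consistent with the paper's observation that tnum addition and subtraction are not inverse operations. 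Once the wraparound and borrow direction are pinned down, the remaining steps are routine bit-level case analyses that follow the addition proof line for line.
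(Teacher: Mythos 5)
Your proposal follows the paper's own proof essentially line for line: the same identification of $\alpha = (\tvi{P} + \tmi{P}) - \tvi{Q}$ and $\beta = \tvi{P} - (\tvi{Q} + \tmi{Q})$ as the extreme concrete subtractions, the same minimum/maximum borrows lemmas (with the directions correctly assigned --- $\alpha$ minimizing and $\beta$ maximizing borrows, matching the paper's Lemmas on borrows), the same capture-uncertainty lemma via $\alpha_c \bitwiseXor \beta_c$, and the same mask-equivalence argument reducing $\tvi{P}[i] \bitwiseXor (\tvi{P}[i] + \tmi{P}[i])$ to $\tmi{P}[i]$ so the discrepancy is absorbed by $\bitwiseOr\, \tmi{P} \bitwiseOr \tmi{Q}$. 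The approach and all key steps are correct and coincide with the paper's proof.
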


\begin{proof}
We first look at the equation for the subtraction of two
binary numbers using the full subtractor equation. 

\begin{definition}
\label{subtraction of bits}
When subtracting two concrete binary numbers p and q, each
bit of the subtraction result $r$ is set according to the
following:
$$r[i] = p[i] \; \bitwiseXor \; q[i] \; \bitwiseXor \;
\ithbit{b}{{in}}{i}$$

where $\; \bitwiseXor \;$ is the exclusive-or operation and
$\ithbit{b}{{in}}{i} = \ithbit{b}{{out}}{i-1}$ and
$\ithbit{b}{{out}}{i-1}$ is the borrow out from the
subtraction in bit position $i-1$. The borrow out bit at the
$i^{th}$ position is defined by the full subtractor
equation:

$$\ithbit{b}{{out}}{i} = (\; \bitwiseNot \; p[i] \;
\bitwiseAnd \; q[i]) \; \bitwiseOr \; (\ithbit{b}{{in}}{i}
\; \bitwiseAnd \; \; \bitwiseNot \;(p[i] \; \bitwiseXor \;
q[i]))$$
\end{definition}

At a given bit position in the tnum, the result of a tnum
subtraction will be uncertain if either of the operand bits
$\ithbitx{p}{i}$ or $\ithbitx{q}{i}$ is uncertain, or if the
borrow-in bit $\ithbitx{b_{in}}{i}$ generated from
more-significant bit positions is uncertain. The crux of the
proof lies in identifying when borrow-in bits are uncertain,
by distinguishing the borrows generated due to the
uncertainty in the operands from the borrows present in any
concrete subtraction from the input tnums. 

Given two tnums $P$ and $Q$, the uncertainty contributed by
the operands to the result is $P.v \; \bitwiseOr \; Q.v$. To
account for uncertainty that arises in
$\ithbit{b}{{in}}{i}$, we need to consider the sequence of
borrows that can result from the subtraction. Let $p$ and
$q$ be any concrete value in tnum $P$ and tnum $Q$,
respectively.
      
Suppose $p$ and $q$ are two concrete values in tnum $P$ and
tnum $Q$, respectively, i.e., $p \in \gamma(P ), q \in
\gamma(Q)$. 

\begin{lemma}
\label{lemma:minimum-borrows-lemma-appendix}
\textbf{Minimum Borrows Lemma.}
The subtraction $\alpha = (P.v + P.m)- Q.v$
will produce the sequence of borrow bits with the least
number of $1$s out of all possible  subtractions $p-q$.
\end{lemma}

\begin{proof}        	
Let $\ithbit{b}{{out}}{i}$ denote the borrow out bit at
the $i$th position produced by $p - q$ and
$\ithbit{b}{\alpha}{i}$ denote the borrow out bit at the
$i^{th}$ position produced by $(P.v + P.m)- Q.v$. We will
prove by induction that if $(P.v[i] + P.m[i])- Q.v[i]$
produces a borrow bit then so must $p[i] - q[i]$:

\textbf{Base case}: At bit position $i=0$, $(P.v[i] +
P.m[i])- Q.v[i]$ produces a borrow out bit only when $P.v[i]
+ P.m[i]=0$ and $Q.v[i]=1$. Since a certain bit has the same
value across all members of the set then by
\Eqn{eqn:tnum_value_corresp} $p[i]=0$ and $q[i]=1$, which
must also produce a borrow bit. Hence,
$\ithbit{b}{\alpha}{i} = 1 \rightarrow \ithbit{b}{{out}}{i}
= 1$ holds.  

\textbf{Induction step}: Assume that for all $0 \leq j \leq
i-1$, $\ithbit{b}{{out}}{j} \geq
\ithbit{b}{\alpha}{j}$. Now, we show that
$\ithbit{b}{{out}}{i} \geq \ithbit{b}{\alpha}{i}$ by
considering all possible cases:

\begin{enumerate}
\item $P.v[i] + P.m[i]=1$ and $Q.v[i] = 0$.
$\ithbit{b}{\alpha}{i}$ will never produce a borrow which
means $\ithbit{b}{\alpha}{i} = 1 \rightarrow
\ithbit{b}{{out}}{i} = 1$ holds.  

\item $P.v[i] + P.m[i] = 0$ and $Q.v[i] = 1$.
$\ithbit{b}{\alpha}{i}$ will always produce a borrow bit.
Since a certain bit has the same value across all members of
the set we know that by \Eqn{eqn:tnum_value_corresp},
$p[i]=0$ and $q[i]=1$, which will also produce a borrow bit.
This means $\ithbit{b}{\alpha}{i} = 1 \rightarrow
\ithbit{b}{{out}}{i} = 1$ holds.  

\item $P.v[i] + P.m[i] = 0$ and $Q.v[i] = 0$.
$\ithbit{b}{\alpha}{i}$ will produce a borrow only when
position $i-1$ produces a borrow. This case implies that
$p[i]=0$ and $q[i]=0$ by \Eqn{eqn:tnum_value_corresp}, which
will produce a borrow only when bit position $i-1$ produces
a borrow. Thus, by the induction hypothesis,
$\ithbit{b}{\alpha}{i} = 1 \rightarrow \ithbit{b}{{out}}{i}
= 1$ holds.  

\item $P.v[i] + P.m[i] = 1$ and $Q.v[i] = 1$. This case
entails two possibilites as follows:

\begin{enumerate}
\item $\ithbit{b}{\alpha}{i-1} = 0$. This implies that
  $\ithbit{b}{\alpha}{i} = 0$ which means it will not
  produce a borrow bit. Hence, $\ithbit{b}{\alpha}{i} = 1
  \rightarrow \ithbit{b}{{out}}{i} = 1$ holds.  
  
\item $\ithbit{b}{\alpha}{i-1} = 1$.
$\ithbit{b}{\alpha}{i}$ will produce a borrow only when
position $i-1$ produces a borrow. This case implies that
$p[i]=0 \vee 1$ and $q[i]=0$ by
\Eqn{eqn:tnum_value_corresp}, which will produce a borrow
only when bit position $i-1$ produces a borrow. Thus, by the
induction hypothesis, $\ithbit{b}{\alpha}{i} = 1 \rightarrow
\ithbit{b}{{out}}{i} = 1$ holds.  
\end{enumerate}
\end{enumerate}

Hence, it follows that if $P.v[i] + Q.v[i]$ produces a
borrow bit then so must $p[i] + q[i]$.
\end{proof}

\begin{lemma}
\label{lemma:maximum-borrows-lemma-appendix}
\textbf{Maximum Borrows Lemma.}
The subtraction $\beta = P.v - (Q.v + Q.m)$ will produce the
sequence of borrow bits with the most number of 1s out of
all possible subtractions $p$ - $q$.
\end{lemma}

We prove that  That is, any $p$ - $q$ will produce a
sequence of borrow bits with 1s in at most those positions
where $\beta$ produced borrow bits set to 1.
        
\begin{proof}

Let $\ithbit{b}{{out}}{i}$ denote the borrow out bit at
the $i$th position produced by $p - q$ and
$\ithbit{b}{\beta}{i}$ denote the borrow out bit at the
$i^{th}$ position produced by $P.v - (Q.v + Q.m)$. We will
prove by induction that if $P.v - (Q.v + Q.m)$ does not
produce a borrow bit then neither will $p[i] - q[i]$:

\textbf{Base case}: At bit position $i=0$, a concrete
subtraction $p[0] - q[0]$ produces a borrow bit only when
$p[0] = 0$ and $q[0] = 1$. If $p[0] = 0$ then it must mean
that $P.v[0] = 0$ (regardless if the bit is certain or not).
Similarly, $q[0] =1$ implies $Q.v[0] =1 $ (when the bit is
certain). Otherwise, $Q.m[0] = 1$. Thus, $P.v - (Q.v + Q.m)$
will produce a borrow bit whenver any $p[0] + q[0]$ produces
a borrow, and not more. Hence, $\ithbit{b}{\beta}{i} = 0
\rightarrow \ithbit{b}{{out}}{i} = 0$ holds.  

\textbf{Induction step}: Assume that for all $0 \leq j \leq
i-1$, $\ithbit{b}{\beta}{j} \geq \ithbit{b}{{out}}{j}$. Now,
we show that $\ithbit{b}{\beta}{i} = 0 \rightarrow
\ithbit{b}{{out}}{i} = 0$ holds.  by considering all
possible cases:

\begin{enumerate}
\item $P.v[i] =1$ and $Q.v[i] + Q.m[i]=0$. This will never
produce a borrow bit. Since $P.v=1$ implies $p=1$ and
$Q.v[i] + Q.m[i]=0$ implies $q=0$ (by
\Eqn{eqn:tnum_value_corresp}) then $p[i]-q[i]$ will not
produce a borrow bit either. Hence, $\ithbit{b}{\beta}{i} =
0 \rightarrow \ithbit{b}{{out}}{i} = 0$ holds.  

\item $P.v[i] = 0$ and $Q.v[i] + Q.m[i]= 1$.  This will
always produce a borrow bit which means,
$\ithbit{b}{\beta}{i} = 0 \rightarrow \ithbit{b}{{out}}{i} =
0$ holds vacuously.  

\item $P.v[i] = 0$ and $Q.v[i] + Q.m[i]= 0$.
$\ithbit{b}{\beta}{i}$ will not produce a borrow only when
position $i-1$ does not produce a borrow. This case implies
that $p[i]=0$ and $q[i]=0$ by \Eqn{eqn:tnum_value_corresp},
which will not produce a borrow only when bit position $i-1$
does not produces a borrow. Thus, by the induction hypothesis,
$\ithbit{b}{\beta}{i} = 0 \rightarrow \ithbit{b}{{out}}{i}
= 0$ holds.  

\item $P.v[i] = 1$ and $Q.v[i] + Q.m[i]=1$. This case
entails two possibilites as follows:

\begin{enumerate}
\item $\ithbit{b}{\beta}{i-1} = 0$. This implies that
$\ithbit{b}{\beta}{i} = 0$ which means it will not
produce a borrow bit. Hence, $\ithbit{b}{\beta}{i} = 0
\rightarrow \ithbit{b}{{out}}{i} = 0$ holds.  

\item $\ithbit{b}{\beta}{i-1} = 1$. $\ithbit{b}{\beta}{i}$
will not produce a borrow only when position $i-1$ does not
produces a borrow. This case implies that $p[i]=1 $ and
$q[i]=0 \vee 1$ by \Eqn{eqn:tnum_value_corresp}, which will
not produce a borrow only when bit position $i-1$ does not
produce a borrow. Thus, by the induction hypothesis,
$\ithbit{b}{\beta}{i} = 0 \rightarrow \ithbit{b}{{out}}{i}
= 0$ holds.  
\end{enumerate}
\end{enumerate}

Hence, if $P.v[i]$ - $(Q.v[i] +Q.m[i])$ does not produce a
borrow bit then neither will concrete subtraction $p[i] -
q[i]$.

\end{proof}
      
\begin{lemma}
\label{lemma:capture-uncertainty-lemma-sub-appendix}
\textbf{Capture uncertainty lemma.} 
 Let $\alpha_{b}$ and $\beta_{b}$ be the sequence of
borrow-in bits from the subtractions in $\alpha$ and
$\beta$, respectively. Suppose $\chi_b \triangleq \alpha_{b}
\; \bitwiseXor \; \beta_{b}$. The bit positions $k$ where
$\ithbitx{\chi_c}{k} = 0$ have borrow bits fixed in all
concrete subtractions $p-q$ from $-(\gamma(P), \gamma(Q))$.
The bit positions $k$ where $\ithbitx{\chi_c}{k} = 1$ vary
depending on the concrete subtraction: \ie $\exists p_1, p_2
\in \gamma(P), q_1, q_2 \in \gamma(Q)$ such that $p_1 - q_1$
has its borrow bit set at position $k$ but $p_2 - q_2$ has
that bit unset.
\end{lemma}

Intuitively, from the minimum borrows lemma, any borrow bit
that is produced by $\alpha_{b}$ must also be produced by
any concrete subtraction $p-q$ because $\alpha_{b}$
contains the minimal amount of borrow bits. From the maximum
borrows lemma, no concrete subtraction $p - q$ may have
more borrow bits than $\beta_{b}$. Hence, $\alpha_{b} \;
\bitwiseXor \; \beta_{b}$ represents the borrows that arise
purely from the uncertainty in the concrete operands picked
from  $P$ and $Q$.

\textbf{Soundness and optimality.} Tnum subtraction
(\code{tnum_sub}) uses $(\alpha \; \bitwiseXor \; \beta) \;
\bitwiseOr \; P.m \; \bitwiseOr \; Q.m$ to compute the
uncertainty of the result. To show soundness, we prove that
$(\alpha \; \bitwiseXor \; \beta) \; \bitwiseOr \; P.m \;
\bitwiseOr \; Q.m$ and $(\alpha_{b} \; \bitwiseXor \;
\beta_{b}) \; \bitwiseOr \; P.m \; \bitwiseOr \; Q.m$
compute the same result.

From Definition~ \ref{subtraction of bits}, we have $r[i] =
p[i] \; \bitwiseXor \; q[i] \; \bitwiseXor \;
\ithbit{b}{{in}}{i}$. Hence, the $i^{th}$-bit of $\alpha[i]$
is $(P.v[i]+P.m[i])\; \bitwiseXor \; Q.v[i] \; \bitwiseXor
\; \alpha_b[i]$. Similarly, $i^{th}$-bit of $\beta[i]$ is
$P.v[i] \; \bitwiseXor \; (Q.v[i] + Q.m[i]) \; \bitwiseXor
\; \beta_b[i]$.

\begin{small}
\begin{equation*}
\begin{aligned}
\alpha[i] \; \bitwiseXor \; \beta[i] = & 
(P.v[i]+P.m[i])\; \bitwiseXor \;
Q.v[i] \; \bitwiseXor \; \alpha_b[i]
\\ & \; \bitwiseXor \; P.v[i] \; \bitwiseXor \; 
(Q.v[i] + Q.m[i]) \; \bitwiseXor \; \beta_b[i]
\end{aligned}
\end{equation*}
\end{small}

To identify where $\alpha \; \bitwiseXor \; \beta$ can
differ from $\alpha_{b} \; \bitwiseXor \; \beta_{b}$,
consider the exclusive-or of them (\ie $(\alpha \;
\bitwiseXor \; \beta) \; \bitwiseXor \; (\alpha_{b} \;
\bitwiseXor \; \beta_{b})$). If they are identical, the
result is 0. They are different when $(\alpha \; \bitwiseXor
\; \beta) \; \bitwiseXor \; (\alpha_{b} \; \bitwiseXor \;
\beta_{b})$ is 1.

\begin{small}
\begin{equation*}
\begin{aligned}
& (\alpha[i] \; \bitwiseXor \; \beta[i]) \; \bitwiseXor \; 
(\alpha_{b}[i] \; \bitwiseXor \; \beta_{b}[i]) \\
& = (P.v[i]+P.m[i])\; \bitwiseXor \;
Q.v[i] \; \bitwiseXor \; \alpha_b[i]
\; \bitwiseXor \; P.v[i] \; \bitwiseXor \; 
\\ & \;\;\;(Q.v[i] + Q.m[i]) \; \bitwiseXor \; \beta_b[i] \; 
\bitwiseXor \; \alpha_b[i] \; \bitwiseXor \; \beta_b[i] \\
&= P.v[i] \; \bitwiseXor \; (P.v[i] + P.m[i]) \; \bitwiseXor 
\; Q.v[i] \; \bitwiseXor \; (Q.v[i] + Q.m[i]) 
\end{aligned}
\end{equation*}
\end{small}
      
Now consider the sub-term in the above formula $P.v[i] \;
\bitwiseXor \; (P.v[i] + P.m[i])$. For a non-empty tnum
$P.v[i] \; \bitwiseAnd \; P.m[i] = 0$. Hence the term
$P.v[i] \; \bitwiseXor \; (P.v[i] + P.m[i])$ is equivalent
to $P.m[i]$. Similarly, term $Q.v[i] \; \bitwiseXor \;
(Q.v[i] + Q.m[i])$ is equivalent to $Q.m[i]$.

Thus, $\alpha[i] \,\bitwiseXor\, \beta[i]$ differs from
$\alpha_{c}[i] \,\bitwiseXor\, \beta_{c}[i]$ when $P.m[i]
\,\bitwiseXor\, Q.m[i] = 1$. Now, if $P.m[i] \,\bitwiseXor\,
Q.m[i] = 1$ then $P.m[i] \,\bitwiseOr \,Q.m[i] = 1$ as well.
Thus, setting $x \triangleq \alpha \,\bitwiseXor\, \beta$ or $x
\triangleq \alpha_{c} \,\bitwiseXor\, \beta_{c}$ is immaterial
to the value $x \,\bitwiseOr\, P.m \,\bitwiseOr\, Q.m.$.
\end{proof}

\begin{lemma}
\label{thm:soundness-optimality-of-tnum-sub-appendix}
\textbf{Soundness and optimality of tnum\_sub.} 
The algorithm \code{tnum_sub} shown in \Lst{tnum_add} is a
sound and optimal abstraction of concrete subtraction over
\nbit bitvectors.
\end{lemma}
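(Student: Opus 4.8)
The plan is to treat this concluding lemma as the packaging step that assembles the borrow-sequence lemmas into a single soundness-and-optimality statement, exactly as the analogous concluding lemma does for \code{tnum_add}. The starting point is the full-subtractor identity $r[i] = p[i] \bitwiseXor q[i] \bitwiseXor \ithbit{b}{{in}}{i}$ from Definition~\ref{subtraction of bits}, which localizes all uncertainty in an output bit to the three inputs of that XOR: the operand bits $p[i], q[i]$, which are governed solely by trit $i$ of $P$ and $Q$, and the borrow-in $\ithbit{b}{{in}}{i} = \ithbit{b}{{out}}{i-1}$, which is governed solely by the strictly-less-significant trits. Because these three sources depend on disjoint parts of the input tnums, I would argue that an output bit is uncertain across $\gamma(P)$ and $\gamma(Q)$ \emph{if and only if} at least one of $P.m[i]$, $Q.m[i]$, or the borrow-in is uncertain at position $i$.

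Next I would characterize the borrow-in uncertainty from the three preceding lemmas. The minimum borrows lemma (\Lemma{lemma:minimum-borrows-lemma-appendix}) pins the borrow to $1$ wherever $\alpha_b$ carries a $1$, the maximum borrows lemma (\Lemma{lemma:maximum-borrows-lemma-appendix}) pins it to $0$ wherever $\beta_b$ carries a $0$, and the capture-uncertainty lemma (\Lemma{lemma:capture-uncertainty-lemma-sub-appendix}) concludes that $\chi_b \triangleq \alpha_b \bitwiseXor \beta_b$ is $1$ exactly at the positions where the borrow genuinely varies across concrete subtractions. Hence the true result mask is $\chi_b \bitwiseOr P.m \bitwiseOr Q.m$, and the mask-equivalence computation established just above shows this equals the algorithm's $\eta = (\alpha \bitwiseXor \beta) \bitwiseOr P.m \bitwiseOr Q.m$, so $R.m = \eta$ is precisely the set of uncertain positions. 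For the value word, at any position with $\eta[i] = 0$ all three XOR inputs are fixed, so $r[i]$ agrees across every concrete subtraction and in particular coincides with $(P.v - Q.v)[i] = dv[i]$; this matches $R.v = dv \bitwiseAnd \bitwiseNot \eta$.

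I would then close by invoking the bitwise-exactness of $\alpha$ (\Eqn{eqn_bitwise_exact}): since $R$ marks a trit uncertain exactly when two concrete differences disagree there and fixes every certain trit to its common value, $R$ coincides with $\alpha(-(\gamma(P), \gamma(Q)))$, yielding both soundness (every concrete difference lies in $\gamma(R)$) and optimality. The main obstacle I anticipate is the disjunctive-combination argument of the first paragraph, which underlies the ``all and only'' claim for the mask: I must show that when $P.m[i] = Q.m[i] = 0$ but $\chi_b[i] = 1$, the borrow-in can be toggled through the witnesses of \Lemma{lemma:capture-uncertainty-lemma-sub-appendix} (whose differing lower-bit choices cannot disturb the certain operand bits at position $i$), so that $r[i]$ truly varies rather than the uncertainties cancelling under XOR; and conversely that when $\eta[i]=0$ all three certainty facts combine to fix $r[i]$. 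Pinning down this independence cleanly, together with the routine check that borrows propagate from less- to more-significant positions so the minimum and maximum borrow lemmas apply in the same inductive direction as their carry counterparts, is the crux; the value-bit bookkeeping and the mask algebra are then mechanical.
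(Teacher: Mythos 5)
Your proposal is correct and takes essentially the same route as the paper: the paper's own proof of this lemma is a single sentence asserting that \code{tnum_sub} ``captures all, and only, the uncertainty'' in the concrete results, deferring entirely to the minimum/maximum borrows lemmas, the capture-uncertainty lemma, and the mask-equivalence computation --- precisely the assembly you spell out. Your explicit handling of the XOR non-cancellation point (toggling the borrow-in while the certain operand bits at position $i$ stay fixed, and toggling an uncertain operand bit while the borrow-in, which depends only on lower-order bits, stays fixed) is in fact more careful than anything written in the paper, which leaves that independence argument implicit.
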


\begin{proof}
Since \code{tnum_sub} captures all, and only, the
uncertainty in the concrete results of tnum subtraction, it
is sound and optimal.    
\end{proof}
\subsection{Automated verification of tnum operations using SMT solvers}

We previously discussed our SMT encoding of the
implementation of tnum operators using theory of fixed-size
bitvectors in \Sec{automated-verification}. We use the
theory of fixed-size bitvectors, and encode the
implementation using the python bindings of the Z3 SMT
solver~\cite{z3-solver}. The size of the bitvector is a
parameter for performing bounded automated verification.
Given that the Linux kernel uses machine arithmetic with
64-bits, we use bitvectors of width $64$ wherever feasible.

\Para{Encoding abstract tnum multiplication}. The Linux
kernel's implementation for abstract multiplication of
tnums, \kernmul, is shown in \Lst{kern_mul}. Among the
kernel's abstract operators, multiplication is the most
challenging to verify automatically, because it involves a
call to a function \code{hma} that contains a loop.
Fortunately, the loop test is simple to encode (the loop
runs at most $n$ times for \nbit bitvectors), hence it is
possible to unroll the loop and rewrite the code in static
single assignment
form~\cite{compiler-implementation-appel04}. The rest of the
encoding of multiplication is straightforward. The details
follow.

Recall that we defined the following predicates: (1) the
$\memberPredicate{}$ predicate which asserts for a concrete
value $x$ and tnum $P$ that $x \in \gamma(P)$, and (2) the
$\wellformedPredicate$ predicate, which allows quantifying
over non-empty tnums $P$. 

\begin{small}
\begin{align}
\begin{split}
\label{eqn_mem_wff_appendix}  
\memberPredicate(x, P) \; \triangleq \; x \; \bitwiseAnd \; \bitwiseNot \tmi{P} = \tvi{P}\\
\wellformedPredicate(P) \triangleq \, \tvi{P} \, \bitwiseAnd \, \tmi{P} = 0
\end{split}
\end{align}
\end{small}

The soundness predicate for a given pair of abstract and
concrete operators $\opT, \opC$ is

\begin{small}
\begin{align}
\begin{split}
\label{eqn_soundness_2_appendix} 
\forall & P, Q \in \bb{T}{n}, x, y \in \bb{Z}{n} \; : 
\\ & \wellformedPredicate(P) \; \wedge \wellformedPredicate(Q) \; \wedge \memberPredicate(x, P) 
\\ & \; \wedge \memberPredicate(y, Q) \; \wedge z = \opC(x, y) \; \wedge R = \opT(P, Q) \;
\\ & \Rightarrow \memberPredicate(z, R)
\end{split}
\end{align}
\end{small}

Equation \eqref{eqn_hma_appendix} below shows the predicate
$hma$ which encodes the \code{hma} function from
\Lst{kern_mul}, with the loop shown unrolled 2 times. The
$ite$ term encodes \emph{if-then-else} and $add$ is the
predicate from \Eqn{eqn_add_predicate}. 

\begin{small}
\begin{align}
\begin{split}
\label{eqn_hma_appendix}
hma&(ACC_{in}, x_{in}, y_{in}, R) \triangleq \\
 & (ACC_{0} = ACC_{in}) \land (x_{0} = x_{in}) \land (y_{0} = y_{in}) \\
 & \land  ite(y_{0}  \, \bitwiseAnd \,  1  = 1, \; \\ 
 & \; \; \; \; \; \;  add(ACC_{0}, tnum(0, x_{0}), ACC_{1}), \; \\
 & \; \; \; \; \; \; ACC_{1} = ACC_{0}) \\
 & \land (y_{1} = y_{0} \, \bitwiseRshift \, 1) \land (x_{1} = x_{0} \, \bitwiseLshift\,  1) \\
 & \land  ite(y_{1} \,  \, \bitwiseAnd \, 1 = 1,\; \\
 & \; \; \; \; \; \;  add(ACC_{1}, tnum(0, x_{1}), ACC_{2}), \; \\
 & \; \; \; \; \; \;  ACC_{2} = ACC_{1}) \\
 & \land (y_{2} = y_{1} \, \bitwiseRshift \, 1) \land (x_{2} = x_{1} \, \bitwiseLshift\,  1) \\
 & \ldots \\
 & \land R = ACC_{64}
\end{split}
\end{align}
\end{small}

\Para{Spot-checking the correctness of our SMT encodings.}
To ensure that our encodings of the tnum operators in
first-order logic are accurate, we developed a test harness.
Consider a tnum abstract operator $\code{op} (tnum \; P,
tnum \; Q)$ with a first order logic formula $f_{op}$
encoding its action symbolically. $f_{op}$ contains
bitvector variables corresponding to values and masks for
input tnums $P$ and $Q$ and output tnum $R$. We tested our
encoding as follows. First, we produced many random input
tnum pairs ($X$, $Y$). Next, for each pair ($X$, $Y$), we
executed the C code of the kernel tnum operator to produce
an output tnum $Z$. Finally, we use Z3 ~\cite{z3-solver} and
obtain a model for the formula $P = X \wedge Q = Y \wedge
f_{op}$. The test passes only if the formula above is
satisfiable and its model interpretation of output $R$ is
the same as tnum $Z$ from the C execution.

\subsection{Comparing the precision of \ourmul versus
\kernmul with increasing bitwidth}

\Para{Setup}. We evaluate the precision of \code{our_mul}
compared to \reghermul and \kernmul with increasing
bitwidth. For a particular bitwidth, we provide as input all
possible tnum pairs at that bitwidth to both these
algorithms. We perform this experiments for bitwidths $5$
through $10$ (we stop at $n=10$ to keep running times
tractable). We then try to answer the questions: What
percentage of input pairs lead to a better precision in the
result produced by \ourmul? How does this percentage change
with increasing bitwidth?

\Para{Results.} We make the following observations (see
Table~\ref{table:muls_precision_increasing_bitwidth}). (1)
The percentage of inputs where \ourmul produces a different
output that \kernmul increases. (2) Output tnums are always
comparable for bitwidths $n = 5$ though $ n = 8$. For higher
bitwidths, although less than $100\%$ of differing output
tnums are comparable, a high fraction ($>99\%$) still remain
comparable. (3) The tnums produced by \ourmul are more
precise than \kernmul for a higher fraction of the
comparable output tnums. (4) Considering comparable output
tnums for a given input, the fraction of output tnums
produced by \ourmul which are more precise than \kernmul
increases as bitwidth increases. This trend is quite
remarkable: it goes to show that \ourmul produces
increasingly more precise output tnums than \kernmul as the
bitwidth ($n$) of the tnums increases.

\begin{center}
\captionsetup{type=figure}
\includegraphics[width=\columnwidth]{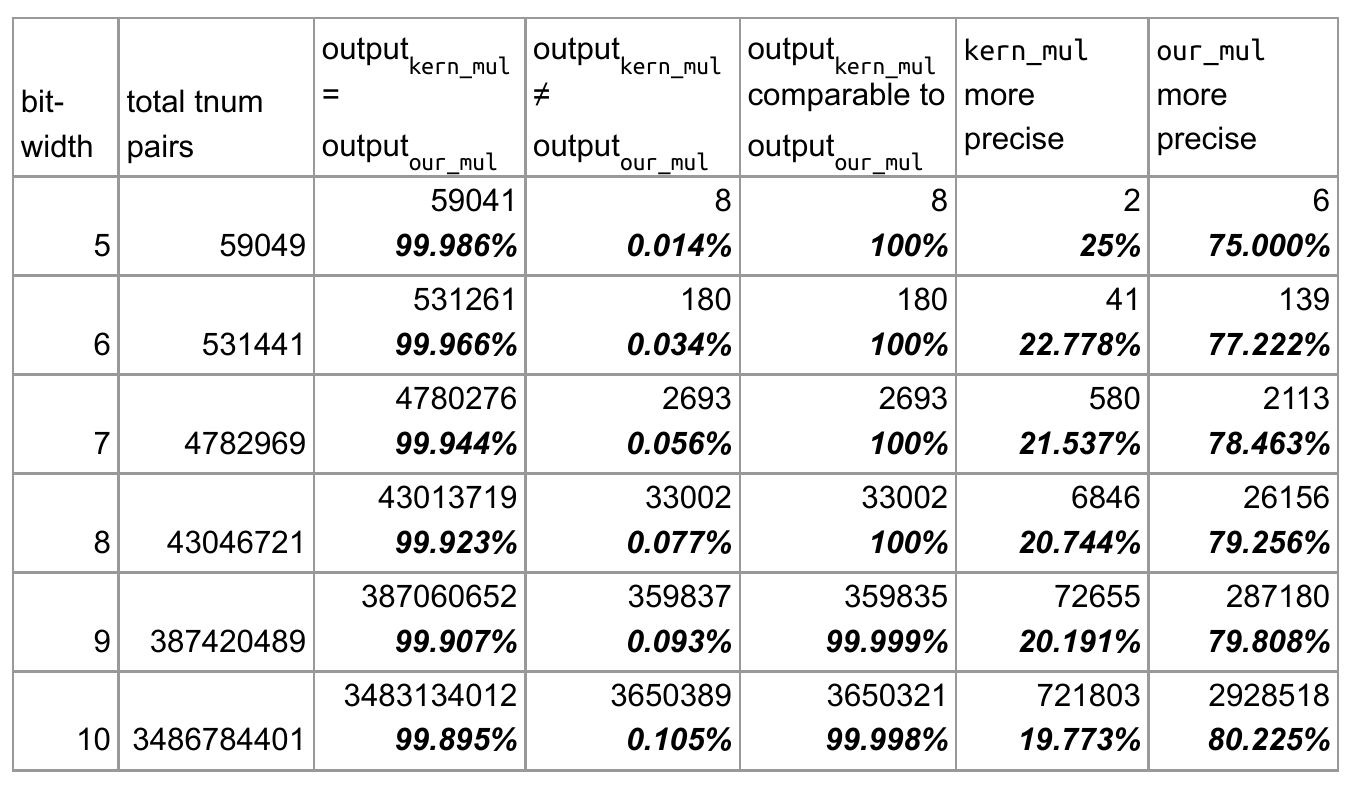}
\captionof{table}{\footnotesize{Table comparing outputs from
\ourmul and \kernmul, when they are given inputs drawn from
all possible tnum pairs of bit of bitwidth $5$ through $10$.
The table shows ($L - R$, starting from column $3$):  (i)
Percentage of inputs where outpts from both the algorithms
are exactly the same, (ii) Percentage of inputs where
outputs from both algorithms differ (iii) Of those outputs
which differ, the percentage of outputs which are
comparable, (iv) Of the outputs that differ but are still
comparable, the percentage where \kernmul is more precise
than \ourmul, (v) Of the outputs that differ but are still
comparable, the percentage where \ourmul is more precise
than \kernmul.}}
\label{table:muls_precision_increasing_bitwidth}
\end{center}

\subsection{Proof of the Galios connection} 

\begin{theorem}
\label{galios-connection-proof}
\GaliosConnectionIntegerToBitField is a Galios connection.
\end{theorem}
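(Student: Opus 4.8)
The plan is to prove the single adjunction identity
$$\alpha(C) \leqabst P \iff C \leqconc \gamma(P) \quad \text{for all } C \in 2^{\bb{Z}{n}},\ P \in \bb{Z}{n} \times \bb{Z}{n},$$
which is well known to be equivalent to the conjunction of the four defining conditions of a Galois connection (monotonicity of $\alpha$ and $\gamma$, extensivity of $\gamma \circ \alpha$, and reductivity of $\alpha \circ \gamma$). Establishing the adjunction directly is cleaner than checking the four conditions separately, and it lets me reuse the bitwise-exactness of $\alpha$ from \eqref{eqn_bitwise_exact} almost verbatim.

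First I would dispatch the degenerate cases. When $C = \varnothing$, adopting the standard convention that an empty bitwise-$\bitwiseAnd$ is the all-ones word and an empty bitwise-$\bitwiseOr$ is the all-zeros word, \eqref{eqn_alpha} gives $\tvi{\alpha(C)} = \tmi{\alpha(C)} = \mathbf{1}$, so by \eqref{eqn_bottom_wellformed} we have $\alpha(\varnothing) = \bot$; since $\bot$ is the least abstract element, $\alpha(\varnothing) \leqabst P$ holds for every $P$, matching the fact that $\varnothing \leqconc \gamma(P)$ always holds. When $P = \bot$, we have $\gamma(\bot) = \varnothing$ by \eqref{eqn_gamma}, so the right side forces $C = \varnothing$; I would then note that a nonempty $C$ yields a well-formed (hence non-$\bot$) $\alpha(C)$, because $\alpha_{\bitwiseAnd}(C) \leq \alpha_{\bitwiseOr}(C)$ bitwise gives $\tvi{\alpha(C)} \bitwiseAnd \tmi{\alpha(C)} = 0$, so $\alpha(C) \leqabst \bot$ also forces $C = \varnothing$.

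The core of the argument is the case where $C$ is nonempty and $P$ is well-formed, which I would reduce to a per-position check. The membership condition defining $\gamma$ in \eqref{eqn_gamma} is a conjunction of independent per-bit constraints, so $C \leqconc \gamma(P)$ is equivalent to the statement that for every position $k$ and every $c \in C$, $\ithbitx{c}{k} \bitwiseAnd \bitwiseNot \ithbitx{\tmi{P}}{k} = \ithbitx{\tvi{P}}{k}$. Likewise the abstract order \eqref{eqn_tnum_ordering_relation} is a conjunction over positions $k$. Hence it suffices to prove, for each fixed $k$, the equivalence of the per-position membership constraint and the per-position order constraint, and I would then split on the trit $\ithbitx{P}{k} \in \{0, 1, \mu\}$. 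When $\ithbitx{P}{k} = b$ is known, both sides collapse to ``$\ithbitx{c}{k} = b$ for all $c \in C$,'' which by the bitwise-exactness of $\alpha$ (\eqref{eqn_bitwise_exact}) is exactly $\ithbitx{\alpha(C)}{k} = b$. When $\ithbitx{P}{k} = \mu$, the membership constraint reads $\ithbitx{c}{k} \bitwiseAnd 0 = 0$, which is vacuously true, and the order constraint at an uncertain target trit is likewise vacuous. Assembling these per-position equivalences over all $k$ yields the adjunction.

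I expect the only delicate step to be the bookkeeping at the interface between the product structure of $\gamma$ and the set relation $\subseteq$: the subset relation is not per-bit on its face, so the quantifier swap (``for all $c$, for all $k$'' becomes ``for all $k$, for all $c$'') together with the nonemptiness hypothesis needed to invoke \emph{both} directions of bitwise-exactness in \eqref{eqn_bitwise_exact} is exactly where a careless proof would slip. Once the per-position reduction is justified, the three-way case split is a routine finite verification.
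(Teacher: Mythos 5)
Your proof is correct, but it takes a genuinely different route from the paper's. The paper verifies the four defining conditions separately (its Properties G1--G4: monotonicity of $\alpha$, monotonicity of $\gamma$, extensivity of $\gamma \circ \alpha$, and reductivity of $\alpha \circ \gamma$), each by its own per-trit case analysis; you instead establish the single adjunction $\alpha(C) \leqabst P \iff C \leqconc \gamma(P)$ and invoke the standard order-theoretic fact that this characterization is equivalent to the four-condition definition. Your route is shorter and more uniform: the per-position reduction plus the three-way split on $\ithbitx{P}{k}$ does all the work once, reusing the bitwise exactness of $\alpha$ from \eqref{eqn_bitwise_exact}, and you are more careful than the paper about the degenerate cases ($C = \varnothing$ and $P = \bot$), which the paper's proof silently ignores (its case analyses only make sense for nonempty sets and well-formed tnums). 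Two caveats. First, since the paper \emph{defines} a Galois connection by the four conditions, your appeal to the equivalence with the adjunction is a real external dependency; it is classical and easy to discharge (instantiate the adjunction at $P = \alpha(C)$ and at $C = \gamma(P)$ to get extensivity and reductivity, then combine with transitivity to get both monotonicities), but a self-contained write-up should include that short derivation or a citation. Second, your degenerate cases rest on conventions the paper never states explicitly, namely that $\bot$ is the least element of $(\bb{T}{n}, \leqabst)$ (the trit-wise order \eqref{eqn_tnum_ordering_relation} does not speak about $\bot$ at all) and that an empty bitwise $\bitwiseAnd$ is all-ones while an empty $\bitwiseOr$ is all-zeros; these should be declared. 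Finally, note one thing the paper's longer route buys that yours does not: its Property G4 proves the stronger statement $\alpha(\gamma(T)) = T$, i.e.\ reductivity holds with equality (a Galois insertion on well-formed tnums), whereas the adjunction argument alone yields only the inequality.
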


We prove that our abstraction function $\alpha:
2^{\mathbb{Z}_n} \rightarrow \mathbb{Z}_n \times
\mathbb{Z}_n$, and the concretization function $\gamma:
\mathbb{Z}_n \times \mathbb{Z}_n \rightarrow
2^{\mathbb{Z}_n}$ form a Galois connection by proving the
following properties:
\begin{enumerate}
\item $\gamma$ is monotonic
\item $\alpha$ is monotonic
\item $\gamma \circ \alpha$ is extensive
\item $\alpha \circ \gamma$ is reductive
\end{enumerate}

\proofheading{Property \textnormal{G1}. $\alpha$ is monotonic, 
\ie $\forall C_{1}, C_{2} \in \Zpowerset, 
C_{1} \leqconc C_{2} \rightarrow \alpha(C_{1}) \leqabst \alpha(C_{2})
$}. 
\begin{proof}
We have $C_{1} \subseteq C_{2}$. Hence, by the definition of
$\alpha$ in \Eqn{eqn_alpha}, the following properties hold:

\begin{align}
\begin{split}
\label{eqn_alpha_c_props}
\alpha_{\bitwiseAnd}(C_{2}) = \alpha_{\bitwiseAnd}(C_{1})
\; \bitwiseAnd \; \alpha_{\bitwiseAnd}(C_{2} - C{1})  \\
\alpha_{\bitwiseOr}(C_{2}) =
\alpha_{\bitwiseOr}(C_{1}) \; \bitwiseOr \;
\alpha_{\bitwiseOr}(C_{2} - C{1}) \\
\forall k \in {0, 1}: \alpha_{\bitwiseOr}(C_{2)} = k 
\leftrightarrow \alpha_{\bitwiseOr}(C_{1)} = k \\
\forall k \in {0, 1}: \alpha_{\bitwiseAnd}(C_{2)} = k 
\leftrightarrow \alpha_{\bitwiseAnd}(C_{1)} = k
\end{split}
\end{align}
Here, $C_{2} - C_{1}$ is the set difference, which includes
all the elements in $C_{2}$ that are not in $C_{1}$. Now,
For a concrete set $C$ and for some $c \in C$ we use the
notation $c[i]$ to denote the \ith{i} bit of $c$. Similarly,
for a tnum $T$, we use the notation $T[i]$ to denote the
\ith{i} trit of $T$. 

We have to prove that $\alpha(C_{1}) \leqabst
\alpha(C_{2})$.  By the definition of the partial order of
the abstract domain, this means: 

\begin{align}
\begin{split}
\label{eqn_gamma_monotonic_to_prove}
\forall i, 0 \leq i \leq n-1, \forall k
\in \{0, 1\}: \\ 
(\ithbitx{\alpha(C_{1})}{i} = \mu \Rightarrow
\ithbitx{\alpha(C_{2})}{i} = \mu) \wedge \\
(\ithbitx{\alpha(C_{2})}{i} = k \Rightarrow 
\ithbitx{\alpha(C_{1})}{i} = k). 
\end{split}
\end{align}

Take a particular bit position $i$. We consider each clause
of the conjunction one after the other.

\begin{enumerate}
\item $\alpha(C_{1})[i] = \mu = tnum(0, 1)$. This implies,
that $\alpha_{\bitwiseAnd}(C_{1}[i] = 0$ and
$\alpha_{\bitwiseOr}(C_{1}[i] = 1$. It follows from
\Eqn{eqn_alpha_c_props} that $\alpha_{\bitwiseAnd}(C_{2}[i]
= 0$ and $\alpha_{\bitwiseOr}(C_{2}[i] = 1$. Hence
$\alpha(C_{2}[i]) = tnum(0, 1) = \mu$. Hence,
$\alpha(C_{1}[i] = \mu \rightarrow \alpha(C_{2}[i] = \mu$.
\item We consider two cases:
\begin{enumerate}
\item $\alpha(C_{2})[i] = 0 = tnum(0, 0)$. This implies,
that $\alpha_{\bitwiseAnd}(C_{2}[i] = 0$ and
$\alpha_{\bitwiseOr}(C_{2}[i] = 0$. It follows from
\Eqn{eqn_alpha_c_props} that $\alpha_{\bitwiseAnd}(C_{1}[i]
= 0$ and $\alpha_{\bitwiseOr}(C_{2}[i] = 0$. Hence
$\alpha(C_{1}[i]) = tnum(0, 0) = 0$. Hence,
$\alpha(C_{2}[i] = 0 \rightarrow \alpha(C_{1}[i] = 0$.    
\item $\alpha(C_{2})[i] = 1 = tnum(1, 0)$. This implies,
that $\alpha_{\bitwiseAnd}(C_{2}[i] = 1$ and
$\alpha_{\bitwiseOr}(C_{2}[i] = 1$. It follows from
\Eqn{eqn_alpha_c_props} that $\alpha_{\bitwiseAnd}(C_{1}[i]
= 1$ and $\alpha_{\bitwiseOr}(C_{2}[i] = 1$. Hence
$\alpha(C_{1}[i]) = tnum(1, 0) = 0$. Hence,
$\alpha(C_{2}[i] = 1 \rightarrow \alpha(C_{1}[i] = 1$.
\end{enumerate}
From both the above cases, we get: $\forall k \in {0, 1}, 
\alpha(C_{2}[i] = k \rightarrow \alpha(C_{1}[i] = k$.
\end{enumerate}
Thus we see that, for a particular $i$, $\forall k \in \{0,
1\} \alpha(C_{2}[i] = k \rightarrow \alpha(C_{1}[i] = k
\wedge \alpha(C_{1}[i] = \mu \rightarrow \alpha(C_{2}[i] =
\mu$. Without loss of generality this holds for all bit
positions $0 < i < n-1$. Hence
\Eqn{eqn_gamma_monotonic_to_prove} holds.
\end{proof}

\proofheading{Property \textnormal{G2}. $\gamma$ is
monotonic, \ie $\forall T_{1}, T_{2} \in \bb{T}{n}, T_{1}
\leqabst T_{2} \rightarrow \gamma(T_{1}) \leqconc
\gamma(C_{2}) $}. 

\begin{proof}
From the definition of the partial order on the abstract
domain (\Eqn{eqn_tnum_ordering_relation} in the main text),
we have for two tnums $T_{1}$ and $T_{2}$.

\begin{align}
\label{eqn_tnum_ordering_relation_appendix}
\begin{split}
T_{1} \leqabst T{2} \; \triangleq & \; \forall i, 0 \leq i 
\leq n-1, \forall k \in \{0, 1\}: \\ 
& (\ithbitx{T_{1}}{i} = \mu \Rightarrow \ithbitx{T_{2}}{i} = \mu)  \\
& \wedge (\ithbitx{T_{1}}{i} = k \Rightarrow \ithbitx{T_{2}}{i} = k \,\veebar\, \ithbitx{T_{2}}{i} = \mu) \\
& \wedge (\ithbitx{T_{2}}{i} = k \Rightarrow \ithbitx{T_{1}}{i} = k) 
\end{split}
\end{align}

We will prove $\gamma(T_{1}) \subseteq \gamma(T_{2})$ by
showing that $\forall x: x \in \gamma(T_{1}) \rightarrow x
\in \gamma(T_{2})$. Consider a particular $x \in
\gamma(T_{1})$. By the defintion of $\gamma$, we have, $x
\;\bitwiseAnd \; \bitwiseNot \;T_{1}.m = T_{1}.v$. Or
alternatively, we have, where $x[i]$ represents the \ith{i}
bit of a value: $\forall i : x[i] \;\bitwiseAnd \;
\bitwiseNot \;T_{1}.m[i] = T_{1}.v[i]$

We consider the following 3 cases: 
\begin{enumerate}

\item $T_{1}.v[i] = 0$, and $T_{1}.m[i] = 1$, \ie ($T_{1}[i]
= \mu$). From \Eqn{eqn_tnum_ordering_relation_appendix} this
implies that $T_{2}.v[i] = 0$, and $T_{2}.m[i] = 1$. Hence
every $x[i]$ that satisfies  $x[i] \;\bitwiseAnd \;
\bitwiseNot \;T_{1}.m[i] = T_{1}.v[i]$ also satisfies $x[i]
\;\bitwiseAnd \; \bitwiseNot \;T_{2}.m[i] = T_{2}.v[i]$
\item $T_{1}.v[i] = 0$, and $T_{1}.m[i] = 0$, \ie ($T_{1}[i]
= 0$). By \Eqn{eqn_tnum_ordering_relation_appendix}, we  
can have the following cases:
\begin{enumerate}
\item $(T_{2}[i] = 0)$ \ie  $T_{2}.v[i] = 0$, and
$T_{2}.m[i] = 0$. Here $T_{2}.v[i] = T_{1}.v[i]$, and
$T_{2}.m[i] = T_{1}.m[i]$.  Hence every $x[i]$ that
satisfies  $x[i] \;\bitwiseAnd \; \bitwiseNot \;T_{1}.m[i] =
T_{1}.v[i]$ also satisfies $x[i] \;\bitwiseAnd \;
\bitwiseNot \;T_{2}.m[i] = T_{2}.v[i]$. 
\item $(T_{2}[i] = \mu)$ \ie  $T_{2}.v[i] = 0$, and
$T_{2}.m[i] = 1$. The $x[i]$ that satisfies $x[i]
\;\bitwiseAnd \; \bitwiseNot \;T_{1}.m[i] = T_{1}.v[i]$ is
only $0$, but the $x[i]$ that satisfies $x[i] \;\bitwiseAnd
\; \bitwiseNot \;T_{2}.m[i] = T_{2}.v[i]$ are $\{0, 1\}$.
Again, every $x[i]$ that satisfies  $x[i] \;\bitwiseAnd \;
\bitwiseNot \;T_{1}.m[i] = T_{1}.v[i]$ also satisfies $x[i]
\;\bitwiseAnd \; \bitwiseNot \;T_{2}.m[i] = T_{2}.v[i]$.
\end{enumerate}
\item $T_{1}.v[i] = 1$, and $T_{1}.m[i] = 1$, \ie ($T_{1}[i]
= 1$). This case is similar to the above case. Again, we
have that every $x[i]$ that satisfies  $x[i] \;\bitwiseAnd
\; \bitwiseNot \;T_{1}.m[i] = T_{1}.v[i]$ also satisfies
$x[i] \;\bitwiseAnd \; \bitwiseNot \;T_{2}.m[i] =
T_{2}.v[i]$.
\end{enumerate}

From the above case analysis, we have that for a particular
bit position $i$, and $x \in \gamma(T{_1})$, $x[i]
\;\bitwiseAnd \; \bitwiseNot \;T_{1}.m[i] = T_{1}.v[i]
\rightarrow x[i] \;\bitwiseAnd \; \bitwiseNot \;T_{2}.m[i] =
T_{2}.v[i]$. Since we are dealing with only bitwise
operations, we can say that this property holds for all bit
positions $i$. Hence, for some $x \in \gamma(T{_1})$, $x
\;\bitwiseAnd \; \bitwiseNot \;T_{1}.m = T_{1}.v \rightarrow
x \;\bitwiseAnd \; \bitwiseNot \;T_{2}.m = T_{2}.v$. Without
loss of generality, this holds for all $x \in
\gamma(T_{1})$. By the definition of $\gamma$, we have
$\forall x: x \in \gamma(T_{1}) \rightarrow x \in
\gamma(T_{2})$. This proves the result. 

\end{proof}

\proofheading{Property \textnormal{G3}. $\gamma \circ
\alpha$ is extensive, \ie $C \in \Zpowerset: C \leqconc
\gamma(\alpha(C))$}. 

\begin{proof}
Consider a particular $C \in \Zpowerset$. This set is
constructed by elements drawn from $\bb{Z}{n}$, \ie $C : \{c
\mid c \in \bb{Z}{n}\}$. Let $T = \alpha(C)$. To show
extensivity, we have to show that: $\forall c \in C: c \in
\gamma(T)$. 

Now, by the definition of $\gamma$ in \Eqn{eqn_gamma}, we
have: $c \in \gamma(T) \Leftrightarrow \forall i, c[k]
\;\bitwiseAnd\; \;\bitwiseNot(T.m)[i] = T.v[i]$.

We consider the following cases:
\begin{enumerate}
\item $\forall c_{a}, c_{b} \in C, k \in \{0, 1\}: c_{a}[i]
= c_{b}[i] = k$. This means that all the elements in the set
$C$ share the same value at bit position $i$. This gives 
use two cases:
\begin{enumerate}
\item $\forall c_{a}, c_{b} \in C: c_{a}[i] = c_{b}[i] = 0$
Consider the tnum $T=\alpha(C)$. By the definition of $\alpha$
(\Eqn{eqn_alpha}), $T.v[i] = 0$, $T.m[i] = 0$. Thus it holds
that  $\forall c \in C, c[i] \;\bitwiseAnd\; \;\bitwiseNot
T.m[i] = T.v[i]$,, since $0 \;\bitwiseAnd\; \;\bitwiseNot 0
= 0$.
\item $\forall c_{a}, c_{b} \in C: c_{a}[i] = c_{b}[i] = 1$
Consider the tnum $T=\alpha(C)$. By the definition of $\alpha$
(\Eqn{eqn_alpha}), $T.v[i] = 1$, $T.m[i] = 0$. Thus it holds
that  $\forall c \in C, c[i] \;\bitwiseAnd\; \;\bitwiseNot
T.m[i] = T.v[i]$, since $1 \;\bitwiseAnd\; \;\bitwiseNot
0 = 1$.
\end{enumerate}
\item  $\exists c_{a}, c_{b} \in C, k \in \{0, 1\}: c_{a}[i]
\neq c_{b}[i] = k$. This means that there are two elements
drawn from the set $C$ that do not share the same value at
bit position $i$. Consider the tnum $T=\alpha(C)$. From the
definition of $\alpha$ (\Eqn{eqn_alpha}), $T.v[i] = 0$, and
$T.m[i] = 1$. Regardless of whether $c[i] = 1$ or $c[i] =
0$, it satisfies holds that $\forall c \in C, c[i]
\;\bitwiseAnd\; \;\bitwiseNot T.m[i] = T.v[i]$, since both
$1 \;\bitwiseAnd\; \;\bitwiseNot 1 = 0$, and $0
\;\bitwiseAnd\; \;\bitwiseNot 1 = 0$.
\end{enumerate}

Since both $\alpha$ and $\gamma$ are bitwise exact, we can
say that for a particular $c \in C$, $\forall i: c[i]
\;\bitwiseAnd\; \;\bitwiseNot T.m[i] = T.v[i]$. This implies 
$ c \;\bitwiseAnd\; \;\bitwiseNot T.m = T.v$. Without loss
of generality, we can say that  $\forall c \in C$, $c
\;\bitwiseAnd\; \;\bitwiseNot T.m = T.v$. This gives us 
the result:  $\forall c \in C: c \in \gamma(\alpha(C))$.
\end{proof}

\proofheading{Property \textnormal{G4}. $\alpha \; \circ\;
\gamma$ is reductive, \ie $T \in \bb{T}{n}:
\alpha(\gamma(T)) \leqabst T$}.

\begin{proof}
Consider a particular $T \in \bb{T}{n}$. Let $C =
\gamma(T)$. If $C : {c_{1}, c_{2}, \ldots c_{n}}$.  We know
from the definition of $\alpha$ that $\alpha(C).v = c_{1}
\;\bitwiseAnd\; c_{2} \;\bitwiseAnd\; \ldots c_{n}$, and
$\alpha(C).m = (c_{1} \;\bitwiseAnd\; c_{2} \;\bitwiseAnd\;
\ldots c_{n}) \;\bitwiseXor \; (c_{1} \;\bitwiseOr\; c_{2}
\;\bitwiseOr\; \ldots c_{n})$. 

To show reductivity, we will show that $\forall T,
\alpha(\gamma(T)) == T$., \ie $\forall T, \alpha(C) == T$,
where $C = \gamma(T)$. In other words we will show that:
$\forall i: \alpha(C).v[i] == T.v[i]$ and that $\forall i:
\alpha(C).m[i] == T.m[i]$. We consider the following 3
possible cases:

\begin{enumerate}
\item $T[i] = \mu$, \ie $T.v[i] ==1, T.m[i] == 0$. Now, $C =
\gamma(T)$. From the definition of $\gamma$, $\forall c \in
C: c[i] = 1$. By the definition of $\alpha$, $\alpha(C).v[i]
= 1$, and $\alpha(C).m[i] = 0$. Hence, we see that
$\alpha(C).v[i] = T.v[i]$ and $\alpha(C).m[i] = T.m[i]$
\item $T[i] = 0$, \ie $T.v[i] ==0, T.m[i] == 0$. Now, $C =
\gamma(T)$. From the definition of $\gamma$, $\forall c \in
C: c[i] = 0$. By the definition of $\alpha$, $\alpha(C).v[i]
= 0$, and $\alpha(C).m[i] = 0$. Hence, we see that
$\alpha(C).v[i] = T.v[i]$ and $\alpha(C).m[i] = T.m[i]$
\item $T[i] = 1$, \ie $T.v[i] ==1, T.m[i] == 0$. Now, $C =
\gamma(T)$. From the definition of $\gamma$, $\forall c \in
C: c[i] = 1$. By the definition of $\alpha$, $\alpha(C).v[i]
= 1$, and $\alpha(C).m[i] = 0$. Hence, we see that
$\alpha(C).v[i] = T.v[i]$ and $\alpha(C).m[i] = T.m[i]$
\end{enumerate}

The above property holds for all bit positions $i$. Hence,
for a given tnum $T$ and $C = \gamma(T)$, we can say that
$\forall i: \alpha(C).v[i] = T.v[i] \;\wedge\;
\alpha(C).m[i] = T.m[i]$. Without loss of generality, 
we can say that $\forall T: \alpha(\gamma(T)) = T$, \ie
$\forall T: \alpha(\gamma(T)) \leqabst T$
\end{proof}

\end{document}